\tikzset{
modal/.style={>=stealth’,shorten >=1pt,shorten <=1pt,auto,node distance=1.5cm,
semithick},
world/.style={circle,draw,minimum size=0.5cm,fill=gray!15},
point/.style={circle,draw,inner sep=0.5mm,fill=black},
reflexive above/.style={->,loop,looseness=7,in=120,out=60},
reflexive below/.style={->,loop,looseness=20,in=240,out=0},
reflexive left/.style={->,loop,looseness=7,in=150,out=210},
reflexive right/.style={->,loop,looseness=7,in=30,out=330}}
\newtheorem{thm}{Theorem}
\newtheorem{theorem}[thm]{Theorem}
\newtheorem{corollary}[thm]{Corollary}
\newtheorem{lemma}[thm]{Lemma}
\newtheorem{proposition}[thm]{Proposition}
\newtheorem{definition}[thm]{Definition}
\newtheorem{example}[thm]{Example}
\newcommand{\cA}{\mathcal{A}}
\newcommand{\cB}{\mathcal{B}}
\newcommand{\cL}{\mathcal{L}}
\newcommand{\cN}{\mathcal{N}}
\newcommand{\cS}{\mathcal{S}}
\newcommand{\cT}{\mathcal{T}}
\newcommand{\cX}{\mathcal{X}}
\newcommand{\aK}{\Box}
\newcommand{\fC}{\mathfrak{C}}
\newcommand{\Intr}[1]{\mathit{Int} #1 }
\newcommand{\Cl}[1]{\mathit{Cl} #1 }
\newcommand{\Model}{(X, \cT, v)}
\newcommand{\Lang}{\cL}
\newcommand{\LangB}{\cL_B}
\newcommand{\LangKK}{\cL_{K,\aK}}
\newcommand{\LangKKB}{\cL_{K,\Box,B}}
\newcommand{\LogKK}{\mathsf{EL}_{K,\Box}}
\newcommand{\LogB}{\mathsf{SEL}_{K,\Box,B}}
\newcommand{\wLogB}{\mathsf{EL}_{K,\Box,B}}
\newcommand{\imp}{\rightarrow}
\newcommand{\proves}{\vdash}
\newcommand{\T}{\top}
\newcommand{\M}{\hat{K}}
\newcommand{\aM}{\Diamond}
\newcommand{\MB}{\hat{B}}
\renewcommand{\phi}{\varphi}
\newcommand{\br}[1]{[\![ #1]\!]}
\newcommand{\mysetminusD}{\hbox{\tikz{\draw[line width=0.6pt,line cap=round] (3pt,0) -- (0,6pt);}}}
\newcommand{\mysetminusT}{\mysetminusD}
\newcommand{\mysetminusS}{\hbox{\tikz{\draw[line width=0.45pt,line cap=round] (2pt,0) -- (0,4pt);}}}
\newcommand{\mysetminusSS}{\hbox{\tikz{\draw[line width=0.4pt,line cap=round] (1.5pt,0) -- (0,3pt);}}}
\newcommand{\mysetminus}{\mathbin{\mathchoice{\mysetminusD}{\mysetminusT}{\mysetminusS}{\mysetminusSS}}}
\newcommand{\defin}[1]{\textbf{#1}}
\newcommand{\lthen}{\rightarrow}
\newcommand{\liff}{\leftrightarrow}
\newcommand{\falsum}{\bot}
\newcommand{\val}[1]{[\![ #1 ]\!]}
\newcommand{\sval}[1]{\| #1 \|}
\newcommand{\aval}[1]{[\kern-0.25em( #1 )\kern-0.25em]}
\renewcommand{\phi}{\varphi}
\newcommand{\rimp}{\Rightarrow}
\newcommand{\commentout}[1]{}
\renewcommand{\S}{\mathcal{S}}
\newcommand{\X}{\mathcal{X}}
\renewcommand{\L}{\mathcal{L}}
\renewcommand{\int}{\mathit{int}}
\newcommand{\cl}{\mathit{cl}}
\newcommand{\down}{{\downarrow}}
\newcommand{\amods}{\mathrel{\kern.2em|\kern-0.2em{\approx}\kern.2em}}
\newcommand{\notamods}{\mathrel{\kern.2em|\kern-0.2em{\not\approx}\kern.2em}}
\newcommand{\fullv}[1]{}
\newcommand{\shortv}[1]{#1}
\newcommand{\ayComment}[1]{}
\newcommand{\aybuke}[1]{{\color{magenta}{Aybuke: #1}}}
\newcommand{\draft}[1]{{\color{red}[\textsc{#1}]}}
\title{Logic and Topology for Knowledge, Knowability, and Belief}
\author{Adam Bjorndahl and Ayb{\"u}ke {\"O}zg{\"u}n}
\date{}
\begin{document}

\maketitle

\begin{abstract}
In recent work, Stalnaker proposes a logical framework in which belief is realized as a weakened form of knowledge \cite{StalnakerDB}.
Building on Stalnaker's core insights,
and using frameworks developed in \cite{bjorndahl} and \cite{wollicpaper},
we employ \emph{topological} tools to refine and, we argue, improve on this analysis.
The structure of topological subset spaces allows for a natural distinction between what is \emph{known} and (roughly speaking) what is \emph{knowable}; we argue that the foundational axioms of Stalnaker's system rely intuitively on \textit{both} of these notions.
More precisely, we argue that the plausibility of the principles Stalnaker proposes relating knowledge and belief relies on a subtle equivocation between an ``evidence-in-hand'' conception of knowledge and a weaker ``evidence-out-there'' notion of what \textit{could come to be known}.
Our analysis leads to a trimodal logic of knowledge, knowability, and belief interpreted in topological subset spaces in which belief is definable in terms of \textit{knowledge and knowability}. We provide a sound and complete axiomatization for this logic as well as its uni-modal belief fragment. We then consider weaker logics that preserve suitable translations of Stalnaker's postulates, yet do not allow for any reduction of belief. We propose novel topological semantics for these irreducible notions of belief, generalizing our previous semantics, and provide sound and complete axiomatizations for the corresponding logics.
\end{abstract}
\commentout{
\begin{abstract}
In this work, we study topological  subset space semantics for knowledge, knowability and (evidence-based) belief. We start our investigation by carefully revising Stalnaker's bi-modal epistemic-doxastic system  \cite{StalnakerDB} by  distinguishing what is \emph{known} from what is \emph{knowable} based on the evidence possessed by the agent. Our analysis of his system leads to a trimodal logic of knowledge, knowability and belief where belief is defined in terms of the aforementioned epistemic notions. We propose topological semantics for this system in the style of subset space semantics and provide soundness and completeness results for the corresponding logic and its uni-modal belief fragment. Our definition of belief within this system depends on some arguable principles that do not occur in Stalnaker's original logic. We therefore generalize this setting to study weaker logics that omit these additional principles. We propose more general and novel topological semantics for weaker notions of belief that are not definable in terms of knowledge and knowability, and provide sound and complete axiomatizations for the corresponding logics.
\end{abstract}
}

\section{Introduction}

Epistemology has long been concerned with the relationship between knowledge and belief. There is a long tradition of attempting to strengthen the latter to attain a satisfactory notion of the former: belief might be improved to true belief, to ``justified'' true belief, to ``correctly justified'' true belief \cite{Clark}, to ``undefeated justified'' true belief \cite{lehrer,lehrerpaxson,klein,klein2}, and so on (see, e.g, \cite{sep-knowledge-analysis,rott} for a survey).
There has also been some interest in reversing this project---deriving belief from knowledge---or, at least, putting ``knowledge first'' \cite{Williamson}.
In this spirit, Stalnaker has proposed
a framework in which belief is realized as a weakened form of knowledge \cite{StalnakerDB}. More precisely, beginning with a logical system in which both belief and knowledge are represented as primitives, Stalnaker formalizes some natural-seeming relationships between the two, and proves on the basis of these relationships that belief can be \textit{defined} out of knowledge.

This project is of both conceptual and technical interest. Philosophically speaking, it provides a new perspective from which to investigate knowledge, belief, and their interplay. Mathematically, it offers a potential route by which to represent belief in formal systems that are designed to handle only knowledge. Both these themes underlie the present work. Building on Stalnaker's core insights, we employ \emph{topological} tools to refine and, we argue, improve on Stalnaker's original system.


Our work brings together two distinct lines of research.
Stalnaker's epistemic-doxastic axioms have motivated and inspired several prior topological proposals for the semantics of belief \cite{Ozgun13, loripaper, BBOSTbiLLC,wollicpaper}, including most recently and most notably a proposal by Baltag et al.~\cite{wollicpaper} that is essentially recapitulated in our strongest logic for belief (Section \ref{sec:rvs}). Our development of this logic, however (as well as the new, weaker logics we study in Section \ref{sec:wea}), relies crucially on a semantic framework defined in recent work by Bjorndahl \cite{bjorndahl} that distinguishes what is \emph{known} from (roughly speaking) what is \emph{knowable}.

\commentout{
Prior to the current work, Stalnaker's epistemic-doxastic logic has motivated and inspired several proposals of topological semantics for belief topological proposals for the semantics of belief \cite{Ozgun13, loripaper, BBOSTbiLLC, wollicpaper}.  Most recently, Baltag et al.~\cite{wollicpaper} proposed a topological semantics for knowledge and belief using dense open sets, meant to formalise Stalnaker's original system (presented in Section \ref{sec:stl}), in an evidential setting.  In this paper, we adopt their belief semantics based on a new modal language and semantic framework, which is an extension of Bjorndahl's logic of knowledge and knowability \cite{bjorndahl}. To this end, our current project is strongly related to both Bjorndahl's epistemic setting based on topological subset spaces \cite{bjorndahl} and the topological belief semantics of Baltag et al \cite{wollicpaper}.  We therefore discuss their role in developing the current setting in somewhat greater detail.

In recent work, Bjorndahl employs topological subset space models to provide a superior semantics for public announcements, making crucial use of the fact that these models
allow for a natural distinction between what is \emph{known} and (roughly speaking) what is \emph{knowable} \cite{bjorndahl}.
In this paper,
}

We argue that the foundational axioms of Stalnaker's system rely intuitively on \textit{both} of these notions at various points.
More precisely, we argue that the plausibility of the principles Stalnaker proposes relating knowledge and belief relies on a subtle equivocation between an ``evidence-in-hand'' conception of knowledge and a weaker ``evidence-out-there'' notion of what \textit{could come to be known}.
As such, we find it quite natural to study Stalnaker's principles in the richer semantic setting developed in \cite{bjorndahl},
which is based on \emph{topological subset spaces}, a class of epistemic models of growing interest in recent years \cite{moss92,dabrowski,bjorndahl,eumas,HvD-TARK15}. These models support a careful reworking of Stalnaker's system in a manner that respects the distinction described above,
yielding
a trimodal logic of knowledge, knowability, and belief that is our main object of study.

Subset spaces have been employed in the representation of a variety of epistemic notions, including knowledge, learning, and public announcement (see, e.g., \cite{moss92,heinemann08,baskent12,baskent11,HvD-SSL,wang13,agotnes13,heinemann10}),
but to the best of our knowledge this paper contains the first formalization of \textit{belief} in subset space semantics.
Stalnaker's original system is an extension of the basic logic of knowledge $\mathsf{S4}$;
belief emerges as a standard $\mathsf{KD45}$ modality, as it is often assumed to be, while knowledge turns out to satisfy the stronger but somewhat less common $\mathsf{S4.2}$ axioms.
Our system, by contrast, is an extension of the basic \textit{bimodal} logic of knowledge-and-knowability introduced in \cite{bjorndahl};
belief is similarly $\mathsf{KD45}$, while knowledge is $\mathsf{S5}$ and knowability is $\mathsf{S4}$; thus, our approach preserves what are arguably the desirable properties of belief while cleanly dividing ``knowledge'' into two conceptually distinct and familiar logical constructs.

In Stalnaker's system, belief can be defined in terms of knowledge; in our system, we prove that belief can be defined in terms of \textit{knowledge and knowability} (Proposition \ref{pro:eqv}).
This yields a purely topological interpretation of belief that coincides with that previously proposed by Baltag et al.~\cite{wollicpaper}:
roughly speaking, while knowledge is interpreted (as usual) as ``truth in all possible alternatives'', belief becomes ``truth in \textit{most} possible alternatives'', with the meaning of ``most'' cashed out topologically.
The conceptual underpinning of this interpretation of belief as developed by Baltag et al., and its connection to the present work, is discussed further in Section \ref{sec:dis}.


In this richer topological setting, the translation of Stalnaker's postulates do not in themselves entail that belief is reducible to knowledge (or even knowledge-and-knowability): our characterization of belief in these terms relies on two additional principles we call ``weak factivity'' and ``confident belief''. This motivates the study of weaker logical systems obtained by rejecting one or both of these principles. We initiate the investigation of these systems by proposing novel topological semantics that aim to capture the corresponding, irreducible notions of belief.

This rest of the paper is organized as follows.
In Section \ref{sec:kkb} we present Stalnaker's original system, motivate our objections to it, and introduce the formal logical framework that supports our revision. In Section \ref{sec:rvs} we present our revised system, explore its relationship to Stalnaker's system, and prove an analogue to Stalnaker's characterization result: belief can be defined out of knowledge \textit{and knowability}.%
\shortv{
We also establish that our system is sound and complete with respect to the class of topological subset models, and that the pure logic of belief it embeds is axiomatized by the standard $\mathsf{KD45}$ system.
}%
\fullv{
In Section \ref{sec:sac} we show that our system is sound and complete with respect to the class of topological subset models; we also establish that the pure logic of belief it embeds is axiomatized by the standard $\mathsf{KD45}$ system.
}%
In Section \ref{sec:wea} we investigate weaker logics as discussed above and develop the semantic tools needed to interpret belief in this more general context; we also provide soundness and completeness results for each of these logics. Section \ref{sec:dis} concludes.%
\shortv{
Due to length restrictions, several of the longer proofs are omitted from the main body; we include them for reference in Appendix \ref{app:prf}.
}

\section{Knowledge, Knowability, and Belief} \label{sec:kkb}

Given unary modalities $\star_{1}, \ldots, \star_{k}$, let $\L_{\star_{1}, \ldots, \star_{k}}$ denote the propositional language recursively generated by
$$
\phi ::= p \, | \, \lnot \phi \, | \, \phi \land \psi \, | \, \star_{i} \phi,
$$
where $p \in \textsc{prop}$, the (countable) set of \emph{primitive propositions}, and $1 \leq i \leq k$. Our focus in this paper is the trimodal language $\L_{K,\Box,B}$ and various fragments thereof, where we read $K\phi$ as ``the agent knows $\phi$'', $\Box \phi$ as ``$\phi$ is knowable'' or ``the agent could come to know $\phi$'', and $B\phi$ as ``the agent believes $\phi$''. The Boolean connectives $\lor$, $\lthen$ and $\liff$ are defined as usual, and $\falsum$ is defined as an abbreviation for $p \land \lnot p$. We also employ $\M$ as an abbreviation for $\neg K \neg$, $\aM$ for $\neg\aK\neg$, and $\MB$ for $\neg B\neg$. 

\begin{table}[htp]
\begin{center}
\begin{tabularx}{\textwidth}{>{\hsize=.6\hsize}X>{\hsize=1.3\hsize}X>{\hsize=1.1\hsize}X}
\toprule
(K$_{\star}$) & $\proves \star(\phi \imp \psi) \imp (\star\phi \imp \star\psi)$ & Distribution\\
(D$_{\star}$) & $\proves \star\phi \imp \lnot\star\lnot\phi$ & Consistency\\
(T$_{\star}$) & $\proves \star\phi \imp \phi$ & Factivity\\
(4$_{\star}$) & $\proves \star\phi \imp \star\star\phi$ & Positive introspection\\
(.2$_{\star}$) & $\proves \lnot\star\lnot\star\phi \imp \star\lnot\star\lnot\phi$ & Directedness \\
(5$_{\star}$) & $\proves \lnot\star\phi \imp \star\lnot\star\phi$ & Negative introspection\\
(Nec$_{\star}$) & from $\proves \phi$ infer $\proves \star \phi$ & Necessitation\\
\bottomrule
\end{tabularx}
\end{center}
\caption{Some axiom schemes and a rule of inference for $\star$} \label{tbl:axs}
\end{table}%

Let $\mathsf{CPL}$ denote an axiomatization of classical propositional logic. Then, following standard naming conventions, we define the following logical systems:
$$
\begin{array}{rcl}
\mathsf{K}_{\star} & = & \mathsf{CPL} \textrm{ + (K$_{\star}$) + (Nec$_{\star}$)}\\
\mathsf{S4}_{\star} & = & \mathsf{K}_{\star} \textrm{ + (T$_{\star}$) + (4$_{\star}$)}\\
\mathsf{S4.2}_{\star} & = & \mathsf{S4}_{\star} \textrm{ + (.2$_{\star}$)}\\
\mathsf{S5}_{\star} & = & \mathsf{S4}_{\star} \textrm{ + (5$_{\star}$)}\\
\mathsf{KD45}_{\star} & = & \mathsf{K}_{\star} \textrm{ + (D$_{\star}$) + (4$_{\star}$) + (5$_{\star}$)}.
\end{array}
$$

\subsection{Stalnaker's system} \label{sec:stl}

Stalnaker \cite{StalnakerDB} works with the language $\L_{K,B}$, augmenting the logic $\mathsf{S4}_{K}$ with the additional axioms schemes presented in Table \ref{tbl:stl}.
\begin{table}[htp]
\begin{center}
\begin{tabularx}{\textwidth}{>{\hsize=.6\hsize}X>{\hsize=1.3\hsize}X>{\hsize=1.1\hsize}X}
\toprule
(D$_{B}$) & $\proves B\phi \imp \lnot B\lnot\phi$ & Consistency of belief\\
(sPI) & $\proves B\phi \imp KB\phi$ & Strong positive introspection\\
(sNI) & $\proves \lnot B\phi \imp K\lnot B\phi$ & Strong negative introspection\\
(KB) & $\proves K\phi \imp B\phi$ & Knowledge implies belief\\
(FB) & $\proves B\phi \imp BK\phi$ & Full belief\\
\bottomrule
\end{tabularx}
\end{center}
\caption{Stalnaker's additional axiom schemes}\label{tbl:stl}
\end{table}%
Let $\mathsf{Stal}$ denote this combined logic. Stalnaker proves that this system yields the pure belief logic $\mathsf{KD45}_{B}$; moreover, he shows that $\mathsf{Stal}$ proves the following equivalence: $B\phi \liff \M K\phi$. Thus, belief in this system is reducible to knowledge; every formula of $\L_{K,B}$ can be translated into a provably equivalent formula in $\L_{K}$. Stalnaker also shows that although only the $\mathsf{S4}_{K}$ system is assumed for knowledge, $\mathsf{Stal}$ actually derives the stronger system $\mathsf{S4.2}_{K}$.

What justifies the assumption of these particular properties of knowledge and belief?
It is, of course, possible to object to any of them (including the features of knowledge picked out by the system $\mathsf{S4}_{K}$); however, in this paper we focus on the relationships expressed in (KB) and (FB). That knowing implies believing is widely taken for granted---loosely speaking, it corresponds to a conception of knowledge as a special kind of belief. Full belief,\footnote{Stalnaker calls this property ``strong belief'' but we, following \cite{loripaper,jplpaper}, adopt the term ``full belief'' instead.} on the other hand, may seem more contentious; this is because it is keyed to a rather strong notion of belief. The English verb ``to believe'' has a variety of uses that vary quite a bit in the nature of the attitude ascribed to the subject. For example, the sentence, ``I believe Mary is in her office, but I'm not sure'' makes a clearly possibilistic claim, whereas, ``I believe that nothing can travel faster than the speed of light'' might naturally be interpreted as expressing a kind of certainty. It is this latter sense of belief that Stalnaker seeks to capture: belief as \textit{subjective certainty}. On this reading, (FB) essentially stipulates that being certain is not subjectively distinguishable from knowing: an agent who feels certain that $\phi$ is true also feels certain that she \textit{knows} that $\phi$ is true.

At a high level, then, each of (KB) and (FB) have a certain plausibility. Crucially, however, we contend that their \textit{joint} plausibility is predicated on an abstract conception of knowledge that permits a kind of equivocation. In particular, tension between the two emerges when knowledge is interpreted more concretely in terms of what is justified by a body of evidence.

Consider the following informal account of knowledge: an agent \emph{knows} something just in case it is entailed by the available evidence. To be sure, this is still vague since we have not yet specified what ``evidence'' is or what ``available'' means (we return to formalize these notions in Section \ref{section:subspace}). But it is motivated by a very commonsense interpretation of knowledge, as for example in a card game when one player is said to \textit{know} their opponent is not holding two aces on the basis of the fact that they are themselves holding three aces.

Even at this informal level, one can see that something like this conception of knowledge lies at the root of the standard \emph{possible worlds semantics} for epistemic logic. Roughly speaking, such semantics work as follows: each world $w$ is associated with a set of \emph{accessible} worlds $R(w)$, and the agent is said to \emph{know $\phi$ at $w$} just in case $\phi$ is true at all worlds in $R(w)$. A standard intuition for this interpretation of knowledge is given in terms of evidence: the worlds in $R(w)$ are exactly those compatible with the agent's evidence at $w$, and so the agent knows $\phi$ just in case the evidence rules out all not-$\phi$ possibilities. Suppose, for instance, that you have measured your height and obtained a reading of 5 feet and 10 inches $\pm 1$ inch. With this measurement in hand, you can be said to \textit{know} that you are less than 6 feet tall, having ruled out the possibility that you are taller.

Call this the \emph{evidence-in-hand} conception of knowledge. Observe that it fits well with the (KB) principle: evidence-in-hand that entails $\phi$ should surely also cause you to believe $\phi$. On the other hand, it does not sit comfortably with (FB): presumably you can be (subjectively) certain of $\phi$ without simultaneously being certain that you currently have evidence-in-hand that guarantees $\phi$, lest we lose the distinction between belief and knowledge.\footnote{This assumes, roughly speaking, that evidence-in-hand is ``transparent'' in the sense that the agent cannot be mistaken about what evidence she has or what it entails. A model rich enough to represent this kind of uncertainty about evidence might therefore be of interest; we leave the development of such a model to other work.} However, the intuition for (FB) can be recovered by shifting the meaning of ``available evidence'' to a weaker existential claim: that \textit{there is} evidence entailing $\phi$---even if you don't happen to personally have it in hand at the moment. This corresponds to a transition from the known to the knowable. On this account, (FB) is recast as ``If you are certain of $\phi$, then you are certain that there is evidence entailing $\phi$'', a sort of dictum of responsible belief: do not believe anything unless you think you could come to know it. Returning to (KB), on the other hand, we see that it is not supported by this weaker sense of evidence-availability: the fact that you could, in principle, discover evidence entailing $\phi$ should not in itself imply that you believe $\phi$.

This way of reconciling Stalnaker's proposed axioms with an evidence-based account of knowledge---namely, by carefully distinguishing between knowledge and knowability---is the focus of the remainder of this paper. We begin by defining a class of models rich enough to interpret both of these modalities at once.

\subsection{Topological subset models}\label{section:subspace}

A \defin{subset space} is a pair $(X,\S)$ where $X$ is a nonempty set of \emph{worlds} and $\S \subseteq 2^{X}$ is a collection of subsets of $X$. A \defin{subset model} $\X = (X,\S,v)$ is a subset space $(X,\S)$ together with a function $v: \textsc{prop} \to 2^{X}$ specifying, for each primitive proposition $p \in \textsc{prop}$, its \emph{extension} $v(p)$.

Subset space semantics interpret formulas not at worlds $x$ but at \emph{epistemic scenarios} of the form $(x,U)$, where $x \in U \in \S$. Let $ES(\X)$ denote the collection of all such pairs in $\X$. Given an epistemic scenario $(x,U) \in ES(\X)$, the set $U$ is called its \emph{epistemic range}; intuitively, it represents the agent's current information as determined, for example, by the measurements she has taken. We interpret $\L_{K}$ in $\X$ as follows:
$$
\begin{array}{lcl}
(\X,x,U) \models p & \textrm{ iff } & x \in v(p)\\
(\X,x,U) \models \lnot \phi & \textrm{ iff } & (\X,x,U) \not\models \phi\\
(\X,x,U) \models \phi \land \psi & \textrm{ iff } & (\X,x,U) \models \phi \textrm{ and } (\X,x,U) \models \psi\\
(\X,x,U) \models K \phi & \textrm{ iff } & (\forall y \in U)((\X,y,U) \models \phi).
\end{array}
$$
Thus, knowledge is cashed out as truth in all epistemically possible worlds, analogously to the standard semantics for knowledge in relational models. A formula $\phi$ is said to be \defin{satisfiable in $\X$} if there is some $(x,U) \in ES(\X)$ such that $(\X,x,U) \models \phi$, and \defin{valid in $\X$} if for all $(x,U) \in ES(\X)$ we have $(\X,x,U) \models \phi$. The set
$$\val{\phi}_{\X}^{U} = \{x \in U \: : \: (\X,x,U) \models \phi\}$$   
is called the \defin{extension of $\phi$ under $U$}. We sometimes drop mention of the subset model $\X$ when it is clear from context.

Subset space models are well-equipped to give an account of evidence-based knowledge and its \textit{dynamics}. Elements of $\cS$ can be thought of as potential pieces of evidence, while the epistemic range $U$ of an epistemic scenario $(x, U)$ corresponds to the ``evidence-in-hand'' by means of which the agent's knowledge is evaluated. This is made precise in the semantic clause for $K\phi$, which stipulates that the agent knows $\phi$ just in case $\phi$ is entailed by the evidence-in-hand.

In this framework, stronger evidence corresponds to a smaller epistemic range, and whether a given proposition can come to be known corresponds (roughly speaking) to whether there exists a sufficiently strong piece of (true) evidence that entails it. This notion is naturally and succinctly formalized \emph{topologically}.

A \defin{topological space} is a pair $(X,\cT)$ where $X$ is a nonempty set and $\cT \subseteq 2^{X}$ is a collection of subsets of $X$ that covers $X$ and is closed under finite intersections and arbitrary unions. The collection $\cT$ is called a \emph{topology on $X$} and elements of $\cT$ are called \emph{open} sets. In what follows we assume familiarity with basic topological notions; for a general introduction to topology we refer the reader to \cite{dugundji,engelking}.

A \defin{topological subset model} is a subset model $\X = (X,\cT,v)$ in which $\cT$ is a topology on $X$. Clearly every topological space is a subset space.
But topological spaces possess additional structure that enables us to study the kinds of epistemic dynamics we are interested in. More precisely, we can capture a notion of knowability via the following definition: for $A \subseteq X$, say that $x$ lies in the \defin{interior} of $A$ if there is some $U \in \cT$ such that $x \in U \subseteq A$. The set of all points in the interior of $A$ is denoted $\int(A)$; it is easy to see that $\int(A)$ is the largest open set contained in $A$.  Given an epistemic scenario $(x,U)$ and a primitive proposition $p$, we have $x \in \int(\val{p}^{U})$ precisely when there is some evidence $V \in \cT$ that is true at $x$ and that entails $p$.
We therefore interpret the extended language $\L_{K,\Box}$ that includes the ``knowable'' modality in $\X$ via the additional recursive clause
$$
\begin{array}{lcl}
(\X,x,U) \models \Box \phi & \textrm{ iff } & x \in \int(\val{\phi}^{U}).
\end{array}
$$
The formula $\Box\varphi$ thus represents knowability as a restricted existential claim over the set $\cT$ of available pieces of evidence. The dual modality correspondingly satisfies
$$
\begin{array}{lcl}
(\X,x,U) \models \Diamond \phi & \textrm{ iff } & x \in \cl(\val{\phi}^{U}),
\end{array}
$$
where $\cl$ denotes the topological closure operator.\footnote{It is not hard to see that $\val{\Box \phi}^{U} = \int(\val{\phi}^{U})$ as one might expect; however, since the closure of $\val{\phi}^{U}$ need not be a subset of $U$, we have $\val{\Diamond \phi}^{U} = \cl(\val{\phi}^{U}) \cap U$.} Since the formula $\Box \lnot \phi$ reads as ``the agent could come to know that $\phi$ is false'', one intuitive reading of its negation, $\Diamond \phi$, is ``$\phi$ is unfalsifiable''.


It is worth noting that the intuition behind reading $\Box \phi$ as ``$\phi$ is knowable'' can falter when $\phi$ is itself an epistemic formula. In particular, if $\phi$ is the Moore sentence $p \land \lnot K p$, then $K \phi$ is not satisfiable in any subset model, so in this sense $\phi$ can never be known; nonetheless, $\Box \phi$ \textit{is} satisfiable. Loosely speaking, this is because our language abstracts away from the implicit temporal dimension of knowability; $\Box \phi$ might be more accurately glossed as ``one could come to know what $\phi$ \textit{used to express} (before you came to know it)''.\footnote{This reading suggests a strong link to \emph{conditional belief modalities}, which are meant to capture an agent's revised beliefs about how the world was before learning the new information. More precisely, a conditional belief formula $B^{\phi}\psi$ is read as ``if the agent would learn $\varphi$, then she would come
to believe that $\psi$ was the case (before the learning)'' \cite[p. 14]{QualitativeBR}. Borrowing this interpretation, we might say that $\Box\varphi$ represents hypothetical, conditional knowledge of $\varphi$ where the condition consists in having some piece of evidence $V$ entailing $\varphi$ as evidence-in-hand: ``if the agent were to have $V$ as evidence-in-hand, she would know $\varphi$ was the case (before having had the evidence)''.}
Since primitive propositions do not change their truth value based on the agent's epistemic state, this subtlety is irrelevant for propositional knowledge and knowability. For the purposes of this paper, we opt for the simplified ``knowability'' gloss of the $\Box$ modality, and leave further investigation of this subtlety to future work.

\section{Stalnaker's System Revised} \label{sec:rvs}

Like Stalnaker, we augment a basic logic of knowledge with some additional axiom schemes that speak to the relationship between belief and knowledge. Unlike Stalnaker, however, we work with the language $\L_{K,\Box,B}$ and take as our ``basic logic of knowledge'' the system
$$
\begin{array}{rcl}
\LogKK & = & \mathsf{S5}_{K} + \mathsf{S4}_{\Box} \textrm{ + (KI)},
\end{array}
$$
where (KI) denotes the axiom scheme $K \phi \imp \Box \phi$. As noted in Section \ref{sec:stl}, the evidence-in-hand conception of knowledge captured by the semantics for $K$ is based on the premise that evidence-in-hand is completely transparent to the agent. That is, the agent is aware that she has the evidence she does and of what it entails and does not entail. In this sense, the agent is fully introspective with regard to the evidence-in-hand, and as such, $K$ naturally emerges as an $\mathsf{S5}$-type modality.

The system $\LogKK$ was defined by Bjorndahl \cite{bjorndahl} and shown to be exactly the logic of topological subset spaces.


\begin{theorem}[\cite{bjorndahl}] \label{thm:bjo}
$\LogKK$ is a sound and complete axiomatization of $\LangKK$ with respect to the class of all topological subset spaces: for every $\phi \in \L_{K,\Box}$, $\phi$ is provable in $\LogKK$ if and only if $\phi$ is valid in all topological subset models.
\end{theorem}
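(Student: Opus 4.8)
The plan is to treat soundness as a direct verification and to reserve the real work for completeness, which I would obtain by \emph{topologizing} a canonical model.

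For soundness it suffices to check the axioms and rules against the clauses of Section~\ref{section:subspace}, holding an epistemic scenario $(x,U)$ fixed. Since the clause for $K$ quantifies over all $y \in U$ with the range $U$ unchanged, $K$ behaves as a universal modality over $U$; as $x \in U$, the underlying relation is reflexive, symmetric and transitive, so every instance of $\mathsf{S5}_K$ is valid. The clause for $\Box$ is the interior operator, and the Kuratowski laws $\int(A) \subseteq A$, $\int(\int(A)) = \int(A)$, $\int(A \cap B) = \int(A) \cap \int(B)$ and $\int(X)=X$ validate exactly $\mathsf{S4}_\Box$ together with necessitation. For (KI), note that $(\X,x,U) \models K\phi$ forces $\val{\phi}^U = U$; since $U$ is open we get $x \in U = \int(U) = \int(\val{\phi}^U)$, i.e. $(\X,x,U) \models \Box\phi$. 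Modus ponens and both necessitation rules preserve validity in the usual way.

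For completeness I would argue contrapositively: assuming $\phi$ is not provable, I build a topological subset model and a scenario refuting $\phi$. The key simplifying observation is that in the language $\L_{K,\Box}$ the epistemic range is never altered during evaluation---both clauses keep $U$ fixed, and $\int(\val{\psi}^U) \subseteq \val{\psi}^U \subseteq U$---so it is enough to produce a model in which the carrier set $X$ is itself the unique epistemic range. I would then form the canonical model over the maximal $\LogKK$-consistent sets, equipped with the canonical accessibility relations $R_K$ and $R_\Box$ for $K$ and $\Box$. Standard modal correspondence shows that $\mathsf{S5}_K$ makes $R_K$ an equivalence relation, that $\mathsf{S4}_\Box$ makes $R_\Box$ a preorder, and that (KI) yields $R_\Box \subseteq R_K$: if $K\psi \in w$ and $w R_\Box w'$, then $\Box\psi \in w$ by (KI), whence $\psi \in w'$.

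Now fix a maximal consistent set $\Gamma_0$ containing $\lnot\phi$ (via Lindenbaum) and let $\mathcal{C}$ be its $R_K$-equivalence class. Because $R_\Box \subseteq R_K$, the set $\mathcal{C}$ is closed under both $R_K$- and $R_\Box$-successors, so it is a generated submodel and the truth of $\L_{K,\Box}$-formulas at points of $\mathcal{C}$ depends only on $\mathcal{C}$. I would take $X = \mathcal{C}$, let $\cT$ be the Alexandrov topology whose opens are the $R_\Box$-upward-closed subsets of $\mathcal{C}$, set the single range $U_0 = X$, and define $v(p) = \{w \in \mathcal{C} : p \in w\}$. The minimal open neighbourhood of a point $w$ is then exactly its set of $R_\Box$-successors, so $w \in \int(A)$ iff every $R_\Box$-successor of $w$ lies in $A$; together with the fact that $R_K$ is total on $\mathcal{C}$, a routine induction yields the truth lemma $(\X, w, X) \models \psi \iff \psi \in w$ for all $w \in \mathcal{C}$, and in particular $(\X, \Gamma_0, X) \not\models \phi$.

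The step I expect to be the crux is the topologization: verifying that the interior operator of the Alexandrov topology faithfully reproduces the relational semantics of $\Box$ (the Alexandrov manifestation of the McKinsey--Tarski phenomenon), and checking that the reduction to a single $R_K$-class with a single range is legitimate. Once the generated-submodel observation licenses working inside one class with range $X$, the logic collapses to ``interior modality plus universal modality'' over a topological space, and the inductive verification of the truth lemma becomes mechanical.
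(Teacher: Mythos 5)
Your argument is correct: the soundness checks are the standard Kuratowski/universal-modality verifications, and your completeness route (canonical relational model, restriction to one $R_K$-class, Alexandroff topologization of the canonical $R_\Box$ preorder, truth lemma at the scenario $(w,\mathcal{C})$) is essentially the same canonical-model strategy that the paper imports from \cite{bjorndahl} without reproving, and that it echoes in its own completeness proof for $\wLogB$, where the canonical topology is generated by the definable basic opens $[x]\cap\widehat{\Box\varphi}$ rather than by all $R_\Box$-upsets. That difference is cosmetic --- both topologies make $R_\Box(w)$ behave as the minimal open neighbourhood for the purposes of the truth lemma --- so your proposal matches the intended proof.
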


We strengthen $\LogKK$ with the additional axiom schemes given in Table \ref{tbl:axi}.
\begin{table}[htp]
\begin{center}
\begin{tabularx}{\textwidth}{>{\hsize=.6\hsize}X>{\hsize=1.3\hsize}X>{\hsize=1.1\hsize}X}
\toprule
(K$_{B}$) & $\proves B(\phi \imp \psi) \imp (B\phi \imp B\psi)$ & Distribution of belief\\
(sPI) & $\proves B\phi \imp KB\phi$ & Strong positive introspection\\
(KB) & $\proves K\phi \imp B \phi$ & Knowledge implies belief\\
(RB) & $\proves B\phi \imp B\Box\phi$ & Responsible belief\\
(wF) & $\proves B \phi \imp \Diamond \phi$ & Weak factivity\\
(CB) & $\proves B(\Box \phi \lor \Box \lnot \Box \phi)$ & Confident belief\\
\bottomrule
\end{tabularx}
\end{center}
\caption{Additional axioms schemes for $\LogB$} \label{tbl:axi}
\end{table}%
Let $\LogB$ denote the resulting logic. (sPI) and (KB) occur here just as they do in Stalnaker's original system (Table \ref{tbl:stl}), and though (K$_{B}$) is not an axiom of $\mathsf{Stal}$, it is derivable in that system. The remaining axioms involve the $\Box$ modality and thus cannot themselves be part of Stalnaker's system; however, if we forget the distinction between $\Box$ and $K$ (and between $\Diamond$ and $\hat{K}$), all of them do hold in $\mathsf{Stal}$, as made precise in Proposition \ref{pro:for}.

\begin{proposition} \label{pro:for}
Let $t\colon \L_{K,\Box,B} \to \L_{K,B}$ be the map that replaces each instance of $\Box$ with $K$. Then for every $\phi$ that is an instance of an axiom scheme from Table \ref{tbl:axi}, we have $\proves_{\mathsf{Stal}} t(\phi)$.
\end{proposition}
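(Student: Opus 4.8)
The plan is to check the six schemes of Table~\ref{tbl:axi} one at a time, recalling that $t$ fixes every $\Box$-free formula and sends $\Diamond = \lnot\Box\lnot$ to $\M = \lnot K\lnot$. Four of the schemes translate to theorems of $\mathsf{Stal}$ essentially by inspection: $t(\text{K}_{B})$ is (K$_{B}$), which is derivable in $\mathsf{Stal}$; $t(\text{sPI})$ is (sPI) and $t(\text{KB})$ is (KB), both axioms of $\mathsf{Stal}$ (Table~\ref{tbl:stl}); and $t(\text{RB})$ is $B\phi \imp BK\phi$, which is exactly Stalnaker's full-belief axiom (FB). So the only real work concerns (wF) and (CB). Throughout I will use freely the equivalence $B\phi \liff \M K\phi$ established for $\mathsf{Stal}$ in Section~\ref{sec:stl}, together with the facts that $K$ is (at least) $\mathsf{S4}$ and that $B$ is monotone---the latter because (K$_{B}$) is derivable and necessitation for $B$ follows from (Nec$_{K}$) and (KB).

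For (wF), the target is $t(\text{wF}) = B\phi \imp \M\phi$. I would reduce it to $\M K\phi \imp \M\phi$ via $B\phi \liff \M K\phi$, and then obtain $\M K\phi \imp \M\phi$ from factivity $K\phi \imp \phi$ by monotonicity of the normal diamond $\M$. Chaining these yields $B\phi \imp \M K\phi \imp \M\phi$.

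The main obstacle is (CB), whose translation is $t(\text{CB}) = B(K\phi \lor K\lnot K\phi)$. Since $B$ is monotone, it suffices to prove $BK\phi \lor BK\lnot K\phi$, and I would argue this by cases on $B\phi \lor \lnot B\phi$. If $B\phi$, then (FB) immediately gives $BK\phi$. The delicate case is $\lnot B\phi$, where I claim $\lnot B\phi \imp BK\lnot K\phi$. Here the chain is: $\lnot B\phi$ is equivalent (by $B \liff \M K$) to $K\lnot K\phi$; applying the reflexivity schema $\psi \imp \M\psi$ (a consequence of factivity of $K$) with $\psi = K\lnot K\phi$ yields $K\lnot K\phi \imp \M K\lnot K\phi$; the consequent $\M K\lnot K\phi = \M K(\lnot K\phi)$ is, by $B \liff \M K$ applied to the formula $\lnot K\phi$, just $B\lnot K\phi$; and finally (FB) lifts $B\lnot K\phi$ to $BK\lnot K\phi$. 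Combining the two cases gives $BK\phi \lor BK\lnot K\phi$, and monotonicity of $B$ (using $K\phi \imp (K\phi \lor K\lnot K\phi)$ and likewise for the second disjunct) delivers $B(K\phi \lor K\lnot K\phi)$.

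I expect the case $\lnot B\phi$ of (CB) to be where the argument could most easily go wrong: it relies on threading the reduction $B \liff \M K$ through nested modalities and on the precise interplay of reflexivity of $K$ with full belief, so I would verify each of the equivalences $\lnot B\phi \liff K\lnot K\phi$ and $\M K\lnot K\phi \liff B\lnot K\phi$ with care. The remaining steps are routine normal-modal-logic bookkeeping.
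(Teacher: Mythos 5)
Your proof is correct, and for five of the six schemes it coincides with the paper's own argument: (sPI), (KB), and (RB) translate directly to axioms of $\mathsf{Stal}$ (with $t(\textrm{RB})$ being (FB)), (K$_{B}$) comes from the derivability of $\mathsf{KD45}_{B}$, and $t(\textrm{wF})$ follows from $B\phi \liff \M K\phi$ together with (T$_{K}$). The only real divergence is in how (CB) is discharged. The paper applies the equivalence $B\psi \liff \M K\psi$ to the whole disjunction and then simply cites the fact that $\M K(K\phi \lor K\lnot K\phi)$ is a theorem of $\mathsf{S4}_{K}$, leaving that modal fact unproved. You instead stay at the level of $B$ and give a self-contained derivation: a case split on $B\phi \lor \lnot B\phi$, using (FB) in the first case and, in the second, the chain $\lnot B\phi \liff K\lnot K\phi \imp \M K\lnot K\phi \liff B\lnot K\phi \imp BK\lnot K\phi$, followed by monotonicity of $B$. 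Each link in that chain checks out (the equivalences are instances of $B\psi \liff \M K\psi$, the middle implication is the dual form of factivity, and the last is (FB)). Your version is essentially an unfolding of the $\mathsf{S4}_{K}$ theorem the paper invokes --- the same case split, transported through the $B \liff \M K$ reduction --- so what you gain is explicitness: a reader need not separately verify the $\mathsf{S4}_{K}$ claim. What the paper's route buys is brevity and a cleaner separation of concerns, pushing all the modal reasoning into the pure $K$-fragment where it is a known validity of $\mathsf{S4}$.
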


\begin{proof}
This is trivial for (sPI), (KB), and (RB). (K$_{B}$) follows immediately from the fact that $\mathsf{Stal}$ validates $\mathsf{KD45}_{B}$. After applying $t$, (wF) becomes $B \phi \lthen \M \phi$, which follows easily from the fact that $\proves_{\mathsf{Stal}} B \phi \liff \M K \phi$. Finally, under $t$, (CB) becomes $B(K \phi \lor K \lnot K \phi)$, which follows directly from the aforementioned equivalence together with the fact that $\proves_{\mathsf{S4}_{K}} \M K (K \phi \lor K \lnot K \phi)$.
\end{proof}

Thus, modulo the distinction between knowledge and knowablity, we make no assumptions beyond what follows from Stalnaker's own stipulations. Of course, the distinction between knowledge and knowability is crucial for us. Responsible belief differs from full belief in that $K$ is replaced by $\Box$, exactly as motivated in Section \ref{sec:stl}; it says that if you are sure of $\phi$, then you must also be sure that there is some evidence that entails $\phi$. Weak factivity and confident belief do not directly correspond to any of Stalnaker's axioms, but they are necessary to establish an analogue of Stalnaker's reduction of belief to knowledge (Proposition \ref{pro:eqv}). Of course, one need not adopt these principles; indeed, rejecting them allows one to assent to the spirit of Stalnaker's premises without committing oneself to his conclusion that belief can be defined out of knowledge (or knowability).
We return in Section \ref{sec:wea} to consider weaker logics that omit one or both of (wF) and (CB).

Weak factivity can be understood, given (KI), as a strengthening of the formula $B \phi \lthen \M \phi$ (which is provable in $\mathsf{Stal}$). Intuitively, it says that if you are certain of $\phi$, then $\phi$ must be compatible with all the available evidence (in hand or not). Thus, while belief is not required to be factive---you can believe false things---(wF) does impose a weaker kind of connection to the world---you cannot believe \textit{knowably} false things.

Confident belief expresses a kind of faith in the justificatory power of evidence. Consider the disjunction $\Box \phi \lor \Box \lnot \Box \phi$, which says that $\phi$ is either knowable or, if not, that you could come to know that it is unknowable. This is equivalent to the negative introspection axiom for the $\Box$ modality, and does not hold in general; topologically speaking, it fails at the boundary points of the extension of $\Box \phi$---where no measurement can entail $\phi$ yet every measurement leaves open the possibility that some further measurement will. What (CB) stipulates is that the agent is sure that they are not in such a ``boundary case''---that every formula $\phi$ is either knowable or, if not, knowably unknowable.

Stalnaker's reduction of belief to knowledge has an analogue in this setting: every formula in $\L_{K,\Box,B}$ is provably equivalent to a formula in $\L_{K,\Box}$ via the following equivalence.

\begin{proposition} \label{pro:eqv}
The formula $B\varphi\leftrightarrow K\aM\aK\varphi$ is provable in $\LogB$.
\end{proposition}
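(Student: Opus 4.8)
The plan is to establish the two implications $B\varphi \imp K\Diamond\Box\varphi$ and $K\Diamond\Box\varphi \imp B\varphi$ separately as derivations in $\LogB$. Two preliminary observations make the bookkeeping clean. First, $B$ inherits a necessitation rule: from $\proves\psi$ we obtain $\proves K\psi$ by (Nec$_{K}$) and then $\proves B\psi$ by (KB), so from $\proves\psi$ we may infer $\proves B\psi$; combined with (K$_{B}$) this yields the usual monotonicity and congruence rules for $B$ (and $K$ is likewise monotone via (Nec$_{K}$) and (K$_{K}$)). Second, since $\Diamond$ abbreviates $\neg\Box\neg$, we have $\proves\Box\neg\Box\varphi\liff\neg\Diamond\Box\varphi$, so the confident-belief disjunction rewrites propositionally as $\proves (\Box\varphi\lor\Box\neg\Box\varphi)\liff(\Diamond\Box\varphi\imp\Box\varphi)$. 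Applying the congruence rule for $B$ to this equivalence turns (CB) into the more usable form $\proves B(\Diamond\Box\varphi\imp\Box\varphi)$.

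For the forward direction, I would first combine (RB), namely $B\varphi\imp B\Box\varphi$, with the instance of (wF) at $\Box\varphi$, namely $B\Box\varphi\imp\Diamond\Box\varphi$, to obtain $\proves B\varphi\imp\Diamond\Box\varphi$. Necessitating this under $K$ and distributing yields $\proves KB\varphi\imp K\Diamond\Box\varphi$, and prefixing (sPI), i.e.\ $B\varphi\imp KB\varphi$, gives $\proves B\varphi\imp K\Diamond\Box\varphi$.

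For the converse, I would run the reformulated (CB) through distribution: from $\proves B(\Diamond\Box\varphi\imp\Box\varphi)$ and (K$_{B}$) we get $\proves B\Diamond\Box\varphi\imp B\Box\varphi$. The hypothesis $K\Diamond\Box\varphi$ feeds into this via (KB) with $\phi:=\Diamond\Box\varphi$, which gives $\proves K\Diamond\Box\varphi\imp B\Diamond\Box\varphi$, so that $\proves K\Diamond\Box\varphi\imp B\Box\varphi$. Finally, factivity of $\Box$ (the (T$_{\Box}$) axiom of $\mathsf{S4}_{\Box}$) gives $\proves\Box\varphi\imp\varphi$, and monotonicity of $B$ then yields $\proves B\Box\varphi\imp B\varphi$. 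Chaining produces $\proves K\Diamond\Box\varphi\imp B\varphi$, completing the equivalence.

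There is no deep obstacle here once the right reformulation is in place; the content of the argument is really in recognizing how the two ``extra'' axioms carry the two directions. The forward implication is driven by weak factivity (applied, crucially, to $\Box\varphi$ rather than $\varphi$) together with responsible belief and strong positive introspection, while the converse is driven entirely by confident belief in its rewritten guise $B(\Diamond\Box\varphi\imp\Box\varphi)$, with (KB) supplying the bridge from $K$ to $B$. The one step demanding care is this rewriting of (CB): it requires noticing that $\Box\varphi\lor\Box\neg\Box\varphi$ is exactly the material implication $\Diamond\Box\varphi\imp\Box\varphi$, and that belief congruence is in fact available in $\LogB$ even though no primitive $B$-necessitation rule is listed among the axioms.
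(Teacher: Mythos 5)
Your proposal is correct and follows essentially the same route as the paper's derivation: the forward direction via (RB), (wF) at $\Box\varphi$, (Nec$_{K}$)/(K$_{K}$), and (sPI) is line-for-line the paper's argument, and the converse likewise hinges on recognizing $\Box\varphi\lor\Box\lnot\Box\varphi$ as $\Diamond\Box\varphi\imp\Box\varphi$, distributing $B$ over it, and bridging with (KB). The only cosmetic difference is that the paper applies (T$_{\Box}$) inside the scope of $B$ before distributing (deriving $B(\Diamond\Box\varphi\imp\varphi)$ directly), whereas you distribute first and then discharge $B\Box\varphi\imp B\varphi$ by monotonicity afterwards.
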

\begin{proof}
We present an abridged derivation:\\

\vspace{-3mm}
\begin{table}[htp]
\begin{tabularx}{.75\textwidth}{>{\hsize=.1\hsize}X>{\hsize=1.5\hsize}X>{\hsize=1.0\hsize}X}
1. & $B\varphi \rightarrow \aM\aK\varphi$ & (RB), (wF)\\
2. & $KB\varphi\rightarrow K\aM\aK\varphi$ & (Nec$_K$), (K$_{K}$)\\
3. & $B\varphi\rightarrow KB\varphi$ & (sPI)\\
4. & $B\varphi \rightarrow K\aM\aK\varphi$ & $\mathsf{CPL}$: 2, 3\\
5. & $B(\Box \phi \lor \Box \lnot \Box \phi)$ & (CB)\\
6. & $(\Box \phi \lor \Box \lnot \Box \phi) \lthen (\aM \aK \phi \lthen \phi)$ & (T$_{\Box}$), $\mathsf{CPL}$\\
7. & $B(\Box \phi \lor \Box \lnot \Box \phi) \lthen B(\aM \aK \phi \lthen \phi)$ & (Nec$_{K}$), (KB), (K$_{B}$)\\
8. & $B(\aM \aK \phi \lthen \phi)$ & $\mathsf{CPL}$: 5, 7\\
9. & $B \aM \aK \phi \lthen B \phi$ & (K$_{B}$)\\
10. & $K\aM\aK\varphi \rightarrow B\aM\aK\varphi$ & (KB)\\
11. & $K\aM\aK\varphi \rightarrow B\varphi$ & $\mathsf{CPL}$: 9, 10\\
12. & $B\varphi\leftrightarrow K\aM\aK\varphi$ & $\mathsf{CPL}$: 4, 11. \hfill\qed\qedhere
\end{tabularx}
\end{table}
\end{proof}

Thus, rather than being identified with the ``epistemic possibility of knowledge'' \cite{StalnakerDB} as in Stalnaker's framework, to believe $\phi$ in this framework is to know that the knowability of $\phi$ is unfalsifiable. This is a bit of a mouthful, so consider for a moment the meaning of the subformula $\Diamond \Box \phi$: in the informal language of evidence, this says that every piece of evidence is compatible not only with the truth of $\phi$, but with the knowability of $\phi$. In other words: no possible measurement can rule out the prospect that some further measurement will definitively establish $\phi$. To believe $\phi$, according to Proposition \ref{pro:eqv}, is to know this.

This equivalence also tells us exactly how to extend topological subset space semantics to the language $\LangKKB$:
$$
\begin{array}{lcl}
(\X,x,U) \models B \phi & \textrm{ iff } & (\X,x,U) \models K \Diamond \Box \phi\\
& \textrm{ iff } & (\forall y \in U)(y \in \cl(\int(\val{\phi}^{U})))\\
& \textrm{ iff } & U \subseteq \cl(\int(\val{\phi}^{U})).
\end{array}
$$
Thus, the agent believes $\phi$ at $(x,U)$ just in case the interior of $\val{\phi}^{U}$ is dense in $U$. The collection of sets that have dense interiors on $U$ forms a filter,\footnote{A nonempty collection of subsets forms a filter if it is closed under taking supersets and finite intersections.} making it a good mathematical notion of \textit{largeness}: sets with dense interior can be thought of as taking up ``most'' of the space. This provides an appealing intuition for the semantics of belief that runs parallel to that for knowledge: the agent \textit{knows} $\phi$ at $(x,U)$ iff $\phi$ is true at \textit{all} points in $U$, whereas the agent \textit{believes} $\phi$ at $(x,U)$ iff $\phi$ is true at \textit{most} points in $U$.

As mentioned in the introduction,
this interpretation of belief as ``truth at most points'' (in a given domain) was first studied
by Baltag et al.~as a \emph{topologically natural, evidence-based} notion of belief \cite{wollicpaper}.
Though their motivation and conceptual underpinning differ from ours,
the semantics for belief we have derived in this section
essentially coincide with
those given in \cite{wollicpaper}.  We discuss this connection further in Section \ref{sec:dis}.


\shortv{
\subsection{Technical results}
Let (EQ) denote the scheme $B\varphi\leftrightarrow K\aM\aK\varphi$. It turns out that this equivalence is not only provable in $\LogB$, but in fact it characterizes $\LogB$ as an extension of $\LogKK$. To make this precise, let
$$
\begin{array}{rcl}
\LogKK^{+} & = & \LogKK \textrm{ + (EQ)}.
\end{array}
$$
We then have:
\begin{proposition}
$\LogKK^{+}$ and $\LogB$ prove the same theorems.
\end{proposition}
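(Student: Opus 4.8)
The plan is to prove the two inclusions between deductive closures separately. One direction is immediate: since $\LogB$ extends $\LogKK$ and, by Proposition~\ref{pro:eqv}, $\LogB$ proves (EQ), every axiom of $\LogKK^{+}$ is a theorem of $\LogB$; as the two systems share exactly the same inference rules (modus ponens and necessitation for $K$ and $\Box$), every $\LogKK^{+}$-derivation is already a $\LogB$-derivation, so $\LogKK^{+} \subseteq \LogB$. All the content lies in the converse, $\LogB \subseteq \LogKK^{+}$.

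For that direction I would introduce the translation $\tau \colon \L_{K,\Box,B} \to \L_{K,\Box}$ that replaces each occurrence of $B$ by $K\aM\aK$, applied from the innermost modality outward. The key preliminary step is to show $\LogKK^{+} \proves \phi \liff \tau(\phi)$ for every $\phi$, by induction on the structure of $\phi$: at a $B$-node this uses (EQ) directly, while at a $K$- or $\Box$-node it uses replacement of provable equivalents, available because $K$ and $\Box$ are normal in $\LogKK$. Replacement inside the scope of $B$ is not primitively available, but it is recovered from (EQ): if $\LogKK^{+} \proves \chi \liff \chi'$ then normality gives $\LogKK^{+} \proves K\aM\aK\chi \liff K\aM\aK\chi'$, and two applications of (EQ) yield $B\chi \liff B\chi'$. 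Since $\LogKK^{+}$ extends $\LogKK$ by the single scheme (EQ), it then suffices to prove each of the six extra axioms of $\LogB$ in $\LogKK^{+}$; and by the equivalence just established, for each such axiom $\alpha$ it is enough to prove the $\L_{K,\Box}$-formula $\tau(\alpha)$ in $\LogKK$.

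Next I would discharge these six obligations. Four are short modal derivations in $\LogKK$: $\tau(\mathrm{KB})$ is $K\psi \imp K\aM\aK\psi$, obtained by chaining $(4_{K})$, (KI), and $(\mathrm{T}_{\Box})$ dualized to $\aK\psi \imp \aM\aK\psi$, all under $K$; $\tau(\mathrm{wF})$ is $K\aM\aK\psi \imp \aM\psi$, from $(\mathrm{T}_{K})$ and $(\mathrm{T}_{\Box})$; $\tau(\mathrm{sPI})$ is $(4_{K})$ applied to $\aM\aK\psi$; and $\tau(\mathrm{RB})$ follows from $\aK\psi \liff \aK\aK\psi$ (by $(\mathrm{T}_{\Box})$ and $(4_{\Box})$). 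The two remaining obligations, $\tau(\mathrm{K}_{B})$ and $\tau(\mathrm{CB})$, are the crux. Rather than grind through them syntactically, I would appeal to Theorem~\ref{thm:bjo}: since both are $\L_{K,\Box}$-formulas, it suffices to verify that each is valid in every topological subset model. For $\tau(\mathrm{K}_{B})$ this is the observation that the sets with dense interior on $U$ form a filter. For $\tau(\mathrm{CB})$, which unwinds to $K\aM\aK(\aK c \vel \aK\lnot\aK c)$, the computation uses that $U$ is open, so $\val{\aK\lnot\aK c}^{U} = U \setminus \cl(O)$ where $O = \int\val{c}^{U}$, and hence the relevant extension is $U \setminus (\cl(O) \setminus O)$.

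The step I expect to be the main obstacle is precisely the validity check for $\tau(\mathrm{CB})$: it turns on the topological fact that the boundary $\cl(O) \setminus O$ of an open set $O$ is nowhere dense, so that its complement meets $U$ in a set dense in $U$, giving $U \subseteq \cl(\int\val{\aK c \vel \aK \lnot \aK c}^{U})$ as required. As an alternative, more structural route for the whole converse, one can instead observe that $\LogKK^{+}$ is itself sound and complete for the class of topological subset models --- soundness because every $\LogKK$ axiom is valid and (EQ) is valid under the derived semantics for $B$, and completeness by translating any valid $\phi$ to the provably- and semantically-equivalent $\tau(\phi) \in \L_{K,\Box}$ and invoking Theorem~\ref{thm:bjo} --- and then conclude, since $\LogB$ is sound and complete for the same class in the same language, that the two logics prove exactly the same theorems.
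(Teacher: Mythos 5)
Your main argument is correct and follows essentially the same route as the paper's: one inclusion via Proposition \ref{pro:eqv}, and the converse by using the translation lemma to reduce each extra axiom $\alpha$ of $\LogB$ to proving $e(\alpha)$ in $\LogKK$, which is then discharged through Theorem \ref{thm:bjo} by a validity check in topological subset models (the paper verifies all six axioms semantically where you do four syntactically --- a harmless variation, and your boundary-is-nowhere-dense argument for (CB) matches the paper's density argument). One caution: your closing ``alternative, more structural route'' invokes soundness and completeness of $\LogB$, but in the paper both of those are derived \emph{from} this very proposition, so as stated that shortcut would be circular.
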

From this it is not hard to establish soundness and completeness of $\LogB$:
\begin{theorem}
$\LogB$ is a sound and complete axiomatization of $\LangKKB$ with respect to the class of topological subset models: for every $\phi \in \LangKKB$, $\phi$ is valid in all topological subset models if and only if $\phi$ is provable in $\LogB$.
\end{theorem}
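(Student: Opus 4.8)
The plan is to reduce soundness and completeness for $\LogB$ to Bjorndahl's result for $\LogKK$ (Theorem~\ref{thm:bjo}) by eliminating the belief modality via the equivalence (EQ). Define a translation $\tau\colon \LangKKB \to \LangKK$ by recursion on formula structure: $\tau$ fixes primitive propositions, commutes with the Boolean connectives and with the modalities $K$ and $\Box$, and sets $\tau(B\phi) = K\Diamond\Box\,\tau(\phi)$. Since $\tau$ drives the translation inward first, it handles nested occurrences of $B$ and produces a genuine $\LangKK$-formula. The whole argument rests on two claims, each proved by a straightforward induction on $\phi$.

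The first is \emph{semantic invariance}: for every $\phi \in \LangKKB$ and every epistemic scenario $(x,U)$, $(\X,x,U)\models\phi$ iff $(\X,x,U)\models\tau(\phi)$. The only nontrivial inductive step is $B$, and there it is immediate from the semantic clause for belief, which states precisely that $(\X,x,U)\models B\psi$ iff $(\X,x,U)\models K\Diamond\Box\psi$; combined with the induction hypothesis for $\psi$ this settles the case. The second is \emph{provable equivalence}: $\proves_{\LogB}\phi \liff \tau(\phi)$ for every $\phi$. The base and Boolean cases are trivial, and for the modal cases one uses that $K$, $\Box$, and $B$ are all normal in $\LogB$ --- note that necessitation for $B$ is derivable from (Nec$_K$) and (KB) --- so provable equivalents may be substituted under each modality (replacement of equivalents). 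The $B$-step then combines this with Proposition~\ref{pro:eqv}: from $\proves_{\LogB}\psi\liff\tau(\psi)$ we obtain $\proves_{\LogB}B\psi\liff B\tau(\psi)$, and Proposition~\ref{pro:eqv} gives $\proves_{\LogB}B\tau(\psi)\liff K\Diamond\Box\,\tau(\psi)$, the right-hand side being exactly $\tau(B\psi)$.

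With these in hand the theorem follows cleanly. For soundness, recall that $\LogB$ and $\LogKK^{+} = \LogKK + \text{(EQ)}$ prove the same theorems; the axioms and rules of $\LogKK$ are valid by Theorem~\ref{thm:bjo}, and (EQ) is valid directly from the semantic clause for $B$, so every theorem of $\LogB$ is valid. For completeness, suppose $\phi \in \LangKKB$ is valid in all topological subset models. By semantic invariance $\tau(\phi) \in \LangKK$ is valid as well, so Theorem~\ref{thm:bjo} yields $\proves_{\LogKK}\tau(\phi)$, hence $\proves_{\LogB}\tau(\phi)$ since $\LogKK \subseteq \LogB$. Provable equivalence then gives $\proves_{\LogB}\phi$, completing the argument. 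The main work --- indeed the only real obstacle --- is the provable-equivalence claim, since it relies on the replacement-of-equivalents property, which in turn requires checking that all three modalities (in particular $B$) are normal in $\LogB$; everything else is routine.
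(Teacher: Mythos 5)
Your proposal is correct and follows essentially the same route as the paper: eliminate $B$ via the translation $\phi \mapsto e(\phi)$ replacing $B$ with $K\Diamond\Box$, establish the provable equivalence $\proves_{\LogB} \phi \liff e(\phi)$ from Proposition~\ref{pro:eqv} (the paper's Lemma~\ref{lem:eqv} together with Proposition~\ref{pro:sam}), and appeal to Theorem~\ref{thm:bjo}. The only cosmetic differences are that you argue completeness directly via a semantic-invariance lemma where the paper argues by contraposition using the soundness of $\LogB$ applied to $\phi \liff e(\phi)$, and that you prove the provable equivalence inside $\LogB$ rather than inside $\LogKK^{+}$; both variants are sound.
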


Much work in belief representation takes the logical principles of $\mathsf{KD45}_{B}$ for granted (see, e.g., \cite{sep-logic-epistemic,Baltag08epistemiclogic,DELbook}). An important feature of $\LogB$ is that it \textit{derives} these principles:
\begin{proposition} \label{pro:emb}
For every $\phi \in \LangB$, if $\proves_{\mathsf{KD45}_{B}} \phi$, then $\proves_{\LogB} \phi$.
\end{proposition}

In fact, $\mathsf{KD45}_{B}$ is not merely derivable in our logic---it completely characterizes belief as interpreted in topological models. That is, $\mathsf{KD45}_{B}$ proves exactly the validities expressible in the language $\LangB$:
\begin{theorem}
$\mathsf{KD45}_{B}$ is a sound and complete axiomatization of $\LangB$ with respect to the class of all topological subset spaces: for every $\phi \in \LangB$, $\phi$ is provable in $\mathsf{KD45}_{B}$ if and only if $\phi$ is valid in all topological subset models.
\end{theorem}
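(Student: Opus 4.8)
The plan is to get soundness directly from results already established and to prove completeness by turning a relational $\mathsf{KD45}$-model into a topological subset model. For soundness, suppose $\proves_{\mathsf{KD45}_{B}} \phi$. By Proposition \ref{pro:emb} we then have $\proves_{\LogB} \phi$, and by the soundness of $\LogB$ (established earlier) $\phi$ is valid in all topological subset models; since $\LangB \subseteq \LangKKB$ this is exactly validity of $\phi$ qua $\LangB$-formula. For completeness I would argue the contrapositive: if $\phi \in \LangB$ is \emph{not} provable in $\mathsf{KD45}_{B}$, I must exhibit a topological subset model and scenario satisfying $\psi := \lnot\phi$.

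First I would invoke the standard relational completeness of $\mathsf{KD45}$: a consistent $\psi$ is satisfied at some world $w_0$ of a Kripke model $\fM = (W,R,V)$ with $R$ serial, transitive, and Euclidean. Passing to the submodel generated by $w_0$, which preserves the truth of all (modal, hence $\LangB$-) formulas, I may assume $W = \{w_0\} \cup C$ with $C := R(w_0)$. Seriality makes $C$ nonempty, and a short argument from transitivity and the Euclidean property shows $R(y) = C$ for \emph{every} $y \in W$: thus $C$ is a cluster that every world, including the root, sees exactly.

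Next I would build $\X = (X,\cT,v)$ by setting $X := W$, $v(p) := V(p)$, and taking as topology
$$\cT := 2^{C} \cup \{X\}.$$
This is readily checked to be a topology in which every point of $C$ is isolated and $C$ is dense (its only closed superset is $X$). The decisive property, which I would verify by hand, is that for every $A \subseteq X$ the open set $\int(A)$ is dense in $X$ iff $C \subseteq A$: if $C \subseteq A$ then $\int(A) \supseteq C$ is dense, while if some $c \in C$ lies outside $A$ then $\int(A) = A \cap C$ omits $c$, and since closure can only adjoin non-isolated points (here, at most $w_0$) we get $c \notin \cl(\int(A))$, so density fails. With this I would prove the truth lemma by induction on $\LangB$-formulas: for every $y \in X$, $(\X,y,X) \models \chi$ iff $\fM,y \models \chi$. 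The atomic and Boolean cases are immediate from $v = V$. For belief, $(\X,y,X) \models B\chi$ iff $\int(\val{\chi}^{X})$ is dense in $X$ (the epistemic range is all of $X$), iff $C \subseteq \val{\chi}^{X}$ by the property above, iff $\chi$ holds throughout $C$ by the induction hypothesis, iff $\fM,y \models B\chi$ since $R(y)=C$. Evaluating at $(w_0, X)$ then yields $(\X, w_0, X) \models \psi$, so $\phi$ is not valid.

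The main obstacle, and really the only nontrivial step, is the choice of topology: one must arrange that the topological, dense-interior reading of belief coincides with the relational, truth-throughout-the-cluster reading. Making the cluster points isolated and the cluster itself dense is precisely what forces the equivalence ``$\int(A)$ dense $\Leftrightarrow C \subseteq A$,'' which drives the belief case of the truth lemma. A secondary point to handle carefully is the appeal to the cluster shape of rooted $\mathsf{KD45}$-models together with the uniformity $R(y) = C$, since the topological belief condition is global on the range $X$ and must therefore match the relational condition at every world rather than only at the root.
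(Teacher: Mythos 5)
Your proof is correct, and while it follows the same overall strategy as the paper---transfer a relational $\mathsf{KD45}_{B}$ countermodel to a topological subset model via a truth lemma for $B$---the technical realization is genuinely different in two respects. First, you normalize the relational model up front by passing to the submodel generated by $w_0$, which collapses it to a single brush $\{w_0\}\cup C$ with $R(y)=C$ everywhere; the paper instead keeps the whole belief frame and proves (Lemma \ref{lem:dis}) that it decomposes as a disjoint union of brushes $([x],R|_{[x]})$, evaluating at the scenario $(x,[x])$. Second, and more substantively, your topology $\cT = 2^{C}\cup\{X\}$ is not the paper's: the paper uses the Alexandroff topology $\cT_{R^+}$ generated by the sets $R^{+}(y)$, in which $C_x$ is the \emph{smallest} nonempty open and no point of the cluster is isolated, whereas in your space every point of $C$ is isolated and $C$ is open, discrete, and dense. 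Both topologies validate the one equivalence that drives the $B$-case of the truth lemma, namely $\cl(\int(A))\supseteq U$ iff $A\supseteq C$ (the paper gets this by composing Lemma \ref{lem:pre}.3 and \ref{lem:pre}.4; you get it by the isolated-points/density computation). What the paper's construction buys is uniformity and reusability: $\cT_{R^+}$ is the standard relational-to-topological translation, it handles arbitrary belief frames without a rootedness assumption, and the same machinery ($R^{+}(y)$ as smallest neighbourhood) is used elsewhere in the paper. What your construction buys is concreteness: the interior and closure computations are immediate, and by taking the epistemic range to be all of $X$ you avoid any bookkeeping about relativizing to the open set $[x]$. One small point worth making explicit in a final write-up is that the generated submodel of a serial, transitive, Euclidean frame is again serial, transitive, and Euclidean (seriality because $R(y)$ is contained in the generated part; the other two because they are universal conditions), since your reduction to the two-layer shape $\{w_0\}\cup C$ depends on it.
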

Soundness follows easily from the above. The proof of completeness is more involved; we include it as an appendix.
}

\fullv{
\section{Soundness and Completeness} \label{sec:sac}

In this section we establish soundness and completeness of the logic $\LogB$ of knowledge, knowability, and belief presented above; we also consider the purely doxastic fragment $\LangB$ of the full language and show that it is axiomatized by the standard $\mathsf{KD45}_{B}$ system.

\subsection{Soundness and Completeness of $\LogB$}

Let (EQ) denote the scheme $B\varphi\leftrightarrow K\aM\aK\varphi$. It turns out that this equivalence is not only provable in $\LogB$, but in fact it characterizes $\LogB$ as an extension of $\LogKK$. To make this precise, let
$$
\begin{array}{rcl}
\LogKK^{+} & = & \LogKK \textrm{ + (EQ)},
\end{array}
$$
and let $e \colon \LangKKB \to \LangKK$ be the map that replaces each instance of $B$ with $K\Diamond\Box$.

\begin{lemma} \label{lem:eqv}
For all $\phi \in \LangKKB$, we have $\proves_{\LogKK^{+}} \phi \liff e(\phi)$.
\end{lemma}

\begin{proposition} \label{pro:sam}
$\LogKK^{+}$ and $\LogB$ prove the same theorems.
\end{proposition}

\begin{proof}
In light of Proposition \ref{pro:eqv}, it suffices to show that $\LogKK^{+}$ proves everything in Table \ref{tbl:axi}. By Lemma \ref{lem:eqv}, then, it suffices to show that for every $\phi$ that is an instance of an axiom scheme from Table \ref{tbl:axi}, we have $\proves_{\LogKK} e(\phi)$. And for this, by Theorem \ref{thm:bjo}, we need only show that each such $e(\phi)$ is valid in all topological subset models.

Let $\cX = \Model$ be a topological subset model and $(x,U) \in ES(\cX)$.

\begin{itemize}

\item[(K$_{B}$)]
Suppose $(x,U) \models K\Diamond\Box(\phi \lthen \psi)$ and $(x,U) \models K\Diamond\Box\phi$. Then $U \subseteq \cl(\int(\val{\phi \lthen \psi}^{U})) \cap \cl(\int(\val{\phi}^{U}))$. Let $y \in U$ and let $V$ be an open set containing $y$. Then we must have $V \cap \int(\val{\phi \lthen \psi}^{U}) \neq \emptyset$ and so, since this set is also open,
\begin{eqnarray*}
V \cap \int(\val{\phi \lthen \psi}^{U}) \cap \int(\val{\phi}^{U}) & \neq & \emptyset\\
\therefore \qquad V \cap \int(\val{\phi \lthen \psi}^{U} \cap \val{\phi}^{U}) & \neq & \emptyset\\
\therefore \qquad V \cap \int(\val{\psi}^{U}) & \neq & \emptyset,
\end{eqnarray*}
which establishes that $y \in \cl(\int(\val{\psi}^{U}))$. This shows that $U \subseteq \cl(\int(\val{\psi}^{U}))$, and therefore $(x,U) \models K\Diamond\Box\psi$.

\item[(sPI)]
Suppose $(x,U) \models K\Diamond\Box\phi$. Then $U = \val{\Diamond\Box\phi}^{U}$, and so for all $y \in U$ we have $(y,U) \models K\Diamond\Box\phi$. This implies that $U = \val{K\Diamond\Box\phi}^{U}$, hence $(x,U) \models KK\Diamond\Box\phi$.

\item[(KB)]
Suppose $(x,U) \models K\phi$. Then $U = \val{\phi}^{U}$, and so (since $U$ is open), $U \subseteq \cl(\int(\val{\phi}^{U}))$, which implies $(x,U) \models K\Diamond\Box\phi$.

\item[(RB)]
Suppose $(x,U) \models K\Diamond\Box\phi$. Then $U \subseteq \cl(\int(\val{\phi}^{U}))$, so $U \subseteq \cl(\int(\int(\val{\phi}^{U})))$, hence $U = \val{\Diamond\Box\Box\phi}^{U}$, which implies that $(x,U) \models K\Diamond\Box\Box\phi$.

\item[(wF)]
Suppose $(x,U) \models K\Diamond\Box\phi$. Then $x \in U \subseteq \cl(\int(\val{\phi}^{U})) \subseteq \cl(\val{\phi}^{U})$, which implies that $(x,U) \models \Diamond \phi$.

\item[(CB)]
Observe that
$$\val{\Box \phi \lor \Box \lnot \Box \phi}^{U} = \int(\val{\phi}^{U}) \cup \int(X \mysetminus \int(\val{\phi}^{U}))$$
is an open set. Moreover, it is dense in $U$; to see this, let $y \in U$ and let $V$ be an open neighbourhood of $y$. Then either $V \cap \int(\val{\phi}^{U}) \neq \emptyset$ or, if not, $V \subseteq X \mysetminus \int(\val{\phi}^{U})$, hence $V \subseteq \int(X \mysetminus \int(\val{\phi}^{U}))$. We therefore have
$$U \subseteq \cl(\int(\val{\Box \phi \lor \Box \lnot \Box \phi}^{U})),$$
whence $(x,U) \models K\Diamond\Box(\Box \phi \lor \Box \lnot \Box \phi)$. \qedhere

\end{itemize}
\end{proof}

\begin{proposition} \label{pro:snd}
$\LogKK^{+}$ is a sound axiomatization of $\LangKKB$ with respect to the class of topological subset models: for every $\phi \in \LangKKB$, if $\phi$ is provable in $\LogKK^{+}$ then $\phi$ is valid in all topological subset models.
\end{proposition}

\begin{proof}
This follows from the soundness of $\LogKK$ (Theorem \ref{thm:bjo}) together with the fact that the semantics for the $B$ modality ensures that (EQ) is valid is all topological subset models.
\end{proof}

\begin{corollary} \label{cor:snd}
$\LogB$ is a sound axiomatization of $\LangKKB$ with respect to the class of topological subset models.
\end{corollary}

\begin{proof}
Immediate from Propositions \ref{pro:sam} and \ref{pro:snd}.
\end{proof}

\begin{theorem}
$\LogB$ is a complete axiomatization of $\LangKKB$ with respect to the class of topological subset models: for every $\phi \in \LangKKB$, if $\phi$ is valid in all topological subset models then $\phi$ is provable in $\LogB$.
\end{theorem}

\begin{proof}
We show the contrapositive. Let $\varphi \in \LangKKB$ be such that $\not\proves_{\LogB} \phi$. By Lemma \ref{lem:eqv} and Proposition \ref{pro:sam} we have $\proves_{\LogB} \phi \liff e(\phi)$, and so also $\not\proves_{\LogB} e(\phi)$. Since $e(\phi) \in \LangKK$ and $\LogB$ is an extension of $\LogKK$, we know that $\not\proves_{\LogKK} e(\phi)$. Thus, by Theorem \ref{thm:bjo}, there exists a topological subset model $\cX$ and $(x,U) \in ES(\cX)$ such that $(\cX, x, U) \not \models e(\phi)$ and so, by the soundness of $\LogB$, we obtain $(\cX, x, U)\not\models\varphi$.
\end{proof}

\subsection{$\mathsf{KD45}_{B}$ and the doxastic fragment $\LangB$}

Much work in belief representation takes the logical principles of $\mathsf{KD45}_{B}$ for granted (see, e.g., \cite{sep-logic-epistemic,Baltag08epistemiclogic,DELbook}). An important feature of $\LogB$ is that it \textit{derives} these principles:
\begin{proposition} \label{pro:emb}
For every $\phi \in \LangB$, if $\proves_{\mathsf{KD45}_{B}} \phi$, then $\proves_{\LogB} \phi$.
\end{proposition}

\begin{proof}
It suffices to show that $\LogB$ derives all the axioms and the rule of inference of $\mathsf{KD45}_{B}$.
(K$_{B}$) is itself an axiom of $\LogB$.
It is not hard to see, using (wF) and $\mathsf{S4}_{\Box}$, that $\proves_{\LogB} \lnot B \falsum$; given this, (D$_{B}$) follows from (K$_{B}$) with $\psi$ replaced by $\falsum$.
(4$_B$) follows easily from (sPI) and (KB).
To derive (5$_B$), first observe that by (5$_{K}$) we have $\proves_{\LogB} \lnot K \Diamond \Box \phi \lthen K \lnot K \Diamond \Box \phi$; from Proposition \ref{pro:eqv} it then follows that $\proves_{\LogB} \lnot B \phi \lthen K \lnot B \phi$, and so from (KB) we can deduce (5$_{B}$).
Lastly, (Nec$_{B}$) follows directly from (Nec$_{K}$) together with (KB).
\end{proof}

In fact, $\mathsf{KD45}_{B}$ is not merely derivable in our logic---it completely characterizes belief as interpreted in topological models. That is, $\mathsf{KD45}_{B}$ proves exactly the validities expressible in the language $\LangB$:

\begin{theorem} \label{thm:sac}
$\mathsf{KD45}_{B}$ is a sound and complete axiomatization of $\LangB$ with respect to the class of all topological subset spaces: for every $\phi \in \LangB$, $\phi$ is provable in $\mathsf{KD45}_{B}$ if and only if $\phi$ is valid in all topological subset models.
\end{theorem}

Soundness follows immediately from Proposition \ref{pro:emb} together with the soundness of $\LogB$ (Corollary \ref{cor:snd}). The remainder of this section is devoted to developing the tools needed to prove completeness. Our proof relies crucially on the standard Kripke-style interpretation of $\LangB$ in relational models and the completeness results pertaining thereto. We therefore begin with a brief review of these notions (for a more comprehensive overview, we direct the reader to \cite{blackburn01,zak97}).

A \defin{relational frame} is a pair $(X,R)$ where $X$ is a nonempty set and $R$ is a binary relation on $X$. A \defin{relational model} is a relational frame $(X,R)$ equipped with a \emph{valuation} function $v: \textsc{prop} \to 2^{X}$. The language $\LangB$ is interpreted in a relational model $M = (X,R,v)$ by extending the valuation function via the standard recursive clauses for the Boolean connectives together with the following:
$$
\begin{array}{lcl}
(M,x) \models B \phi & \textrm{ iff } & (\forall y\in X)(xRy \ \mbox{implies} \ (M, y)\models \phi).
\end{array}
$$
Let $\sval{\phi}_{M} = \{x \in X \: : \: (M,x) \models \phi\}$. A \defin{belief frame} is a frame $(X,R)$ where $R$ is serial, transitive, and Euclidean.\footnote{A relation is \emph{serial} if $(\forall x)(\exists y)(xRy)$; it is \emph{transitive} if $(\forall x,y,z)((xRy \; \& \; yRz) \rimp xRz)$; it is \emph{Euclidean} if $(\forall x,y,z)((xRy \; \& \; xRz) \rimp yRz)$.}

\begin{theorem} \label{thm:bel} 
$\mathsf{KD45}_{B}$ is a sound and complete axiomatization of $\LangB$ with respect to the class of belief frames.
\end{theorem}

\begin{proof}
See, e.g., \cite[Chapter 5]{zak97} or \cite[Chapters 2, 4]{blackburn01}.
\end{proof}

A frame $(X,R)$ is called a \defin{brush} if there exists a nonempty subset $C \subseteq X$ such that $R = X \times C$. If such a $C$ exists, clearly it is unique; call it the \emph{final cluster} of the brush. A brush is called a \defin{pin} if $|X \mysetminus C| = 1$. It is not hard to see that every brush is a belief frame. Conversely, the following Lemma shows that every belief frame $(X,R)$ is a disjoint union of brushes.\footnote{A frame $(X,R)$ is said to be a \emph{disjoint union} of frames $(X_{i},R_{i})$ provided the $X_{i}$ partition $X$ and the $R_{i}$ partition $R$.}

\begin{lemma} \label{lem:dis}
Let $(X,R)$ be a belief frame, and define
$$x \sim y \textrm{ iff } (\exists z \in X)(xRz \textrm{ and } yRz).$$
Then $\sim$ is an equivalence relation extending $R$. Moreover, if $[x]$ denotes the equivalence class of $x$ under $\sim$, then $([x],R|_{[x]})$ is a brush, and $(X,R)$ is the disjoint union of all such brushes.
\end{lemma}

\begin{proof}
Reflexivity of $\sim$ follows from seriality of $R$, and symmetry is immediate. To see that $\sim$ is transitive, suppose $x \sim x'$ and $x' \sim x''$. Then there exist $y,z \in X$ such that $xRy$, $x'Ry$, $x'Rz$ and $x''Rz$. Because $R$ is Euclidean, it follows that $yRz$; because $R$ is transitive, we can deduce that $xRz$; it follows that $x \sim x''$. To see that $\sim$ extends $R$, suppose $xRy$. Then because $R$ is Euclidean, we have $yRy$, which implies $x \sim y$.

The fact that $\sim$ is an equivalence relation tells us that the sets $[x]$ partition $X$; furthermore, since $xRy$ implies $[x] = [y]$, we also know that the sets $R|_{[x]}$ partition $R$. Thus $(X,R)$ is the disjoint union of the frames $([x],R|_{[x]})$.

Finally we show that each such frame $([x],R|_{[x]})$ is a brush. Set $C_{x} = \{y \in [x] \: : \: yRy\}$; that $C_{x} \neq \emptyset$ follows easily from $R$ being serial and Euclidean. Let $y \in C_{x}$. Then for all $x' \in [x]$ we have $x' \sim y$, so there is some $z \in X$ with $x'Rz$ and $yRz$; now because $R$ is Euclidean, we can deduce that $zRy$, so by transitivity $x'Ry$. It follows that $[x] \times \{y\} \subseteq R$, hence $[x] \times C_{x} \subseteq R$. On the other hand, if $y \notin C_{x}$, then for every $x' \in [x]$ we have $\lnot(x'Ry)$, or else the Euclidean property would imply $yRy$, a contradiction. Thus, $R|_{[x]} = [x] \times C_{x}$, so $([x],R|_{[x]})$ is a brush with final cluster $C_{x}$.
\end{proof}

\begin{corollary}
$\mathsf{KD45}_{B}$ is a sound and complete axiomatization of $\LangB$ with respect to the class of brushes and with respect to the class of pins.
\end{corollary}

There is a close connection between the relational semantics for $\LangB$ presented above and our topological semantics for this language. For any frame $(X,R)$, let $R^{+}$ denote the \emph{reflexive closure} of $R$:
$$R^{+} = R \cup \{(x,x) \: : \: x \in X\}.$$
Given a transitive frame $(X, R)$, the set $\cB_{R^{+}}=\{R^+(x) \: : \: x \in X\}$ constitutes a topological basis on $X$; denote by $\cT_{R^+}$ the topology generated by $\cB_{R^+}$ (see, e.g., \cite{BvdH13,vanbenthemSPACE} for a more detailed discussion of this construction). It is well-known that $(X, \cT_{R^+})$ is an Alexandroff space and, for every $x \in X$, the set $R^+(x)$ is the smallest open neighborhood of $x$.

\begin{lemma} \label{lem:pre}
Let $(X,R)$ be a belief frame. For each $x \in X$, let $C_{x}$ denote the final cluster of the brush $([x], R|_{[x]})$ as in Lemma \ref{lem:dis}, and let $\int$ and $\cl$ denote the interior and closure operators, respectively, in the topological space $(X, \cT_{R^+})$. Then for all $x \in X$ and every $A \subseteq X$:
\begin{enumerate}
\item \label{lem:pre1}
$[x] \in \cT_{R^+}$, and so $(x, [x]) \in ES(\cX_{M})$;
\item \label{lem:pre2}
$R(x) = C_{x} \in \cT_{R^+}$;
\item \label{lem:pre3}
$\int(A) \cap C_{x} \neq \emptyset$ if and only if $A \supseteq C_{x}$;
\item \label{lem:pre4}
$\cl(A) \supseteq [x]$ if and only if $A \cap C_{x} \neq \emptyset$.
\end{enumerate}
\end{lemma}

\begin{proof}
$ $\newline
\vspace{-5mm}
\begin{enumerate}
\item
This follows from the fact that $y \in [x]$ implies $R^+(y) \subseteq [x]$, which in turn follows from the fact that $\sim$ extends $R$ (Lemma \ref{lem:dis}).
\item
That $R(x) = C_{x}$ follows from the fact that $R|_{[x]} = [x] \times C_{x}$ (Lemma \ref{lem:dis}). To see that $C_{x}$ is open, observe that if $y \in C_{x}$, then $R^{+}(y) = R(y) = C_{y} = C_{x}$.
\item
Since $C_{x}$ is open, it follows immediately that if $A \supseteq C_{x}$ then $\int(A) \supseteq C_{x}$, so in particular $\int(A) \cap C_{x} \neq \emptyset$. Conversely, if $y \in \int(A) \cap C_{x}$ then $R^{+}(y) \subseteq A$, since $R^{+}(y)$ is the smallest open neigbourhood of $y$; therefore, since $R^{+}(y) = R(y) = C_{x}$, we have $A \supseteq C_{x}$.
\item
First suppose that $y \in A \cap C_{x}$ and let $z \in [x]$. By part \ref{lem:pre2}, $R^{+}(z) \supseteq R(z) = C_{x}$, and so since $R^{+}(z)$ is the smallest open neighbourhood of $z$ and $y \in C_{x}$, it follows that $z \in \cl(\{y\}) \subseteq \cl(A)$, hence $[x] \subseteq \cl(A)$. Conversely, suppose that $A \cap C_{x} = \emptyset$. Then since $C_{x}$ is open it follows that $C_{x} \cap \cl(A) = \emptyset$, which shows that $[x] \not\subseteq \cl(A)$. \qedhere
\end{enumerate}
\end{proof}

Given a transitive model $M = (X,R,v)$, let $\cX_{M}$ denote the topological subset model constructed from $M$, namely $(X,\cT_{R^+},v)$.

\begin{lemma} \label{lem:bru}
Let $M = (X, R, v)$ be a belief frame. Then for every formula $\phi \in \LangB$, for every $x \in X$ we have
$$(M, x) \models \phi \ \mbox{ iff } \ (\cX_{M}, x, [x]) \models \phi.$$
\end{lemma}

\begin{proof}
The proof follows by induction on the structure of $\phi$; cases for the primitive propositions and the Boolean connectives are elementary. So assume inductively that the result holds for $\phi$; we must show that it holds also for $B\phi$. Note that the inductive hypothesis implies that $\val{\phi}^{[x]}= \sval{\phi}_M \cap [x]$, since by Lemma \ref{lem:dis}, $y \in [x]$ implies $[y] = [x]$.
\begin{align}
(M, x) \models B \phi & \ \mbox{ iff } \ R(x) \subseteq \|\phi\|_M\notag\\
&  \ \mbox{ iff } \ C_{x} \subseteq \|\phi\|_M \tag{Lemma \ref{lem:pre}.\ref{lem:pre2}}\\
&  \ \mbox{ iff } \ C_{x} \subseteq \sval{\phi}_M \cap [x] \tag{since $C_{x} \subseteq [x]$}\\
&  \ \mbox{ iff } \ C_{x} \subseteq \val{\phi}^{[x]} \tag{inductive hypothesis}\\
&  \ \mbox{ iff } \ \int(\val{\phi}^{[x]}) \cap C_{x} \neq \emptyset \tag{Lemma \ref{lem:pre}.\ref{lem:pre3}}\\
&  \ \mbox{ iff } \ \cl(\int(\val{\phi}^{[x]})) \supseteq [x] \tag{Lemma \ref{lem:pre}.\ref{lem:pre4}}\\
&  \ \mbox{ iff } \ (\cX_{M}, x, [x]) \models B\phi. \notag \hspace{3cm} \qedhere
\end{align}
\end{proof}

Completeness is an easy consequence of this lemma: if $\phi \in \LangB$ is such that $\not\proves_{\mathsf{KD45}_{B}} \phi$, then by Theorem \ref{thm:bel} there is a belief frame $M$ that refutes $\phi$ at some point $x$. Then, by Lemma \ref{lem:bru}, $\phi$ is also refuted in $\cX_{M}$ at the epistemic scenario $(x, [x])$. This completes the proof of Theorem \ref{thm:sac}.

}

\section{Weaker Notions of Belief} \label{sec:wea}

In Section \ref{sec:rvs}, we motivated the axioms of our system $\LogB$ in part by the fact that they allowed us to achieve a reduction of belief to knowledge-and-knowability in the spirit of Stalnaker's result. $\LogB$ includes several of Stalnaker original axioms (or modifications thereof), but also two new schemes: weak factivity (wF) and confident belief (CB). As noted, if we forget the distinction between knowledge and knowability, each of these schemes holds in $\mathsf{Stal}$ (Proposition \ref{pro:for}). Nonetheless, in our tri-modal logic these two principles do not follow from the others: one can adopt (our translations of) Stalnaker's original principles while rejecting one or both of (wF) and (CB). In particular, this allows one to essentially accept all of Stalnaker's premises without being forced to the conclusion that belief is reducible to knowledge (or even knowledge-and-knowability). We are therefore motivated to generalize our earlier semantics in order to study weaker logics in which the belief modality is \textit{not} definable and so requires its own semantic machinery.

In this section we do just this: we augment $\LogKK$ with the axiom schemes given in Table \ref{tbl:axi:weak} to form the logic $\wLogB$, and prove that this system is sound and complete with respect to the new semantics defined below. We then consider logics intermediate in strength between $\wLogB$ and $\LogB$---specifically, those obtained by augmenting $\wLogB$ with the axioms (D$_B$) (consistency of belief), (wF), or (CB)---and establish soundness and completeness results for these logics as well.

\begin{table}[htp]
\begin{center}
\begin{tabularx}{\textwidth}{>{\hsize=.6\hsize}X>{\hsize=1.3\hsize}X>{\hsize=1.1\hsize}X}
\toprule
(K$_{B}$) & $\proves B(\phi \imp \psi) \imp (B\phi \imp B\psi)$ & Distribution of belief\\
(sPI) & $\proves B\phi \imp KB\phi$ & Strong positive introspection\\
(KB) & $\proves K\phi \imp B \phi$ & Knowledge implies belief\\
(RB) & $\proves B\phi \imp B\Box\phi$ & Responsible belief\\
\bottomrule
\end{tabularx}
\end{center}
\caption{Additional axiom schemes for $\wLogB$} \label{tbl:axi:weak}
\end{table}%

As before, we rely on topological subset models $\cX=\Model$ for the requisite semantic structure (see Section \ref{section:subspace}); however, we define the evaluation of formulas with respect to \emph{epistemic-doxastic (e-d) scenarios}, which are tuples of the form $(x, U, V)$ where $(x, U)$ is an epistemic scenario, $V \in \mathcal{T}$, and $V\subseteq U$. We call $V$ the \emph{doxastic range}.%
\footnote{If we want to insist on \emph{consistent} beliefs, we should add the axiom (D$_B$): $B \phi \lthen \MB\phi$ (or, equivalently, $\MB\T$) and require that $V \neq \emptyset$. We begin with the more general case, without these assumptions.}

The semantic evaluation for the primitive propositions and the Boolean connectives is defined as usual; for the modal operators, we make use of the following semantic clauses:
$$
\begin{array}{lcl}
(\X,x,U,V) \models K \phi & \textrm{ iff } & U = \val{\phi}^{U,V}\\
(\X,x,U,V) \models \Box \phi & \textrm{ iff } & x \in \int(\val{\phi}^{U,V})\\
(\X,x,U,V) \models B \phi & \textrm{ iff } & V \subseteq \val{\phi}^{U,V},
\end{array}
$$
where
$$\val{\phi}^{U,V} = \{x \in U \: : \: (\X,x,U,V) \models \phi\}.$$

Thus, the modalities $K$ and $\Box$ are interpreted essentially as they were before, while the modality $B$ is rendered as universal quantification over the doxastic range. Intuitively, we might think of $V$ as the agent's ``conjecture'' about the way the world is, typically stronger than what is guaranteed by her evidence-in-hand $U$. On this view, states in $V$ might be conceptualized as ``more plausible'' than states in $U \mysetminus V$ from the agent's perspective, with belief being interpreted as truth in all these more plausible states (see, e.g., \cite{grove88,vanbenthemBR,QualitativeBR,vanbenthem-smets,vanDitmarsch2006} for more details on plausibility models for belief).
Note that we do not require that $x \in V$; this corresponds to the intuition that the agent may have false beliefs. Note also that none of the modalities alter either the epistemic or the doxastic range; this is essentially what guarantees the validity of the strong introspection axioms.%
\footnote{We could, of course, consider even more general semantics that do not validate these axioms, but as our goal here is to understand the role of weak factivity and confident belief in the context of Stalnaker's reduction of belief to knowledge, we leave such investigations to future work.}

In order to distinguish these semantics from those previous, we refer to them as \emph{epistemic-doxastic (e-d) semantics} for topological subset spaces.

\shortv{
\begin{thm}
$\wLogB$ is a sound and complete axiomatization of $\LangKKB$ with respect to the class of all topological subset spaces under e-d semantics.
\end{thm}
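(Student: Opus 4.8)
The plan is to split the claim into soundness and completeness, handling soundness by a direct semantic verification and completeness by a canonical-model style argument adapted to the new e-d scenarios. For \textbf{soundness}, I would first recall that $\LogKK$ is sound with respect to topological subset spaces (Theorem \ref{thm:bjo}), and observe that under e-d semantics the clauses for $K$ and $\Box$ are essentially unchanged (the doxastic range $V$ plays no role in them beyond being carried along), so all the $\mathsf{S5}_K$, $\mathsf{S4}_\Box$, and (KI) axioms remain valid by the same arguments. It then remains to check the four new schemes in Table \ref{tbl:axi:weak}. Each is a short computation: (K$_B$) follows because $V \subseteq \val{\phi \imp \psi}^{U,V}$ and $V \subseteq \val{\phi}^{U,V}$ force $V \subseteq \val{\psi}^{U,V}$; (sPI) follows because neither $K$ nor $B$ alters $U$ or $V$, so $V \subseteq \val{\phi}^{U,V}$ holding at one point of $U$ makes $B\phi$ true throughout $U$; (KB) follows since $U = \val{\phi}^{U,V}$ and $V \subseteq U$ give $V \subseteq \val{\phi}^{U,V}$; and (RB) follows because $V \subseteq \val{\phi}^{U,V}$ together with $V$ being open yields $V \subseteq \int(\val{\phi}^{U,V}) = \val{\Box\phi}^{U,V}$, i.e.\ $B\Box\phi$.

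For \textbf{completeness}, the natural route is a canonical model construction. I would build maximal $\wLogB$-consistent sets and use the $K$-modality to define an $\mathsf{S5}$ equivalence relation (fixing the candidate epistemic range $U$), the $\Box$-modality to generate the topology on each $K$-cluster exactly as in Bjorndahl's completeness proof for $\LogKK$, and then use the $B$-modality together with (sPI) and (KB) to carve out a doxastic range $V$ inside each $U$. The key structural facts to exploit are that (sPI) makes belief constant across the $K$-cluster (so $V$ is well-defined relative to $U$), that (KB) forces $V \subseteq U$, and that (RB) forces $V$ to be open in the generated topology. Concretely, within a $K$-cluster one sets $V$ to be the set of maximal consistent sets $\Gamma$ such that $\phi \in \Gamma$ whenever $B\phi$ belongs to the (common) cluster; (KB) gives $V \subseteq U$ and (RB) gives openness of $V$, so that $(x,U,V)$ is a genuine e-d scenario.

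The main obstacle, and the step I would devote the most care to, is verifying openness of the doxastic range $V$ in the topology $\cT_{R^+}$ generated by $\Box$, and more generally showing that the canonical e-d scenarios really satisfy the constraint $V \in \cT$ with $V \subseteq U$. It is instructive that in the reducible logic $\LogB$ the extra axioms (wF) and (CB) were exactly what pinned $V$ down to the dense-interior set $\cl(\int(\cdot))$; here, without them, $V$ is a free parameter, and the burden is to show the weaker axioms (RB) and (KB) suffice to realize \emph{some} legitimate open $V \subseteq U$ yielding the desired truth lemma. I expect the truth lemma itself, proved by induction on formula complexity, to go through routinely for the Boolean, $K$, and $\Box$ cases by reusing the $\LogKK$ argument, with the only genuinely new case being $B\phi$, which reduces precisely to the definition of $V$ and the clause $V \subseteq \val{\phi}^{U,V}$. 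A convenient alternative that sidesteps some bookkeeping would be to piggyback on the relational completeness of these weaker logics (if available) and then transfer to topological e-d models via an Alexandroff-style construction analogous to Lemma \ref{lem:bru}, defining $V = R_B(x)$ from a canonical belief relation and reading off openness from transitivity; I would pursue whichever of these makes the openness-of-$V$ verification cleanest.
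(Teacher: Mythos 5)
Your soundness half matches the paper's proof exactly: the non-doxastic axioms remain valid because the clauses for $K$ and $\Box$ ignore $V$, and the four checks for (K$_B$), (sPI), (KB), (RB) are precisely the computations in the paper's Theorem \ref{thm:sound:wEL}. The completeness half also starts on the same track as the paper: maximal $\wLogB$-consistent sets, the $\mathsf{S5}$ relation $\sim$ from $K$ giving $U=[x]$, the canonical relation $xRy$ iff $B\phi\in x$ implies $\phi\in y$ giving $V=R(x)$, and the use of (sPI), (KB), (4$_K$), (5$_K$) to get $R(x)\subseteq[x]$ and constancy of $R$ on $\sim$-classes (the paper's Lemma \ref{lemma:mcs}).

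However, there is a genuine gap at exactly the step you identify as the main obstacle: openness of $V=R(x)$. You propose to generate the topology ``exactly as in Bjorndahl's completeness proof,'' i.e.\ from basic sets of the form $[x]\cap\widehat{\Box\phi}$, and to ``read off openness from (RB).'' This does not work: (RB) only tells you that $B\phi\in x$ implies $R(x)\subseteq\widehat{\Box\phi}$, i.e.\ it places $R(x)$ \emph{inside} an intersection of basic opens; it gives no way to express $R(x)$ as a \emph{union} of them, so $R(x)$ need not be open in that topology. The paper's resolution is to enlarge the canonical basis to $\cB=\{[x]\cap\widehat{\Box\varphi}\}\cup\{R(x)\cap\widehat{\Box\varphi}\}$, which makes $R(x)$ open by construction (take $\varphi=\top$). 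The cost of this move---and the real technical content of the completeness proof---is that the $\Box$-case of the Truth Lemma (Lemma \ref{truth:lemma}) must be re-verified for basic neighbourhoods of the new form $R(x)\cap\widehat{\Box\psi}$; it is \emph{there} that (RB) is actually used, to pass from $\vdash\bigwedge_{\chi\in\Gamma}\chi\rightarrow(\Box\psi\rightarrow\varphi)$ with $\Gamma\subseteq\{\chi\mid B\chi\in x\}$ to $\Box(\Box\psi\rightarrow\varphi)\in x$ via the fact that $\{\Box\chi\mid B\chi\in x\}\subseteq x$. Your proposal omits this verification entirely, and your fallback route via a relational completeness theorem for $\wLogB$ is not something the paper establishes or has available off the shelf.
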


Call an e-d scenario $(x, U, V)$ \emph{consistent} if $V \neq \emptyset$, and call it \emph{dense} if $V$ is dense in $U$ (i.e., if $U \subseteq \cl(V)$).

\begin{theorem}
$\wLogB + \emph{\textrm{(D$_B$)}}$ is a sound and complete axiomatization of $\LangKKB$ with respect to the class of all topological subset spaces under e-d semantics for consistent e-d scenarios. $\wLogB + \emph{\textrm{(wF)}}$ is a sound and complete axiomatization of $\LangKKB$ with respect to the class of all topological subset spaces under e-d semantics for dense e-d scenarios.
\end{theorem}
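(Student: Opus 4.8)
The plan is to handle soundness and completeness separately, in each case leaning on the corresponding result for the base logic $\wLogB$ (the preceding theorem) and supplying only the extra work needed to account for the structural restriction on the doxastic range. For \emph{soundness}, note that since $\wLogB$ is sound with respect to \emph{all} e-d scenarios it is a fortiori sound over any subclass, so it suffices to check that each new axiom is valid over the relevant restricted class. For \textrm{(D$_B$)} over consistent scenarios: if $(\X,x,U,V)\models B\phi$ and $(\X,x,U,V)\models B\lnot\phi$, then $V\subseteq\val{\phi}^{U,V}\cap\val{\lnot\phi}^{U,V}=\emptyset$, contradicting $V\neq\emptyset$; hence $B\phi\imp\lnot B\lnot\phi$ holds. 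For \textrm{(wF)} over dense scenarios: if $(\X,x,U,V)\models B\phi$ then $V\subseteq\val{\phi}^{U,V}$, and since $V$ is dense in $U$ we get $x\in U\subseteq\cl(V)\subseteq\cl(\val{\phi}^{U,V})$, i.e.\ $(\X,x,U,V)\models\Diamond\phi$.

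For \emph{completeness} the strategy is to re-run the completeness argument for $\wLogB$ and observe that the added axiom pushes the refuting e-d scenario into the desired subclass. The natural route to the base result passes through a relational semantics for $\LangKKB$ in which $K$ is interpreted by an equivalence relation $\sim$, $\Box$ by a preorder $\sqsubseteq\,\subseteq\,\sim$, and $B$ by a relation $R_B$; the interaction axioms translate into the frame conditions $R_B(w)\subseteq[w]$ (from (KB)), $R_B(w)$ upward-closed under $\sqsubseteq$ (from (RB)), and $R_B$ constant on each $\sim$-class (from (sPI) together with its negative counterpart, which is valid in e-d semantics and hence provable in $\wLogB$). Passing to the Alexandroff topology $\cT_{\sqsubseteq^{+}}$ exactly as in the transformation used for $\mathsf{KD45}_{B}$, a world $x$ gives rise to the e-d scenario $(x,U,V)$ with $U=[x]$ and $V=R_B(x)$; since $R_B$ is $\sqsubseteq$-upward closed and constant on $[x]$, the set $V$ is open, contained in $U$, and the relational clause for $B$ matches the e-d clause $V\subseteq\val{\phi}^{U,V}$. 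A routine induction, paralleling the $\mathsf{KD45}_{B}$ case, then shows that $(x,U,V)$ satisfies exactly the formulas true at $x$ in the relational model.

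It remains to identify the frame conditions induced by the two new axioms and read off the corresponding restriction on $V$. Adding \textrm{(D$_B$)} makes $R_B$ serial, so $V=R_B(x)\neq\emptyset$ and the scenario is consistent. Adding \textrm{(wF)} imposes, at every world $y$, the condition $\sqsubseteq(y)\cap R_B(y)\neq\emptyset$; since $\sqsubseteq(y)$ is the least open neighbourhood of $y$ in $\cT_{\sqsubseteq^{+}}$ and $R_B(y)=R_B(x)=V$ throughout $U=[x]$, this says precisely that every basic open neighbourhood of every point of $U$ meets $V$, i.e.\ $U\subseteq\cl(V)$, so the scenario is dense. Thus a non-theorem of $\wLogB+\textrm{(D$_B$)}$ (resp.\ $\wLogB+\textrm{(wF)}$) is refuted at a consistent (resp.\ dense) e-d scenario, which is exactly the required completeness claim.

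The main obstacle I anticipate is the \textrm{(wF)} correspondence. One must verify that the canonical $R_B$ really satisfies the intersection condition $\sqsubseteq(y)\cap R_B(y)\neq\emptyset$ (rather than a weaker or point-dependent variant), and then argue that this pointwise condition, transported through the Alexandroff topology together with the constancy of $R_B$ on $\sim$-classes, is \emph{equivalent} to density of $V$ throughout all of $U$ and not merely at the distinguished point $x$. Getting the quantifier structure right here, and making sure that it is precisely the $\sqsubseteq$-upward closure of $R_B$ that lets the minimal neighbourhoods $\sqsubseteq(y)$ witness density, is the delicate part; the \textrm{(D$_B$)} case, by contrast, reduces to the classical correspondence between a consistency axiom and seriality and should be routine.
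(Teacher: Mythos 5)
Your proposal is correct, and the soundness half coincides with the paper's: soundness of $\wLogB$ transfers to any subclass of e-d scenarios, and each new axiom is checked directly against consistency (resp.\ density) of the doxastic range, exactly as the paper does. The completeness half, however, takes a genuinely different route. The paper does not go through a relational semantics at all: it builds a canonical \emph{topological} subset model directly over the maximal consistent sets, with the topology generated by the definable sets $[x]\cap\widehat{\Box\varphi}$ and $R(x)\cap\widehat{\Box\varphi}$ (this topology is not Alexandroff), proves the Truth Lemma for the scenarios $(x,[x],R(x))$, and then for each extension just adds one lemma: seriality of the canonical $R$ from (D$_B$) by the usual consistency argument, and density of $R(x)$ in $[x]$ from (wF) by a case analysis over the two kinds of basic opens plus a syntactic argument (using (wF) together with (4$_\Box$)) showing $R(x)\cap\widehat{\Box\varphi}\neq\emptyset$. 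Your plan instead rebuilds the base completeness through a relational canonical model with a preorder $\sqsubseteq$ for $\Box$ and the Alexandroff topology $\cT_{\sqsubseteq^{+}}$, and reads the restriction on $V$ off frame correspondences. This does work: (RB) indeed makes $R_B(x)$ $\sqsubseteq$-upward closed and hence open; the condition $\sqsubseteq(y)\cap R_B(y)\neq\emptyset$ is canonically forced at every maximal consistent set by (wF) plus normality of $B$ and $\Box$ (the set $\{\psi : \Box\psi\in y\}\cup\{\chi : B\chi\in y\}$ is consistent, since otherwise some $B\lnot\psi\in y$ with $\Box\psi\in y$, contradicting (wF)); and, with minimal open neighbourhoods and $R_B$ constant on $[x]$, that pointwise condition is exactly $U\subseteq\cl(V)$ --- so the delicate point you flag resolves as you hoped. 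What your route buys is modularity and a transparent reading of (wF) as a density condition; what it costs is that the relational completeness of $\wLogB$ and the Alexandroff truth lemma are extra machinery that the paper's base theorem does not hand you (its canonical model is not of the form $\cT_{\sqsubseteq^{+}}$), so ``re-running the completeness argument'' in your sense means constructing a new canonical model from scratch, whereas the paper only appends two short lemmas to its existing one.
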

}
\fullv{
We now show that $\wLogB$ is sound and complete with respect to these semantics.

\begin{thm}\label{thm:sound:wEL}
$\wLogB$ is a sound axiomatization of $\LangKKB$ with respect to the class of all topological subset spaces under e-d semantics.
\end{thm}

\begin{proof}
The validity of the axioms without the modality $B$ follows as in Theorem \ref{thm:bjo}, since the only difference here lies in the semantic clause for $B$. Let $\cX=\Model$ be a topological subset model, $(x, U, V)$ an e-d scenario, and $\varphi, \psi \in \LangKKB$.

\begin{itemize}
\item[(K$_{B}$)] Suppose $(x, U, V)\models B(\varphi\rightarrow \psi)$ and $(x, U, V)\models B\varphi$. This means $V\subseteq \br{\varphi\rightarrow \psi}^{U, V} = (U\setminus\br{\varphi}^{U, V})\cup \br{\psi}^{U, V}$ and $V\subseteq \br{\varphi}^{U, V}$, from which we obtain $V\subseteq \br{\psi}^{U, V}$, i.e., $(x, U, V)\models B\psi$.

\item [(sPI)] Suppose $(x, U, V)\models B\varphi$. This means $V\subseteq \br{\varphi}^{U, V}$. As such, for every $y \in U$ we have $(y,U,V) \models B\phi$, which
implies that $\br{B\varphi}^{U, V}=U$, so $(x, U, V)\models KB\varphi$.


\item [(KB)] Suppose $(x, U, V)\models K\varphi$. This means $\br{\varphi}^{U, V}=U$. As $V\subseteq U$ (by definition of $(x, U, V)$), we obtain $(x, U, V)\models B\varphi$.

\item [(RB)] Suppose $(x, U, V)\models B\varphi$. This means $V\subseteq \br{\varphi}^{U, V}$. Thus, since $V$ is open, we obtain $V\subseteq  \int(\br{\varphi}^{U, V})$. As $\int(\br{\varphi}^{U, V})=\br{\Box\varphi}^{U, V}$, we have $V\subseteq  \br{\Box\varphi}^{U, V}$, i.e., $(x, U, V)\models B\Box\varphi$. \qedhere

\end{itemize}
\end{proof}

Completeness follows from a fairly straightforward canonical model construction, similar to the completeness proof of $\LogKK$ in \cite{bjorndahl}. Roughly speaking, we extend the canonical model in \cite{bjorndahl} in order to be able to prove the truth lemma for the belief modality $B$.

Let $X^c$ be the set of all maximal $\wLogB$-consistent sets of formulas. Define binary relations $\sim$ and $R$ on $X^c$ by
$$x\sim y \ \mbox{iff} \ (\forall\varphi\in \LangKKB)(K\varphi\in x \ \Leftrightarrow \ K\varphi\in y)\footnote{In fact, this is equivalent to  $(\forall\varphi\in\LangKKB)(K\varphi\in x \ \Rightarrow \ \varphi\in y)$, since $K$ is an $\mathsf{S5}$ modality.}$$ and
$$xR y \ \mbox{iff} \ (\forall\varphi\in \LangKKB)(B\varphi\in x \ \Rightarrow \ \varphi\in y).$$
It is not hard to see that $\sim$ is an equivalence relation, hence, it induces equivalence classes on $X^c$. Let $[x]$ denote the equivalence class of $x$ induced by the relation $\sim$ and let $R(x) = \{y\in X^c \ | \ xRy\}$. Define $\widehat{\varphi}=\{y\in X^c \ | \ \varphi\in y\}$, so $x\in\widehat{\varphi}$ iff $\varphi\in x$.

The axioms of $\wLogB$ that relate $K$ and $B$ induce the following important links between $\sim$ and $R$:

\begin{lemma}\label{lemma:mcs}
For any $x, y\in X^c$, the following holds:
\begin{enumerate}
\item \label{lemma:mcs:1} if $x\sim y$ then $(\forall\varphi\in \LangKKB)(B\varphi\in x \ \mbox{iff} \ B\varphi \in y)$;
\item \label{lemma:mcs:2} if $x\sim y$ then $R(x)=R(y)$;
\item \label{lemma:mcs:3} $R(x)\subseteq [x]$;
\item  \label{lemma:mcs:4} either $R(x)\cap R(y)=\emptyset$ or $R(x)=R(y)$.
\end{enumerate}
\end{lemma}

\begin{proof}
Let $x, y\in X^c$.
\begin{enumerate}
\item Suppose $x\sim y$ and let $\varphi\in\LangKKB$ such that $B\varphi\in x$. By (sPI), we have $KB\varphi\in x$. As $x\sim y$, we have  $KB\varphi\in y$. Thus, by (T$_K$), we conclude $B\varphi\in y$. The other direction follows analogously.

\item Suppose $x\sim y$ and take $z\in R(x)$; let $\varphi\in\LangKKB$ be such that $B\varphi\in y$. Since $x\sim y$, by Lemma \ref{lemma:mcs}.\ref{lemma:mcs:1}, we have $B\varphi\in x$. Therefore,  $z\in R(x)$ implies that $\varphi\in z$. This shows that $z\in R(y)$, hence $R(x)\subseteq R(y)$. The reverse inclusion follows similarly.

\item Let $z\in R(x)$ and $\varphi\in\LangKKB$; we will show that $K\varphi\in x$ iff $K\phi\in z$. Suppose $K\varphi\in x$. Then, by (4$_K$), we have $KK\varphi\in x$. This implies, by (KB), that $BK\varphi\in x$. Hence, since $z\in R(x)$, we obtain $K\varphi\in z$. For the converse, suppose $K\phi\not\in x$, i.e., $\neg K\phi\in x$. Then, by (5$_K$), we have $K\neg K\varphi\in x$. Again by (KB), we obtain $B\neg K\varphi\in x$. Thus, since $z\in R(x)$, we obtain $\neg K\phi\in z$, i.e., $K\phi \not\in z$. We therefore conclude that $z\in [x]$, hence $R(x)\subseteq [x]$.
 

\item Suppose $R(x)\cap R(y)\not=\emptyset$. This means there is $z\in X^c$ such that $z\in R(x)$ and $z\in R(y)$. Then, by Lemma \ref{lemma:mcs}.\ref{lemma:mcs:3}, we have $x\sim z$ and $y\sim z$. Thus, by Lemma \ref{lemma:mcs}.\ref{lemma:mcs:2}, $R(x)=R(z)=R(y)$. \qedhere

\end{enumerate}
\end{proof}

Let $\cT^c$ be the topology on $X^{c}$ generated by the collection
$$\cB=\{[x]\cap \widehat{\Box\varphi} \ | \ x\in X^c, \varphi\in\LangKKB\}\cup \{R(x)\cap \widehat{\Box\varphi} \ | \ x\in X^c, \varphi\in\LangKKB\}.$$
It is not hard to prove that $\cB$ is in fact a basis for $\cT^c$. Define the \emph{canonical model} $\cX^c$ to be the tuple $(X^c, \cT^c, v^c)$, where $v^{c}(p) = \widehat{p}$. Observe that since $\widehat{\Box\T}=X^c$, we have $[x], R(x)\in \cT^c$ for all $x\in X^c$; therefore, by Lemma \ref{lemma:mcs}.\ref{lemma:mcs:3}, for each $x\in X^c$ the tuple $(x, [x], R(x))$ is an e-d scenario.

\begin{lemma}[Truth Lemma] \label{truth:lemma}
For every $\varphi\in\LangKKB$ and for each $x\in X^c$, $$\varphi\in x \ \mbox{iff} \ (\cX^c, x, [x], R(x))\models \varphi.$$
\end{lemma}

\begin{proof}
The proof proceeds as usual by induction on the structure of $\phi$; cases for the primitive propositions and the Boolean connectives are elementary and the case  for $K$ is presented in \cite[Theorem 1, p. 16]{bjorndahl}. So assume inductively that the result holds for $\phi$; we must show that it holds also for $\Box\varphi$ and $B\phi$. 

\begin{itemize}
\item [] Case for $\Box\phi$: 

\begin{itemize}
\item[($\Rightarrow$)] Let $\Box\phi\in x$. Then, observe that  $x\in \widehat{\Box\varphi}\cap [x]\subseteq \{y\in[x] \ | \ \phi\in y\}$ (by $x\in[x]$ and (T$_\Box$)). Since $\widehat{\Box\varphi}\cap [x]$ is open, it follows that 

\begin{equation}\label{eqn:int}
x\in\int \{y\in[x] \ | \ \phi\in y\}
\end{equation} By (IH), we also have 
\begin{align*}
\{y\in[x] \ | \ \phi\in y\} & = \{y\in[x] \ | \ (y, [y], R(y))\models\phi\} \notag\\
&  = \{y\in[x] \ | \ (y, [x], R(x))\models \phi\}  \tag{Lemma  \ref{lemma:mcs}}\\
& = \br{\phi}^{[x], R(x)} \notag
\end{align*}
Therefore, by (\ref{eqn:int}), we conclude that  $x\in \int (\br{\phi}^{[x], R(x)})$, i.e., $(x, [x], R(x))\models \Box\phi$.

\item[($\Leftarrow$)] Now suppose  that $(x, [x], R(x))\models \Box\phi$. This means, by the semantics, that $x\in\int (\br{\phi}^{[x], R(x)})$. As above, this is equivalent to $x\in\int \{y\in[x] \ | \ \phi\in y\}$. It then follows that there exists $U\in \cB$ such that $$x\in U\subseteq \{y\in[x] \ | \ \phi\in y\}.$$  
By definition of $\cB$, the basic open neighbourhood $U$ can be of the following forms:
\begin{enumerate}
\item $U=[z]\cap\widehat{\Box\psi}$, for some $z\in X^c$ and $\psi\in\LangKKB$;
\item $U=R(z)\cap\widehat{\Box\psi}$, for some $z\in X^c$ and $\psi\in\LangKKB$.
\end{enumerate}

However,  since $x\in U$, we can simply replace the above cases by:
\begin{enumerate}
\item $U=[x]\cap\widehat{\Box\psi}$, for some $\psi\in\LangKKB$;
\item $U=R(x)\cap \widehat{\Box\psi}$, for some $\psi\in\LangKKB$, respectively.
\end{enumerate}
The case for $U=[x]\cap\widehat{\Box\psi}$ follows similarly as in \cite[Theorem 1, p. 16]{bjorndahl}. 
We here only prove the case for $U=R(x)\cap \widehat{\Box\psi}$. We therefore have 
\begin{equation}\label{eqn:case2}
x\in R(x)\cap \widehat{\Box\psi}\subseteq \{y\in[x] \ | \ \phi\in y\}
\end{equation}

This means that for every $y\in R(x)$, if $\Box\psi\in y$ then $\varphi\in y$. Thus, we obtain that $\{\chi \ | \ B\chi\in x\}\cup \{\neg(\Box\psi\rightarrow \varphi)\}$ is an inconsistent set. Otherwise, it could be extended to a maximally consistent set $y$ such that $y\in R(x)$, $\Box\psi\in y$ and $\phi\not\in y$, contradicting (\ref{eqn:case2}). Thus, there exists a finite subset $\Gamma\subseteq \{\chi \ | \ B\chi\in x\}$ such that  $$\vdash  \bigwedge_{\chi\in\Gamma}\chi \rightarrow (\Box\psi\rightarrow\varphi),$$ 
which implies by $\mathsf{S4}_\Box$ that 

$$\vdash  \bigwedge_{\chi\in\Gamma}\Box\chi \rightarrow \Box(\Box\psi\rightarrow\varphi).$$ 

Observe that, since $x\in R(x)$, we have $\{\chi \ | \ B\chi\in x\}\subseteq x$. Moreover, by (RB), we also obtain that $\{\Box\chi \ | \ B\chi\in x\}\subseteq\{\chi \ | \ B\chi\in x\}\subseteq x$. We therefore obtain that $\bigwedge_{\chi\in\Gamma}\Box\chi \in x$, thus, that $\Box(\Box\psi\rightarrow\varphi)\in x$. Then, by $\mathsf{S4}_\Box$, we have $\Box\psi\rightarrow \Box\varphi \in x$. As $x\in\widehat{\Box\psi}$, we conclude $\Box\varphi\in x$.

\end{itemize}

\item []Case for $B\phi$:
\begin{itemize}
\item[($\Rightarrow$)] Let $B\varphi\in x$. Then, by defn. of $R$, we have $\varphi\in y$ for all $y\in R(x)$. Then, by (IH), we obtain $(\forall y\in R(x))(y, [y], R(y))\models \varphi$. By Lemma \ref{lemma:mcs}.\ref{lemma:mcs:3}, $y\in R(x)$ implies $x\sim y$. Thus, as $[y]=[x]$ and $R(x)=R(y)$ (Lemma \ref{lemma:mcs}.\ref{lemma:mcs:2}), we obtain, $(\forall y\in R(x))(y, [x], R(x))\models \varphi$. This means, $R(x)\subseteq \br{\varphi}^{[x], R(x)}$, thus, $(x, [x], R(x))\models B\varphi$.

\item[($\Leftarrow$)]  Let $B\varphi\not\in x$. This implies, $\{\psi \ | \ B\psi\in x\}\cup\{\neg\varphi\}$ is consistent. Otherwise, there exists a finite subset $\Gamma \subseteq \{\psi \ | \ B\psi\in x\}$ such that $$\vdash \bigwedge_{\chi\in\Gamma}\chi \rightarrow\varphi.$$ Then, by normality of $B$, $$\vdash  \bigwedge_{\chi\in\Gamma}B\chi \rightarrow B\varphi.$$
Since $B\chi\in x$ for all $\chi\in \Gamma$, we have $B\varphi\in x$, contradicting the fact that $x$ is a consistent set.

Then, by Lindenbaum's Lemma, $\{\psi \ | \ B\psi\in x\}\cup\{\neg\varphi\}$ can be extended to a maximally consistent set $y$. $\neg\varphi\in y$ means that $\varphi\not\in y$. Thus, by IH, $(y, [y], R(y))\not\models\varphi$.  Since $\{\psi \ | \ B\psi\in x\}\subseteq y$, we have $y\in R(x)$. This means, by Lemma \ref{lemma:mcs}.\ref{lemma:mcs:3} and Lemma \ref{lemma:mcs}.\ref{lemma:mcs:2}, $[y]=[x]$ and $R(x)=R(y)$. Therefore,  as $[y]=[x]$ and $R(x)=R(y)$, we have $(y, [x], R(x))\not\models\varphi$.  Thus, $y\in R(x)$ but $y\not\in \br{\varphi}^{[x], R(x)}$ implying that $(x, [x], R(x))\not \models B\varphi$. \qedhere
\end{itemize}

\end{itemize}
\end{proof}

Moreover, Lemma \ref{lemma:mcs}.\ref{lemma:mcs:3} guarantees that the evaluation tuple $(x, [x], R(x))$ is of desired kind (more precisely, the construction of the canonical  model guarantees that $R(x)\in \cT^c$, for all $x\in X^c$ and the aforementioned lemma makes sure that $R(x)\subseteq [x]$.) 

\begin{corollary}\label{cor:comp:wEL}
$\wLogB$ is a complete axiomatization of $\LangKKB$ with respect to the class of all topological subset spaces under e-d semantics.
\end{corollary}

\begin{proof} 
Let $\varphi\in\LangKKB$ such that $\not \vdash_{\wLogB} \varphi$. Then, $\{\neg \varphi\}$ is consistent and can be extended to a maximally consistent set $x\in X^c$. Then, by Lemma \ref{truth:lemma}, we obtain that $(\cX^c, x, [x], R(x))\not\models\varphi$.
\end{proof}

\subsection{Consistent belief and weak factivity}
In this section we consider two intermediate logical systems between $\wLogB$ and $\LogB$ and provide soundness and completeness results for these logics with respect to e-d semantics. Specifically, we consider the extensions of $\wLogB$ by the axioms of consistency of belief (D$_B$) and weak factivity (wF).

One seemingly unorthodox aspect of these extended logics and their corresponding semantics is that the additional axioms do not impose conditions on the structure of the topological subset space models, but they instead enforce conditions on the legitimate e-d scenarios $(x, U, V)$ that render the new axioms sound. We call an e-d scenario  $(x, U, V)$ \emph{consistent} if $V\not =\emptyset$, and it is called \emph{dense} if $V$ is dense in $U$ (i.e., if $U\subseteq cl(V)$). We then show that $\wLogB + \textrm{(D$_B$)}$ and $\wLogB + \textrm{(wF)}$ constitute sound and complete axiomatizations of $\LangKKB$ under the e-d semantics for the appropriate kind of e-d scenarios.


\begin{proposition}
$\wLogB + \textrm{(D$_B$)}$ is a sound axiomatization of $\LangKKB$ with respect to the class of all topological subset spaces under e-d semantics for consistent e-d scenarios. 
\end{proposition}
\begin{proof}
The validity of the axioms of $\wLogB$ follows as in Theorem \ref{thm:sound:wEL}, we only need to prove the validity of (D$_B$) for consistent e-d scenarios.  Let $\cX=\Model$ be a topological subset model, $(x, U, V)$ a consistent e-d scenario, and $\varphi \in \LangKKB$.

\begin{itemize}
\item[(D$_{B}$)] Suppose $(x, U, V)\models B\varphi$.  This means $V\subseteq \br{\varphi}^{U, V}$. Then, since $V\not =\emptyset$, we have $V\not\subseteq  U\setminus \br{\varphi}^{U, V}$, therefore, $(x, U, V)\models \neg B\neg \varphi$. \qedhere

\end{itemize}

\end{proof}

The completeness proof follows similarly to the completeness proof of $\wLogB$ and the only difference lies in the requirement of a \emph{consistent} e-d scenario in the corresponding Truth Lemma. We therefore only need to prove that the canonical epistemic scenario $(x, [x], R(x))$ of the system $\wLogB + \textrm{(D$_B$)}$ is consistent, i.e., we need to show that $R(x)\not =\emptyset$ for any maximally consistent sets of $\wLogB + \textrm{(D$_B$)}$. The canonical model for the system $\wLogB + \textrm{(D$_B$)}$ is constructed as usual, exactly the same way as the one for $\wLogB$.

\begin{lemma}\label{lemma:serial}
The relation $R$ of the canonical model $\cX^c=(X^c, \cT^c, \nu^c)$  for the system $\wLogB + \textrm{(D$_B$)}$ is serial.
\end{lemma}
\begin{proof}
For any $x\in X^c$, the set $\{\psi \ | \ B\psi\in x\}$ is consistent. Otherwise, there is a finite subset $\Gamma \subseteq \{\psi \ | \ B\psi\in x\}$ and $\varphi\in \{\psi \ | \ B\psi\in x\}$  such that $$\vdash \bigwedge_{\chi\in\Gamma}\chi \rightarrow \neg\varphi.$$ Then, by normality of $B$, $$\vdash  \bigwedge_{\chi\in\Gamma}B\chi \rightarrow B\neg\varphi.$$ Since $B\chi\in x$ for all $\chi\in \Gamma$, we have $B\neg \varphi\in x$. On the other hand, since $B\varphi\in x$ and $\vdash B\varphi\rightarrow \neg B\neg \varphi$ ((D$_B$)-axiom), we obtain $\neg B\neg\varphi\in x$, contradicting the fact that $x$ a maximally consistent set. Therefore, $\{\psi \ | \ B\psi\in x\}$ can be extended to a maximally consistent set $y$ and, since $\{\psi \ | \ B\psi\in x\}\subseteq y$, we have $xRy$.
\end{proof}

\begin{corollary}\label{cor:serial}
Let  $\cX^c=(X^c, \cT^c, \nu^c)$  be the canonical model of the system $\wLogB + \textrm{(D$_B$)}$. Then, for all $x\in X^c$, we have $R(x)\not =\emptyset$. 
\end{corollary}

\begin{proposition}
$\wLogB + \textrm{(D$_B$)}$ is a complete axiomatization of $\LangKKB$ with respect to the class of all topological subset spaces under e-d semantics for consistent e-d scenarios. 
\end{proposition}
\begin{proof}
Follows from Corollary \ref{cor:serial} similarly to the proof  of Corollary \ref{cor:comp:wEL}.
\end{proof}


\begin{proposition}
$\wLogB + \textrm{(wF)}$ is a sound axiomatization of $\LangKKB$ with respect to the class of all topological subset spaces under e-d semantics for dense e-d scenarios. 
\end{proposition}
\begin{proof}
The validity of the axioms of $\wLogB$ follows as in Theorem \ref{thm:sound:wEL}, we only need to prove the validity of (wF) for dense e-d scenarios.  Let $\cX=\Model$ be a topological subset model, $(x, U, V)$ a dense e-d scenario, and $\varphi \in \LangKKB$.

\begin{itemize}
\item[(wF)] Suppose $(x, U, V)\models B\varphi$.  This means $V\subseteq \br{\varphi}^{U, V}$. Then, since $x\in U\subseteq cl(V)$, we obtain $x\in U\subseteq cl(\br{\varphi}^{U, V})$, meaning that $(x, U, V)\models \Diamond\varphi$. \qedhere
\end{itemize}

\end{proof}

The completeness result for  $\wLogB + \textrm{(wF)}$ follows similarly to the above case: the only key step we need to show is that the canonical epistemic scenario $(x, [x], R(x))$ of the system $\wLogB + \textrm{(D$_B$)}$ is dense.

\begin{lemma}
Let  $\cX^c=(X^c, \cT^c, \nu^c)$  be the canonical model of the system $\wLogB + \textrm{(wF)}$. Then, for all $x\in X^c$, we have that $R(x)$ is dense in $[x]$, i.e., that $[x]\subseteq \cl(R(x))$.  
\end{lemma}

\begin{proof}
Let $x\in X^c$ and $y\in [x]$. We want to show that $y\in \cl(R(x))$, i.e., for all $U\in \cB$ with $y\in U$, we should show that $U\cap R(x)\not =\emptyset$ holds.  Let $U\in\cB$ such that $y\in U$. By definition of $\cB$, the basic open neighbourhood $U$ can be of the following forms:
\begin{enumerate}
\item $U=R(z)\cap\widehat{\Box\varphi}$, for some $z\in X^c$ and $\varphi\in\LangKKB$;
\item $U=[z]\cap\widehat{\Box\varphi}$, for some $z\in X^c$ and $\varphi\in\LangKKB$.
\end{enumerate}

However,  since $y\in[x]$ and $y\in U$, we can simply replace the above cases by:
\begin{enumerate}
\item $U=R(x)\cap \widehat{\Box\varphi}$, for some $\varphi\in\LangKKB$;
\item $U=[x]\cap\widehat{\Box\varphi}$, for some $\varphi\in\LangKKB$, respectively.
\end{enumerate}

If (1) is the case, the result follows trivially since $y\in U=R(x)\cap  \widehat{\Box\varphi}=U\cap R(x)$.

If (2) is the case, $U\cap R(x)=([x]\cap\widehat{\Box\varphi})\cap R(x)=\widehat{\Box\varphi}\cap R(x)$ (by Lemma \ref{lemma:mcs}.\ref{lemma:mcs:3}). Therefore, we need to show that $R(x)\cap\widehat{\Box\varphi}\not=\emptyset$:


Consider the set $\{\psi \ | \ B\psi\in y\}\cup\{\Box\varphi\}$. This set is consistent, otherwise, there exists a finite subset $\Gamma \subseteq \{\psi \ | \ B\psi\in y\}$ such that $$\vdash \bigwedge_{\chi\in\Gamma}\chi \rightarrow\Diamond\neg \varphi.$$ Then, by normality of $B$, $$\vdash  \bigwedge_{\chi\in\Gamma}B\chi \rightarrow B\Diamond\neg\varphi.$$ We also have

\begin{table}[htp]
\begin{tabularx}{.75\textwidth}{>{\hsize=.1\hsize}X>{\hsize=1.5\hsize}X>{\hsize=1.0\hsize}X}
1. & $\vdash B\Diamond\neg\varphi \rightarrow \Diamond \Diamond\neg\varphi$ & (wF)\\
2. & $\vdash\Diamond \Diamond\neg\varphi\rightarrow \Diamond\neg\varphi$ & (4$_\Box$)\\
3. & $\vdash B\Diamond\neg\varphi \rightarrow \Diamond\neg\varphi$ & CPL: 1, 2\\

\end{tabularx}
\end{table}

\ \\

Hence, 

$$\vdash  \bigwedge_{\chi\in\Gamma}B\chi \rightarrow \Diamond\neg\varphi.$$

Therefore, since $B\chi\in y$ for all $\chi\in \Gamma$, we have $\Diamond\neg \varphi\in y$. But we know that $\Box \varphi (:=\neg\Diamond\neg\varphi)\in y$ (since $y\in U=[x]\cap\widehat{\Box\varphi}$), contradicting the maximal consistency of $y$. Therefore, $\{\psi \ | \ B\psi\in y\}\cup\{\Box\varphi\}$ is consistent.  Moreover, by Lindenbaum's Lemma, it can be extended to a maximally consistent set $z$. Therefore, as $\{\psi \ | \ B\psi\in y\}\subseteq z$, we have $z\in R(y)=R(x)$ (since $y\in [x]$,  we have $R(x)=R(y)$ (by Lemma \ref{lemma:mcs}.\ref{lemma:mcs:2})). Moreover, $\Box\varphi\in z$, i.e., $z\in \widehat{\Box\varphi}$. We therefore conclude that $z\in\widehat{\Box\varphi}\cap R(x)\not =\emptyset$.
\end{proof}

\begin{corollary}\label{cor:dense}
Let  $\cX^c=(X^c, \cT^c, \nu^c)$  be the canonical model of the system $\wLogB + \textrm{(wF)}$. Then, for all $x\in X^c$, the e-d scenario $(x, [x], R(x))$ is dense.
\end{corollary}

\begin{proposition}
$\wLogB + \textrm{(wF)}$ is a complete axiomatization of $\LangKKB$ with respect to the class of all topological subset spaces under e-d semantics for dense e-d scenarios. 
\end{proposition}
\begin{proof}
Follows from Corollary \ref{cor:dense} similarly to the proof  of Corollary \ref{cor:comp:wEL}.
\end{proof}


\begin{proposition} \label{pro:comparison}
For every $\phi \in \LangKKB$, 
\begin{enumerate}
\item if $\proves_{\wLogB + \textrm{(D$_B$)}} \phi$, then $\proves_{\wLogB + \textrm{(wF)}} \phi$,   
\item if $\proves_{\mathsf{KD45}_{B}} \phi$, then $\proves_{\wLogB + \textrm{(wF)}} \phi$.
\end{enumerate}
\end{proposition}

\begin{proof} \ \\
\begin{enumerate}
\item See the proof of  Proposition \ref{pro:emb}.
\item It suffices to show that $\wLogB + \textrm{(wF)}$ derives all the axioms and the rule of inference of $\mathsf{KD45}_{B}$ and all of them except for (5$_B$) are derivable as in the proof of Proposition \ref{pro:emb}. For (5$_B$), we have: 
\draft{To be supplied}

\end{enumerate}
\end{proof}
}

\subsection{Confident belief} \label{section:CB}

It turns out that the strong semantics for the belief modality presented in Section \ref{sec:rvs}, namely
$$
\begin{array}{lcl}
(\X,x,U) \models B \phi & \textrm{ iff } & U \subseteq \cl(\int(\val{\phi}^{U})),
\end{array}
$$
does \textit{not} arise as a special case of our new e-d semantics: there is no condition (e.g., density) one can put on the doxastic range $V$ so that these two interpretations of $B\phi$ agree in general.
Roughly speaking, this is because the formulas of the form $\Box \phi \lor \Box \lnot \Box \phi$ correspond to the open and dense sets, but in general one cannot find a (nonempty) open set $V$ that is simultaneously contained in every open, dense set. As such, one cannot hope to validate (CB) in the e-d semantics presented above without also validating $B\falsum$.

However, we can validate (CB) on topological subset spaces by altering the semantic interpretation of the belief modality so that, intuitively, it ``ignores'' \emph{nowhere dense sets}.\footnote{A subset $S$ of a topological space is called \emph{nowhere dense} if its closure has empty interior: $\int(\cl(S)) = \emptyset$.} Loosely speaking, this works because nowhere dense sets are exactly the complements of sets with dense interiors.

More precisely, we work with the same notion of e-d scenarios as before, but use the following semantics clauses:
$$
\begin{array}{lcl}
(\X,x,U,V) \amods p & \textrm{ iff } & x \in v(p)\\
(\X,x,U,V) \amods \lnot \phi & \textrm{ iff } & (\X,x,U,V) \notamods \phi\\
(\X,x,U,V) \amods \phi \land \psi & \textrm{ iff } & (\X,x,U,V) \amods \phi \textrm{ and } (\X,x,U,V) \amods \psi\\
(\X,x,U,V) \amods K \phi & \textrm{ iff } & U = \aval{\phi}^{U,V}\\
(\X,x,U,V) \amods \Box \phi & \textrm{ iff } & x \in \int(\aval{\phi}^{U,V})\\
(\X,x,U,V) \amods B \phi & \textrm{ iff } & V \subseteq^{*} \aval{\phi}^{U,V},
\end{array}
$$
where
$$\aval{\phi}^{U,V} = \{x \in U \: : \: (\X,x,U,V) \amods \phi\},$$
and we write $A \subseteq^{*} B$ iff $A \mysetminus B$ is nowhere dense. In other words, we interpret everything as before with the exception of the belief modality, which now effectively quantifies over \textit{almost all} worlds in the doxastic range $V$ rather than over all worlds.\footnote{Given a subset $A$ of a topological space $X$, we say that a property $P$ holds for \emph{almost all} points in $A$ just in case $A \subseteq^{*} \{x \: : \: P(x)\}$.}


\shortv{
\begin{theorem}
$\wLogB + \emph{\textrm{(CB)}}$ is a sound and complete axiomatization of $\LangKKB$ with respect to the class of all topological subset spaces under e-d semantics using the semantics given above: for all formulas $\phi \in \LangKKB$, if $\amods \phi$, then $\proves_{\wLogB + \emph{\textrm{(CB)}}} \phi$. Moreover, $\LogB$ is sound and complete with respect to these semantics for e-d scenarios where $V = U$.
\end{theorem}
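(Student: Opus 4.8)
The plan is to prove both claims by first re-expressing the new ``almost all'' belief clause as a density condition, then running a canonical-model argument for the first (full) claim and a semantic-coincidence argument for the second ($V=U$) claim. The linchpin is the elementary observation that for any open $V$ and any $A\subseteq X$ we have $V\subseteq^{*}A$ (i.e.\ $V\mysetminus A$ is nowhere dense) if and only if $V\subseteq\cl(\int(A))$ (i.e.\ $\int(A)$ is dense in $V$): for the forward direction $V\cap(X\mysetminus\cl(V\mysetminus A))$ is an open subset of $A$ dense in $V$, and for the converse $V\mysetminus A\subseteq V\mysetminus\int(A)\subseteq\cl(\int(A))\mysetminus\int(A)$, a nowhere dense boundary. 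Hence the clause reads $(\X,x,U,V)\amods B\phi$ iff $V\subseteq\cl(\int(\aval{\phi}^{U,V}))$. With this reformulation, soundness is routine: the collection $\{A:V\subseteq\cl(\int(A))\}$ is a filter (a finite intersection of opens dense in $V$ is dense in $V$), which gives (K$_{B}$); the clause does not mention $x$, which gives (sPI) and hence the strong introspection principles; $K\phi$ forces $\aval{\phi}^{U,V}=U$, which is open and so dense in $V$, giving (KB); $\int\int=\int$ makes $B\phi$ and $B\Box\phi$ semantically equivalent, giving (RB); and (CB) holds because $\aval{\Box\phi\lor\Box\lnot\Box\phi}^{U,V}$ is an open set dense in $U\supseteq V$, exactly as in the computation validating (CB) in Section \ref{sec:rvs}.

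For completeness of $\wLogB+\textrm{(CB)}$ I would reuse the canonical model built for $\wLogB$: maximal $\wLogB+\textrm{(CB)}$-consistent sets $X^{c}$, the relations $\sim$ (via $K$) and $R$ (via $B$), the topology generated by the sets $[x]\cap\widehat{\Box\psi}$ and $R(x)\cap\widehat{\Box\psi}$, and the evaluation scenario $(x,[x],R(x))$. The bridging lemma relating $\sim$ and $R$, together with the primitive-proposition, Boolean, $K$- and $\Box$-cases of the truth lemma, transfer verbatim, since their proofs use only axioms already present in $\wLogB$ (namely (sPI), (KB), (4$_{K}$), (5$_{K}$), (RB) and $\mathsf{S4}_{\Box}$); in particular the $\Box$-case yields $\int(\aval{\phi}^{[x],R(x)})=\widehat{\Box\phi}\cap[x]$.

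The crux is the belief case of the truth lemma, and its converse is where I expect the main obstacle. The forward direction is easy: if $B\phi\in x$ then $R(x)\subseteq\widehat{\phi}\cap[x]=\aval{\phi}^{[x],R(x)}$, and since $R(x)$ is open it lies inside $\int(\aval{\phi})$, which is therefore trivially dense in $R(x)$, so $(x,[x],R(x))\amods B\phi$. The converse must match the ``almost all'' rather than the ``all'' reading: from $B\phi\notin x$ one obtains some $y\in R(x)$ with $\phi\notin y$, but a lone witness only shows $\int(\aval{\phi})\neq R(x)$, not that it fails to be \emph{dense} in $R(x)$. This is precisely where (CB) is indispensable. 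Since $\phi\notin y$ gives $\Box\phi\notin y$ by (T$_{\Box}$), the instance $\Box\phi\lor\Box\lnot\Box\phi\in y$ furnished by (CB) forces $\Box\lnot\Box\phi\in y$; then the basic open set $O=\widehat{\Box\lnot\Box\phi}\cap R(x)$ contains $y$ and, again by (T$_{\Box}$), is disjoint from $\widehat{\Box\phi}=\int(\aval{\phi})$. Thus $\int(\aval{\phi})$ misses a nonempty open subset of $R(x)$, so it is not dense in $R(x)$ and $(x,[x],R(x))\not\amods B\phi$. Completeness then follows by the usual Lindenbaum argument. The delicate point to get right is exactly this upgrade from a point-witness to an \emph{open} obstruction; everything hinges on (CB) supplying $\Box\lnot\Box\phi$.

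Finally, for the second claim I would observe that when $V=U$ the reformulated clause becomes $U\subseteq\cl(\int(\aval{\phi}^{U,U}))$, which is literally the belief clause of Section \ref{sec:rvs}, while the $K$- and $\Box$-clauses coincide with the earlier ones as well. A straightforward induction on $\phi$ then shows $\aval{\phi}^{U,U}=\val{\phi}^{U}$ for every formula, so a formula is valid on the $V=U$ e-d scenarios under the present semantics if and only if it is valid under the topological subset-space semantics of Section \ref{sec:rvs}. Soundness and completeness of $\LogB$ with respect to the $V=U$ scenarios is then immediate from the soundness and completeness of $\LogB$ already established in Section \ref{sec:rvs}.
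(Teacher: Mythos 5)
Your proof is correct, but for the main completeness claim it takes a genuinely different route from the paper. Both arguments pivot on the same topological fact (the paper's Lemma~\ref{lem:top}: for open $V$, $V \subseteq^{*} A$ iff $V \subseteq \cl(\int(A))$), but from there the paper does \emph{not} rebuild a canonical model: it defines the translation $\alpha$ replacing every $B$ with $B\Diamond\Box$, shows semantically that $(\X,x,U,V)\amods\phi$ iff $(\X,x,U,V)\models\alpha(\phi)$ under the ordinary e-d semantics, shows syntactically that $\proves_{\wLogB}\alpha(\phi)$ implies $\proves_{\wLogB+\textrm{(CB)}}\phi$ (via the derivable equivalence $B\phi\liff B\Diamond\Box\phi$ in the presence of (CB)), and then simply cites the already-established completeness of $\wLogB$ for the e-d semantics. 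That reduction is shorter and reuses Corollary~\ref{cor:comp:wEL} as a black box, at the price of hiding where (CB) actually does its semantic work. Your direct canonical-model argument is heavier---you must re-verify the truth lemma under the $\amods$ clause---but it is self-contained and isolates the essential point explicitly: in the converse of the $B$-case, (CB) plus (T$_{\Box}$) upgrades the single refuting world $y$ with $\phi\notin y$ to the \emph{open} obstruction $R(x)\cap\widehat{\Box\lnot\Box\phi}$ disjoint from $\int(\aval{\phi}^{[x],R(x)})=\widehat{\Box\phi}\cap[x]$, which is exactly what the ``almost all'' reading requires. (One small presentational caution: the identity $\int(\aval{\phi}^{[x],R(x)})=\widehat{\Box\phi}\cap[x]$ is the truth lemma for $\Box\phi$, which is not a subformula of $B\phi$; you should note that the $\Box$-case argument is available inside the $B$-case because it only consumes the inductive hypothesis for $\phi$ itself.) For the soundness half and for the $V=U$ claim your treatment coincides with the paper's, which likewise observes that setting $V=U$ literally recovers the dense-interior semantics of Section~\ref{sec:rvs}.
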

}

\fullv{
I claim that $\wLogB + \textrm{(CB)}$ is sound and complete with respect to these semantics, and moreover that $\LogB$ is sound and complete with respect to these semantics when we further insist that $V = U$ (since in this case we recover the semantics given in the previous section, namely that $B\phi$ holds just in case its complement is nowhere dense in $U$).

\begin{lemma}
Let $X$ be a topological space and $A$ an open subset of $X$. Then for any $B \subseteq X$, we have $A \subseteq^{*} B$ iff $A \subseteq \cl(\int(B))$.
\end{lemma}

Let $\alpha: \LangKKB \to \LangKKB$ be the map that replaces every occurence of $B$ with $B\Diamond\Box$.

\begin{lemma} \label{lem:eq1}
For all topological subset models $\X$ and every e-d scenario $(x,U,V)$ therein, we have
$$(\X,x,U,V) \amods \phi \textrm{ iff } (\X,x,U,V) \models \alpha(\phi).$$
\end{lemma}

\begin{lemma} \label{lem:eq2}
For all $\phi \in \LangKKB$, if $\proves_{\wLogB} \alpha(\phi)$, then $\proves_{\wLogB + \emph{\textrm{(CB)}}} \phi$.
\end{lemma}

\begin{theorem}
$\wLogB + \emph{\textrm{(CB)}}$ is a complete axiomatization of $\LangKKB$ with respect to the class of all topological subset spaces under e-d semantics using the semantics given above: for all formulas $\phi \in \LangKKB$, if $\amods \phi$, then $\proves_{\wLogB + \emph{\textrm{(CB)}}} \phi$.
\end{theorem}

\begin{proof}
Suppose that $\amods \phi$. Then by Lemma \ref{lem:eq1} we know that $\models \alpha(\phi)$. By Corollary \ref{cor:comp:wEL}, then, we can deduce that $\proves_{\wLogB} \alpha(\phi)$, and so by Lemma \ref{lem:eq2} we obtain $\proves_{\wLogB + \textrm{(CB)}} \phi$, as desired.
\end{proof}
}

\section{Conclusion and Discussion} \label{sec:dis}

When we think of knowledge as what is entailed by the ``available evidence'', a tension between two foundational principles proposed by Stalnaker emerges. First, that whatever the available evidence entails is believed ($K \phi \lthen B \phi$), and second, that what is believed is believed to be entailed by the available evidence ($B \phi \lthen B K \phi$). In the former case, it is natural to interpret ``available'' as, roughly speaking, ``currently in hand'', whereas in the latter, intuition better accords with a broader interpretation of availability as referring to any evidence one could potentially access.

Being careful about this distinction leads to a natural division between what we might call ``knowledge'' and ``knowability''; the space of logical relationships between knowledge, knowability, and belief turns out to be subtle and interesting. We have examined several logics meant to capture some of these relationstips, making essential use of \textit{topological} structure, which is ideally suited to the representation of evidence and the epistemic/doxastic attitudes it informs. In this refined setting, belief can also be defined in terms of knowledge and knowability, provided we take on two additional principles, ``weak factivity'' (wF) and ``confident belief'' (CB); in this case, the semantics for belief have a particularly appealing topological character: roughly speaking, a proposition is believed just in case it is true in \textit{most} possible alternatives, where ``most'' is interpreted topologically as ``everywhere except on a nowhere dense set''.

This interpretation of belief  first appeared in the topological belief semantics presented in \cite{wollicpaper}: Baltag et al.~take the believed propositions to be the sets with \emph{dense interiors} in a given \emph{evidential} topology.
Interestingly, however,
although these semantics essentially coincide with those we present in Section \ref{sec:rvs},
the motivations and intuitions behind the two proposals are quite different. Baltag et al.~start with a subbase model in which the (subbasic) open sets represent pieces of evidence that the agent has \textit{obtained directly} via some observation or measurement. They do not distinguish between evidence-in-hand and evidence-out-there as we do; moreover, the notion of belief they seek to capture is that of \emph{justified belief}, where ``justification'', roughly speaking, involves having evidence that cannot be defeated by any other available evidence. 
(They
also
consider
a weaker, defeasible type of knowledge,
{\em correctly justified belief}, and obtain topological semantics
for it under which
Stalnaker's original system $\mathsf{Stal}$
is
sound and complete.)
The fact that two rather different conceptions of belief correspond to essentially the same topological interpretation is, we feel, quite striking, and deserves a closer look.


Despite the elegance of this topological characterization of belief, our investigation of the interplay between knowledge, knowability, and belief naturally leads to consideration of weaker logics in which belief is not interpreted in this way. In particular, we focus on the principles (wF) and (CB) and what is lost by their omission. Again we rely on topological subset models to interpret these logics, proposing novel semantic machinery to do so. This machinery includes the introduction of the \emph{doxastic range} and, perhaps more dramatically, a modification to the semantic satisfaction relation $(\amods)$ that builds the topological notion of ``almost everywhere'' quantification directly into the foundations of the semantics. We believe this approach is an interesting area for future research, and in this regard our soundness and completeness results may be taken as proof-of-concept.

\commentout{
\ayComment{In this work, we studied notions of knowledge, knowability and (evidence-based) belief in a topological framework.  
As mentioned in Section \ref{sec:rvs}, our strongest belief semantics yielding \emph{belief as truth at most points} coincides with the interpretation of belief.

Our initial investigation is inspired by Stalnaker's work \cite{StalnakerDB} and aimed to improve his bi-modal epistemic-doxastic logic by distinguishing the notion of knowledge from knowability with the help of topological subset spaces.

We therefore adapted his system into a tri-modal logic that leads to a notion of belief definable in terms of knowledge (based on the  evidence-in-hand) and knowability (based on the available evidence the agent possesses). We derived our strongest belief semantics from this definition and provided soundness and completeness results for the tri-modal logic $\LogB$ and its belief fragment $\mathsf{KD45}_B$.

Our main contribution in this paper can be seen two-fold: on the one hand we refined Stalnaker's combined system by distinguishing knowledge from knowability and proposed a notion of belief definable in terms these notions.  
On the other hand we proposed three different topological semantics for belief in the style of subset space semantics: while the strongest of these derived from the belief definition we obtained from $\LogB$, the last two, to the best of our knowledge, constitute novel topological semantics for belief.

\textcolor{red}{Summary of our work}

In this

\begin{itemize}
\item inspired by Stalnaker's approach and the richness of topological spaces

\item three different semantics we proposed

\item one derived from a Stalnaker-like principles, coincide with some other topological approches in the literature

\item the last two are completely novel, while the e-d semantics provides a general framework as to how to interpret belief on topological subset spaces (this is novel), the last one is very special, even the satisfaction relation is defined inherently topologically. 

\item we have presented soundness and completeness results for several systems we work with.

\end{itemize}

In this work, we introduced three different topological semantics for in the style of subset space semantics. While the first and the strongest semantics we worked with are derived from a cafeful

\textcolor{red}{Connection to existing literature, some detailed discussions}

\textcolor{red}{Future Work}
}

\subsection{Knowability as potential knowledge and Fitch's paradox}

Especially due to the influence of Fitch's Paradox (a.k.a the knowability paradox), the notion of knowability,  both as an abstract concept and a formal modality, has become a much investigated topic in epistemology. While most proposals, in particular the ones that emerged in response to the knowability paradox,  take knowability to be (metaphysical) \emph{possibility of knowledge} (see, e.g., \cite{sep-fitch-paradox} for a survey on responses to the knowability paradox), a few proposals provide alternative interpretations of knowability.  Among the latter category, we mention \cite{balbiani08,moss92} where knowability, roughly speaking, is interpreted dynamically as \emph{known after receving some further truthful information}.  One of the common features of the aforementioned interpretations of knowability is that they are formalized as compositions  of  two modalities: possibility and knowledge in the former and a dynamic modality and knowledge in the latter.

Our topological notion of knowability in terms of the interior operator, on the other hand, does not seem to resemble either of the above interpretations. On the one hand, we formalize knowability in terms of a single modality, one that is not decomposable into different modalities (at least  not in  a straightforward way), on the other hand its meaning is concerned with evidence-based \emph{potential} knowledge. In these respects, the notion of knowability we consider in this work shares many common features with Fuhrmann's proposal of knowability as potential knowledge \cite{Fuhrmann2014}. While his work puts forward an unorthodox response to the knowability paradox and his formal analysis is based on relational structures, our mention of the paradox of knowability is merely a remark as to how our topologically interpreted notion of knowability relates to Fuhrmann's proposal  and how it circumvents the paradox. 

\fullv{\aybuke{maybe add a couple of more word about Fuhrmann's setting}}

To recall, the opens of a topological subset model $\cX=(X, \cT, \nu)$ are considered to be the evidence pieces available to the agent that she can in principle, i.e., \emph{potentially} discover. The knowable modality $\Box$, informally speaking, picks out one of the available (and truthful) evidence pieces that entail the proposition in question: 
$$
\begin{array}{lcl}
(\X,x,U) \models \Box \phi & \textrm{ iff } & (\exists V\in\cT)(x\in V \textrm{ and }  V\subseteq\br{\phi}^{U}).
\end{array}
$$

More precisely, $\varphi$ is knowable (to the agent) with respect to the actual state $x$ and evidence-in-hand $U$ iff there exists a more \emph{refined, truthful} piece of evidence \emph{available} that entails $\varphi$. This modality therefore can be conceptualized as potential knowledge simply because the proposition is entailed by some evidence in $\cT$ that could in principle be discovered but it is not necessarily entailed by the evidence the agent has in-hand. 

\aybuke{we can add a couple of more sentences explaining the paradox}

We conclude this section by showing that our notions of knowability and  knowledge do not fall into the knowability paradox. Briefly speaking, the paradox of knowability challenges the so-called Verification Thesis (VT) which states  that every truth is knowable. We can formalize (VT) in our language $\LangKK$ as 
\begin{equation} \label{ver}
\varphi\rightarrow \Box\varphi \tag{VT}
\end{equation}
The paradox stems from the fact that (VT), together with some innocuous principles about knowledge, implies that every truth is known, i.e., $\varphi\rightarrow K\varphi$ (Omniscience Principle (OP)),  which is an absurdly strong statement.

Neither (VT) nor (OP) are theorems of our basic logic of knowledge $\LogKK$. Moreover, even if we extend the logic of knowledge by (VT) $\varphi\rightarrow \Box\varphi$,  which corresponds to working with discrete spaces, (OP) still does not follow.

\begin{proposition}
$\varphi\rightarrow \Box\varphi$ is valid in $\cX=(X, \cT, \nu)$ iff $(X, \cT)$ is a discrete space.
\end{proposition}
\begin{proof}
\begin{itemize}
\item[$(\Rightarrow)$] Suppose $(X, \cT)$ is not a discrete space. This means that there is a subset $A\subseteq X$ such that $\int(A)\not =A$. Thus, there exists $y\in A$ but $y\not\in\int(A)$. We can easily define a valuation function $\nu$ on $(X, \cT)$ such that $\cX\not\models p \rightarrow \Box p$ for some $p\in \textsc{prop}$. Set $\nu$ to be $\nu (p)=A$ and $\nu(q)=\emptyset$  for all $q\in  \textsc{prop}\setminus \{ p\}$ and consider the epistemic scenario $(y, X)$. We then have $(y, X)\models p$ since $y\in \nu(p)=\br{p}^X=A$. However, $(y, X)\not \models\Box p$ since $y\not\in\int(\br{p}^X)=\int(A)$. Therefore, $(\cX, y, X)\not \models p\rightarrow \Box p$.
\item[$(\Leftarrow)$] Suppose $(X, \cT)$ is not a discrete space and let $\cX=(X, \cT, \nu)$ be an arbitrary model on $(X, \cT)$  and  $(x, U)\in ES(\cX)$ such that $(\cX, x, U)\models \varphi$. This means, $x\in \br{\varphi}^U$.  As $(X, \cT)$ is a discrete space, $ \br{\varphi}^U=\int( \br{\varphi}^U)$. Therefore, we obtain $x\in \int( \br{\varphi}^U)$, i.e., $(\cX, x, U)\models \Box\varphi$.

\end{itemize}

\end{proof}

However, the class of discrete spaces  does not make the formula $\varphi\rightarrow K\varphi$ valid.  Therefore, similar to the case in \cite{Fuhrmann2014}, our setting does not support the knowability paradox: acceptance of  (VT) formulated in terms of the knowable modality $\Box$  does not lead to the conclusion that every truth is known.



}

\ayComment{******************************************************************************************

\subsection{From topological spaces to brushes}

\aybuke{This connection is a bit more complicated than I thought, OR maybe I misunderstood what we talked last time. Let's discuss this in the next meeting.}

It is well-known that there is a one-to-one connection between the reflexive and transitive Kripke frames and Alexandroff spaces: every reflexive and transitive frame $(X, R)$ correspond to an Alexandroff space, namely to the Alexandroff space $(X, \tau_R)$ constructed from $(X, R)$ by  putting all the upward closed set of $(X, R)$ in $\tau_R$. On the reverse direction, we can also construct  a reflexive and transitive Kripke frame $(X, R_\tau)$ from an arbitrary space $(X, \tau)$ by taking $$xR_\tau y \ \mbox{iff} \ x\in\Cl\{y\}.$$

In this section, in a way similar to the idea behind the above construction,  we explain how to construct a brush from a topological space and give the the necessary and sufficient conditions of a topological space in order to be able to build. 

\begin{definition}[Brush space]
A topological space $(X, \tau)$  is called a \textbf{brush space} if there exists a non-empty set $C\in\tau$ such that $\bigcap( \tau\setminus \{\emptyset\})=C$.
\end{definition}

This is not hard to see that any brush space is Alexandroff.

\begin{proposition}
For any brush space $(X, \tau)$, the Kripke frame $(X, R_\tau=X\times C)$ is a brush.
\end{proposition}
 \begin{proof}
 By definition of $R_\tau$, as $C\subseteq X$, $C$ is the unique final cluster of $(X, R_\tau)$. We only need to show that $X\setminus C$ is an irreflexive antichain. Let $x, y\in (X\setminus C)$ such that $xR_\tau y$. This means, by definition of $R_\tau$, that $y\in C$, contradicting $y\in  (X\setminus C)$. Therefore, for every $x, y\in (X\setminus C)$ we have $\neg (xR_\tau y)$ implying that  $(X\setminus C)$ is an irreflexive anti-chain.
 \end{proof}
   
For any topo-model $(X, \tau, \nu)$ based on a brush space, $M_\cX=(X, R_\tau, \nu)$ denotes the corresponding Kripke model.

\begin{proposition}
For any topo-model $\cX=(X, \tau, \nu)$ based on a brush space, any $(x, U)\in ES(\cX)$ and any formula $\varphi\in\LangB$, we have $$\cX, (x, U)\models \varphi \ \mbox{iff} \ M_\cX, x\models\varphi$$
\end{proposition}

\begin{proof}
The proof follows by induction on the structure of $\varphi$ and cases for the propostional variables and Booleans are elementary. 

\textbf{IH:} For any $\psi\in Sub(\varphi)$,  $\cX, (x, U)\models \psi \ \mbox{iff} \ M_\cX, x\models\psi$.

\textbf{Case:} $\varphi=B\psi$

\begin{align}
\cX, (x, U) \models B\psi & \ \mbox{iff} \ \Cl_U\Intr_U\br{\psi}^U=U\notag\\
&  \ \mbox{iff} \ C\subseteq \br{\psi}^U \tag{otherwise, $\Intr_U\br{\psi}^U=\Intr\br{\psi}^U=\emptyset$} \\
&  \ \mbox{iff} \ C\subseteq \|\psi\|_{M_\cX} \tag{by (IH)}\\
&  \ \mbox{iff} \ R(x) \subseteq \|\psi\|_{M_\cX}  \tag{since $C=R(x)$}\\
&  \ \mbox{iff} \  \ M_\cX, x\models B\psi\notag
\end{align}

\end{proof}

\ayComment{

\aybuke{By a ``{\bf KD45} Kripke frame'', we mean bunch of brushes put together in a disconnected way. Explain this further if we decide to keep Prop. \ref{prop.brushspace} as it is!}

It is well-known that there is a one-to-one connection between the reflexive and transitive Kripke frames and Alexandroff spaces: every reflexive and transitive frame $(X, R)$ correspond to an Alexandroff space, namely to the Alexandroff space $(X, \tau_R)$ constructed from $(X, R)$ by  putting all the upward closed set of $(X, R)$ in $\tau_R$. On the reverse direction, we can also construct  a reflexive and transitive Kripke frame $(X, R_\tau)$ from an arbitrary space $(X, \tau)$ by taking $$xR_\tau y \ \mbox{iff} \ x\in\Cl\{y\}.$$  However, in general, $(X, \tau)\not = (X, \tau_{R_\tau})$. For this to be the case, we should start the construction with an Alexandroff space:

\begin{proposition}[\draft{REF}]\label{prop.Alexandroff}
Given a topological space $(X, \tau)$,   $$(X, \tau)= (X, \tau_{R_\tau}) \ \mbox{iff} \ (X,\tau) \ \mbox{is Alexandroff}.$$
\end{proposition}

 In this section,  we prove an analogous result to Proposition \ref{prop.Alexandroff} for {\bf KD45} frames. We already have explained how to construct a topological space from a {\bf KD45}-frame $(X, R)$: we simply take the upward closed sets with respect to $(X, R^+)$ as the opens of the corresponding topological spaces. The main question  therefore is the followings: what are the necessary and sufficient conditions of a topological space in order to be able to build a {\bf KD45} frame by using the above construction?
   
     
\ayComment{\begin{definition}[Brush space]
A topological space $(X, \tau)$  is called a \textbf{brush space} if there exists a non-empty set of disjoint opens $\fC\subseteq \tau$ such that for any $\cA\subseteq \tau$ with $\bigcap \cA\not =\emptyset$ and $|\cA|\geq 2$, there exists $C\in \fC$ such that $\bigcap \cA= C$.
\end{definition}}
\begin{definition}[Brush space]
A topological space $(X, \tau)$  is called a \textbf{brush space} if there exists a non-empty set of disjoint opens $\fC\subseteq \tau$ such that for any $U, V\in\tau$ with $U\not =V$, either $U\cap V=\emptyset$ or $U\cap V=C$ for some $C\in\fC$. 
\end{definition}

\begin{proposition}\label{prop.brush-Alexandroff}
Every brush space is Alexandroff.
\end{proposition}
\begin{proof}
Let $(X, \tau)$ be a brush space and $\cA\subseteq \tau$. We want to show that $\bigcap\cA\in \tau$. We will in fact show a stronger result. 

\underline{Claim:} for every subset  $\cA\subseteq \tau$, either $\bigcap\cA=\emptyset$ or $\bigcap\cA=C$ for some $C\in\fC$.

Suppose  $\bigcap\cA\not=\emptyset$. Also suppose, w.l.o.g., that $\cA$ is an infinite subset of $\tau$. Since $\cA$ has a non-empty intersection and $\tau$ is a brush topology, for any $U,V \in \cA$ with $U\not =V$  we have $U\cap V=C$ for some $C\in\fC$. Therefore, $\bigcap\cA\subseteq C$.  Moreover,  $C\subseteq W$ for any $W\in\cA$. Otherwise, i.e. if $W\subset C$, it would be the case that $W\cap (U\cap V)=W\in \fC$, thus, $\fC$ would have two non-disjoint opens $W$ and $C$ in it. 
Thus,  we have $C\subseteq\bigcap \cA$.  Therefore, $\bigcap\cA=C$.
\end{proof}

\begin{proposition}
For any brush space $(X, \tau)$ and $x\in X$
\begin{enumerate}
\item if $x\in C$ for some $C\in\fC$, then $\Cl\{x\}=\bigcup\{U\in\tau \ | \ C\subseteq U\}$; NOT TRUE!, only $\subseteq$-direction is true!
\item if $x\not\in \bigcup\fC$, then $\Cl\{x\}=\{x\}$.
\end{enumerate}
\end{proposition}

\begin{proposition}\label{prop.brushspace}
For any brush space $(X, \tau)$, 
\begin{enumerate}
\item $(X, \tau)= (X, \tau_{R_\tau})$.
\item $(X, R_\tau)$ is a reflexive {\bf KD45} frame;
\end{enumerate}
\end{proposition}
\begin{proof} \
\begin{enumerate}
\item Follows from Proposition \ref{prop.Alexandroff} and Proposition \ref{prop.brush-Alexandroff}.
\item By Proposition \ref{prop.brush-Alexandroff}, we know that  $R_\tau$ is reflexive and transitive.  We moreover need to show
\begin{enumerate}
\item $\forall x\in X \exists C\in\fC (y\in C \ \mbox{implies} \  xRy)$
\item $\forall x,y \in X (xRy  \ \mbox{implies} \   x=y \ \mbox{or} \ (\exists C\in\fC \ \mbox{and} \ y\in C))$
\end{enumerate}
\end{enumerate}

\end{proof}

}

}

%
%
%
%
%
%


\bibliographystyle{cjj}
\bibliography{Ref}

\begin{thebibliography}{10}

\bibitem{HvD-SSL}
Balbiani, P., van Ditmarsch, H., and Kudinov, A. (2013) Subset space logic with
  arbitrary announcements. {\em Proc. of the 5th ICLA\/}, pp. 233--244,
  Springer.

\bibitem{loripaper}
Baltag, A., Bezhanishvili, N., \"{O}zg\"{u}n, A., and Smets, S. (2013) The
  topology of belief, belief revision and defeasible knowledge. {\em In Proc.
  of LORI 2013\/}, pp. 27--40, Springer, Heidelberg.

\bibitem{jplpaper}
Baltag, A., Bezhanishvili, N., \"{O}zg\"{u}n, A., and Smets, S. (2015) The
  topological theory of belief. {\em Submitted.
  http://www.illc.uva.nl/Research/Publications/Reports/PP-2015-18.text.pdf\/}.

\bibitem{wollicpaper}
Baltag, A., Bezhanishvili, N., {\"{O}}zg{\"{u}}n, A., and Smets, S. (2016)
  Justified belief and the topology of evidence. {\em Proc. of WOLLIC 2016\/},
  pp. 83--103, Springer.

\bibitem{BBOSTbiLLC}
Baltag, A., Bezhanishvili, N., {\"{O}}zg{\"{u}}n, A., and Smets, S. (2016) The
  topology of full and weak belief. {\em Postproceedings of TbiLLC 2015\/}, (To
  appear).

\bibitem{Baltag08epistemiclogic}
Baltag, A., Ditmarsch, H. P.~V., and Moss, L.~S. (2008) Epistemic logic and
  information update. {\em In P. Adriaans\/}, Elsevier Science Publishers.

\bibitem{QualitativeBR}
Baltag, A. and Smets, S. (2008) A qualitative theory of dynamic interactive
  belief revision. {\em Texts in Logic and Games\/}, {\bf 3}, 9--58.

\bibitem{baskent11}
Baskent, C. (2011) Geometric public announcement logics. {\em FLAIRS
  Conference\/}.

\bibitem{baskent12}
Baskent, C. (2012) Public announcement logic in geometric frameworks. {\em
  Fundam. Inform.\/}, {\bf 118}, 207--223.

\bibitem{BvdH13}
Bezhanishvili, N. and van~der Hoek, W. (2014) Structures for epistemic logic.
  Baltag, A. and Smets, S. (eds.), {\em Johan van Benthem on Logic and
  Information Dynamics\/}, pp. 339--380, Springer International Publishing,
  Cham.

\bibitem{bjorndahl}
Bjorndahl, A. (2016) Topological subset space models for public announcements.
  {\em \emph{To appear in} Trends in Logic, Outstanding Contributions: Jaakko
  Hintikka\/}.

\bibitem{blackburn01}
Blackburn, P., de~Rijke, M., and Venema, Y. (2001) {\em Modal Logic\/}, vol.~53
  of {\em Cambridge Tracts in Theoretical Computer Scie\/}. Cambridge
  University Press, Cambridge.

\bibitem{zak97}
Chagrov, A.~V. and Zakharyaschev, M. (1997) {\em Modal Logic\/}, vol.~35 of
  {\em Oxford logic guides\/}. Oxford University Press.

\bibitem{Clark}
Clark, M. (1963) Knowledge and grounds: a comment on {M}r. {G}ettier's paper.
  {\em Analysis\/}, {\bf 24}, 46--48.

\bibitem{dabrowski}
Dabrowski, A., Moss, L.~S., and Parikh, R. (1996) Topological reasoning and the
  logic of knowledge. {\em Annals of Pure and Applied Logic\/}, {\bf 78}, 73 --
  110.

\bibitem{dugundji}
Dugundji, J. (1965) {\em Topology\/}. Allyn and Bacon Series in Advanced
  Mathematics, Prentice Hall.

\bibitem{engelking}
Engelking, R. (1989) {\em General topology\/}, vol.~6. Heldermann Verlag,
  Berlin, second edn.

\bibitem{grove88}
Grove, A. (1988) Two modellings for theory change. {\em Journal of
  Philosophical Logic\/}, {\bf 17}, 157--170.

\bibitem{heinemann08}
Heinemann, B. (2008) Topology and knowledge of multiple agents. {\em Proc. of
  the 11th IBERAMIA\/}, pp. 1--10, Springer.

\bibitem{heinemann10}
Heinemann, B. (2010) Logics for multi-subset spaces. {\em Journal of Applied
  Non-Classical Logics\/}, {\bf 20}, 219--240.

\bibitem{sep-logic-epistemic}
Hendricks, V. and Symons, J. (2015) Epistemic logic. Zalta, E.~N. (ed.), {\em
  The Stanford Encyclopedia of Philosophy\/}, Metaphysics Research Lab,
  Stanford University, fall 2015 edn.

\bibitem{sep-knowledge-analysis}
Ichikawa, J.~J. and Steup, M. (2013) The analysis of knowledge. Zalta, E.~N.
  (ed.), {\em The Stanford Encyclopedia of Philosophy\/}, Metaphysics Research
  Lab, Stanford University, fall 2013 edn.

\bibitem{klein}
Klein, P. (1971) A proposed definition of propositional knowledge. {\em Journal
  of Philosophy\/}, {\bf 68}, 471--482.

\bibitem{klein2}
Klein, P. (1981) {\em Certainty, a Refutation of Scepticism\/}. University of
  Minneapolis Press.

\bibitem{lehrer}
Lehrer, K. (1990) {\em Theory of Knowledge\/}. Routledge.

\bibitem{lehrerpaxson}
Lehrer, K. and Paxson, T.~J. (1969) Knowledge: Undefeated justified true
  belief. {\em Journal of Philosophy\/}, {\bf 66}, 225--237.

\bibitem{moss92}
Moss, L.~S. and Parikh, R. (1992) Topological reasoning and the logic of
  knowledge. {\em Proc. of the 4th TARK\/}, pp. 95--105, Morgan Kaufmann.

\bibitem{Ozgun13}
{\"{O}}zg{\"{u}}n, A. (2013) {\em Topological Models for Belief and Belief
  Revision\/}. Master's thesis, ILLC, University of Amsterdam.

\bibitem{rott}
Rott, H. (2004) Stability, strength and sensitivity: Converting belief into
  knowledge. {\em Erkenntnis\/}, {\bf 61}, 469--493.

\bibitem{StalnakerDB}
Stalnaker, R. (2006) On logics of knowledge and belief. {\em Philosophical
  Studies\/}, {\bf 128}, 169--199.

\bibitem{vanbenthemBR}
van Benthem, J. (2007) Dynamic logic for belief revision. {\em Journal of
  Applied Non-Classical Logics\/}, {\bf 17}, 129--155.

\bibitem{vanbenthemSPACE}
van Benthem, J. and Bezhanishvili, G. (2007) Modal logics of space. {\em
  Handbook of Spatial Logics\/}, pp. 217--298, Springer.

\bibitem{vanbenthem-smets}
van Benthem, J. and Smets, S. (2015) Dynamics of belief change. {\em Handbook
  of Epistemic Logic\/}, pp. 313--393, College Publications.

\bibitem{eumas}
van Ditmarsch, H., Knight, S., and {\"{O}}zg{\"{u}}n, A. (2014) Arbitrary
  announcements on topological subset spaces. {\em Proc. of the 12th EUMAS\/},
  pp. 252--266, Springer.

\bibitem{HvD-TARK15}
van Ditmarsch, H., Knight, S., and \"Ozg\"un, A. (2016) Announcement as effort
  on topological spaces. {\em Proc.\ of the 15th TARK\/}, vol. 215 of {\em
  Electronic Proceedings in Theoretical Computer Science\/}, pp. 283--297, Open
  Publishing Association.

\bibitem{DELbook}
van Ditmarsch, H., van~der Hoek, W., and Kooi, B. (2007) {\em Dynamic Epistemic
  Logic\/}. Springer Publishing Company, Incorporated, 1st edn.

\bibitem{vanDitmarsch2006}
van Ditmarsch, H.~P. (2006) Prolegomena to dynamic logic for belief revision.
  {\em Uncertainty, Rationality, and Agency\/}, pp. 175--221, Springer
  Netherlands, Dordrecht.

\bibitem{agotnes13}
Wang, Y.~N. and {\AA}gotnes, T. (2013) Multi-agent subset space logic. {\em
  Proc. of the 23rd IJCAI\/}, pp. 1155--1161, {IJCAI/AAAI}.

\bibitem{wang13}
W{\'a}ng, Y.~N. and {\AA}gotnes, T. (2013) Subset space public announcement
  logic. {\em Proc. of 5th ICLA\/}, pp. 245--257, Springer.

\bibitem{Williamson}
Williamson, T. (2000) {\em Knowledge and its Limits\/}. Oxford Univ. Press.

\end{thebibliography}

\pagebreak

\appendix

\shortv{
\section{Proofs} \label{app:prf}

\subsection{Soundness and completeness of $\LogB$} 

Let $e \colon \LangKKB \to \LangKK$ be the map that replaces each instance of $B$ with $K\Diamond\Box$.

\begin{lemma} \label{lem:eqv}
For all $\phi \in \LangKKB$, we have $\proves_{\LogKK^{+}} \phi \liff e(\phi)$.
\end{lemma}

\begin{proof}
This is a straightforward induction on the structure of $\phi$ using (EQ).
\end{proof}

\begin{proposition} \label{pro:sam}
$\LogKK^{+}$ and $\LogB$ prove the same theorems.
\end{proposition}

\begin{proof}
In light of Proposition \ref{pro:eqv}, it suffices to show that $\LogKK^{+}$ proves everything in Table \ref{tbl:axi}. By Lemma \ref{lem:eqv}, then, it suffices to show that for every $\phi$ that is an instance of an axiom scheme from Table \ref{tbl:axi}, we have $\proves_{\LogKK} e(\phi)$. And for this, by Theorem \ref{thm:bjo}, we need only show that each such $e(\phi)$ is valid in all topological subset models.

Let $\cX = \Model$ be a topological subset model and $(x,U) \in ES(\cX)$.

\begin{itemize}

\item[(K$_{B}$)]
Suppose $(x,U) \models K\Diamond\Box(\phi \lthen \psi)$ and $(x,U) \models K\Diamond\Box\phi$. Then $U \subseteq \cl(\int(\val{\phi \lthen \psi}^{U})) \cap \cl(\int(\val{\phi}^{U}))$. Let $y \in U$ and let $V$ be an open set containing $y$. Then we must have $V \cap \int(\val{\phi \lthen \psi}^{U}) \neq \emptyset$ and so, since this set is also open,
\begin{eqnarray*}
V \cap \int(\val{\phi \lthen \psi}^{U}) \cap \int(\val{\phi}^{U}) & \neq & \emptyset\\
\therefore \qquad V \cap \int(\val{\phi \lthen \psi}^{U} \cap \val{\phi}^{U}) & \neq & \emptyset\\
\therefore \qquad V \cap \int(\val{\psi}^{U}) & \neq & \emptyset,
\end{eqnarray*}
which establishes that $y \in \cl(\int(\val{\psi}^{U}))$. This shows that $U \subseteq \cl(\int(\val{\psi}^{U}))$, and therefore $(x,U) \models K\Diamond\Box\psi$.

\item[(sPI)]
Suppose $(x,U) \models K\Diamond\Box\phi$. Then $U = \val{\Diamond\Box\phi}^{U}$, and so for all $y \in U$ we have $(y,U) \models K\Diamond\Box\phi$. This implies that $U = \val{K\Diamond\Box\phi}^{U}$, hence $(x,U) \models KK\Diamond\Box\phi$.

\item[(KB)]
Suppose $(x,U) \models K\phi$. Then $U = \val{\phi}^{U}$, and so (since $U$ is open), $U \subseteq \cl(\int(\val{\phi}^{U}))$, which implies $(x,U) \models K\Diamond\Box\phi$.

\item[(RB)]
Suppose $(x,U) \models K\Diamond\Box\phi$. Then $U \subseteq \cl(\int(\val{\phi}^{U}))$, so $U \subseteq \cl(\int(\int(\val{\phi}^{U})))$, hence $U = \val{\Diamond\Box\Box\phi}^{U}$, which implies that $(x,U) \models K\Diamond\Box\Box\phi$.

\item[(wF)]
Suppose $(x,U) \models K\Diamond\Box\phi$. Then $x \in U \subseteq \cl(\int(\val{\phi}^{U})) \subseteq \cl(\val{\phi}^{U})$, which implies that $(x,U) \models \Diamond \phi$.

\item[(CB)]
Observe that
$$\val{\Box \phi \lor \Box \lnot \Box \phi}^{U} = \int(\val{\phi}^{U}) \cup \int(X \mysetminus \int(\val{\phi}^{U}))$$
is an open set. Moreover, it is dense in $U$; to see this, let $y \in U$ and let $V$ be an open neighbourhood of $y$. Then either $V \cap \int(\val{\phi}^{U}) \neq \emptyset$ or, if not, $V \subseteq X \mysetminus \int(\val{\phi}^{U})$, hence $V \subseteq \int(X \mysetminus \int(\val{\phi}^{U}))$. We therefore have
$$U \subseteq \cl(\int(\val{\Box \phi \lor \Box \lnot \Box \phi}^{U})),$$
whence $(x,U) \models K\Diamond\Box(\Box \phi \lor \Box \lnot \Box \phi)$. \qedhere

\end{itemize}
\end{proof}

\begin{proposition} \label{pro:snd}
$\LogKK^{+}$ is a sound axiomatization of $\LangKKB$ with respect to the class of topological subset models: for every $\phi \in \LangKKB$, if $\phi$ is provable in $\LogKK^{+}$ then $\phi$ is valid in all topological subset models.
\end{proposition}

\begin{proof}
This follows from the soundness of $\LogKK$ (Theorem \ref{thm:bjo}) together with the fact that the semantics for the $B$ modality ensures that (EQ) is valid is all topological subset models.
\end{proof}

\begin{corollary} \label{cor:snd}
$\LogB$ is a sound axiomatization of $\LangKKB$ with respect to the class of topological subset models.
\end{corollary}

\begin{proof}
Immediate from Propositions \ref{pro:sam} and \ref{pro:snd}.
\end{proof}

\begin{theorem}
$\LogB$ is a complete axiomatization of $\LangKKB$ with respect to the class of topological subset models: for every $\phi \in \LangKKB$, if $\phi$ is valid in all topological subset models then $\phi$ is provable in $\LogB$.
\end{theorem}

\begin{proof}
We show the contrapositive. Let $\varphi \in \LangKKB$ be such that $\not\proves_{\LogB} \phi$. By Lemma \ref{lem:eqv} and Proposition \ref{pro:sam} we have $\proves_{\LogB} \phi \liff e(\phi)$, and so also $\not\proves_{\LogB} e(\phi)$. Since $e(\phi) \in \LangKK$ and $\LogB$ is an extension of $\LogKK$, we know that $\not\proves_{\LogKK} e(\phi)$. Thus, by Theorem \ref{thm:bjo}, there exists a topological subset model $\cX$ and $(x,U) \in ES(\cX)$ such that $(\cX, x, U) \not \models e(\phi)$ and so, by the soundness of $\LogB$, we obtain $(\cX, x, U)\not\models\varphi$.
\end{proof}

\subsection{$\mathsf{KD45}_{B}$ and the doxastic fragment $\LangB$}

\begin{proposition} \label{pro:emb}
For every $\phi \in \LangB$, if $\proves_{\mathsf{KD45}_{B}} \phi$, then $\proves_{\LogB} \phi$.
\end{proposition}

\begin{proof}
It suffices to show that $\LogB$ derives all the axioms and the rule of inference of $\mathsf{KD45}_{B}$.
(K$_{B}$) is itself an axiom of $\LogB$.
It is not hard to see, using (wF) and $\mathsf{S4}_{\Box}$, that $\proves_{\LogB} \lnot B \falsum$; given this, (D$_{B}$) follows from (K$_{B}$) with $\psi$ replaced by $\falsum$.
(4$_B$) follows easily from (sPI) and (KB).
To derive (5$_B$), first observe that by (5$_{K}$) we have $\proves_{\LogB} \lnot K \Diamond \Box \phi \lthen K \lnot K \Diamond \Box \phi$; from Proposition \ref{pro:eqv} it then follows that $\proves_{\LogB} \lnot B \phi \lthen K \lnot B \phi$, and so from (KB) we can deduce (5$_{B}$).
Lastly, (Nec$_{B}$) follows directly from (Nec$_{K}$) together with (KB).
\end{proof}

\begin{theorem} \label{thm:sac}
$\mathsf{KD45}_{B}$ is a sound and complete axiomatization of $\LangB$ with respect to the class of all topological subset spaces: for every $\phi \in \LangB$, $\phi$ is provable in $\mathsf{KD45}_{B}$ if and only if $\phi$ is valid in all topological subset models.
\end{theorem}

Soundness follows immediately from Proposition \ref{pro:emb} together with the soundness of $\LogB$ (Corollary \ref{cor:snd}). The remainder of this section is devoted to developing the tools needed to prove completeness. Our proof relies crucially on the standard Kripke-style interpretation of $\LangB$ in relational models and the completeness results pertaining thereto. We therefore begin with a brief review of these notions (for a more comprehensive overview, we direct the reader to \cite{blackburn01,zak97}).

A \defin{relational frame} is a pair $(X,R)$ where $X$ is a nonempty set and $R$ is a binary relation on $X$. A \defin{relational model} is a relational frame $(X,R)$ equipped with a \emph{valuation} function $v: \textsc{prop} \to 2^{X}$. The language $\LangB$ is interpreted in a relational model $M = (X,R,v)$ by extending the valuation function via the standard recursive clauses for the Boolean connectives together with the following:
$$
\begin{array}{lcl}
(M,x) \models B \phi & \textrm{ iff } & (\forall y\in X)(xRy \ \mbox{implies} \ (M, y)\models \phi).
\end{array}
$$
Let $\sval{\phi}_{M} = \{x \in X \: : \: (M,x) \models \phi\}$. A \defin{belief frame} is a frame $(X,R)$ where $R$ is serial, transitive, and Euclidean.\footnote{A relation is \emph{serial} if $(\forall x)(\exists y)(xRy)$; it is \emph{transitive} if $(\forall x,y,z)((xRy \; \& \; yRz) \rimp xRz)$; it is \emph{Euclidean} if $(\forall x,y,z)((xRy \; \& \; xRz) \rimp yRz)$.}

\begin{theorem} \label{thm:bel} 
$\mathsf{KD45}_{B}$ is a sound and complete axiomatization of $\LangB$ with respect to the class of belief frames.
\end{theorem}

\begin{proof}
See, e.g., \cite[Chapter 5]{zak97} or \cite[Chapters 2, 4]{blackburn01}.
\end{proof}

A frame $(X,R)$ is called a \defin{brush} if there exists a nonempty subset $C \subseteq X$ such that $R = X \times C$. If such a $C$ exists, clearly it is unique; call it the \emph{final cluster} of the brush. A brush is called a \defin{pin} if $|X \mysetminus C| = 1$. It is not hard to see that every brush is a belief frame. Conversely, the following Lemma shows that every belief frame $(X,R)$ is a disjoint union of brushes.\footnote{A frame $(X,R)$ is said to be a \emph{disjoint union} of frames $(X_{i},R_{i})$ provided the $X_{i}$ partition $X$ and the $R_{i}$ partition $R$.}

\begin{lemma} \label{lem:dis}
Let $(X,R)$ be a belief frame, and define
$$x \sim y \textrm{ iff } (\exists z \in X)(xRz \textrm{ and } yRz).$$
Then $\sim$ is an equivalence relation extending $R$. Moreover, if $[x]$ denotes the equivalence class of $x$ under $\sim$, then $([x],R|_{[x]})$ is a brush, and $(X,R)$ is the disjoint union of all such brushes.
\end{lemma}

\begin{proof}
Reflexivity of $\sim$ follows from seriality of $R$, and symmetry is immediate. To see that $\sim$ is transitive, suppose $x \sim x'$ and $x' \sim x''$. Then there exist $y,z \in X$ such that $xRy$, $x'Ry$, $x'Rz$ and $x''Rz$. Because $R$ is Euclidean, it follows that $yRz$; because $R$ is transitive, we can deduce that $xRz$; it follows that $x \sim x''$. To see that $\sim$ extends $R$, suppose $xRy$. Then because $R$ is Euclidean, we have $yRy$, which implies $x \sim y$.

The fact that $\sim$ is an equivalence relation tells us that the sets $[x]$ partition $X$; furthermore, since $xRy$ implies $[x] = [y]$, we also know that the sets $R|_{[x]}$ partition $R$. Thus $(X,R)$ is the disjoint union of the frames $([x],R|_{[x]})$.

Finally we show that each such frame $([x],R|_{[x]})$ is a brush. Set $C_{x} = \{y \in [x] \: : \: yRy\}$; that $C_{x} \neq \emptyset$ follows easily from $R$ being serial and Euclidean. Let $y \in C_{x}$. Then for all $x' \in [x]$ we have $x' \sim y$, so there is some $z \in X$ with $x'Rz$ and $yRz$; now because $R$ is Euclidean, we can deduce that $zRy$, so by transitivity $x'Ry$. It follows that $[x] \times \{y\} \subseteq R$, hence $[x] \times C_{x} \subseteq R$. On the other hand, if $y \notin C_{x}$, then for every $x' \in [x]$ we have $\lnot(x'Ry)$, or else the Euclidean property would imply $yRy$, a contradiction. Thus, $R|_{[x]} = [x] \times C_{x}$, so $([x],R|_{[x]})$ is a brush with final cluster $C_{x}$.
\end{proof}

\begin{corollary}
$\mathsf{KD45}_{B}$ is a sound and complete axiomatization of $\LangB$ with respect to the class of brushes and with respect to the class of pins.
\end{corollary}


There is a close connection between the relational semantics for $\LangB$ presented above and our topological semantics for this language. For any frame $(X,R)$, let $R^{+}$ denote the \emph{reflexive closure} of $R$:
$$R^{+} = R \cup \{(x,x) \: : \: x \in X\}.$$
Given a transitive frame $(X, R)$, the set $\cB_{R^{+}}=\{R^+(x) \: : \: x \in X\}$ constitutes a topological basis on $X$; denote by $\cT_{R^+}$ the topology generated by $\cB_{R^+}$ (see, e.g., \cite{BvdH13,vanbenthemSPACE} for a more detailed discussion of this construction). It is well-known that $(X, \cT_{R^+})$ is an Alexandroff space and, for every $x \in X$, the set $R^+(x)$ is the smallest open neighborhood of $x$.

\begin{lemma} \label{lem:pre}
Let $(X,R)$ be a belief frame. For each $x \in X$, let $C_{x}$ denote the final cluster of the brush $([x], R|_{[x]})$ as in Lemma \ref{lem:dis}, and let $\int$ and $\cl$ denote the interior and closure operators, respectively, in the topological space $(X, \cT_{R^+})$. Then for all $x \in X$ and every $A \subseteq X$:
\begin{enumerate}
\item \label{lem:pre1}
$[x] \in \cT_{R^+}$, and so $(x, [x]) \in ES(\cX_{M})$;
\item \label{lem:pre2}
$R(x) = C_{x} \in \cT_{R^+}$;
\item \label{lem:pre3}
$\int(A) \cap C_{x} \neq \emptyset$ if and only if $A \supseteq C_{x}$;
\item \label{lem:pre4}
$\cl(A) \supseteq [x]$ if and only if $A \cap C_{x} \neq \emptyset$.
\end{enumerate}
\end{lemma}

\begin{proof}
$ $\newline
\vspace{-5mm}
\begin{enumerate}
\item
This follows from the fact that $y \in [x]$ implies $R^+(y) \subseteq [x]$, which in turn follows from the fact that $\sim$ extends $R$ (Lemma \ref{lem:dis}).
\item
That $R(x) = C_{x}$ follows from the fact that $R|_{[x]} = [x] \times C_{x}$ (Lemma \ref{lem:dis}). To see that $C_{x}$ is open, observe that if $y \in C_{x}$, then $R^{+}(y) = R(y) = C_{y} = C_{x}$.
\item
Since $C_{x}$ is open, it follows immediately that if $A \supseteq C_{x}$ then $\int(A) \supseteq C_{x}$, so in particular $\int(A) \cap C_{x} \neq \emptyset$. Conversely, if $y \in \int(A) \cap C_{x}$ then $R^{+}(y) \subseteq A$, since $R^{+}(y)$ is the smallest open neigbourhood of $y$; therefore, since $R^{+}(y) = R(y) = C_{x}$, we have $A \supseteq C_{x}$.
\item
First suppose that $y \in A \cap C_{x}$ and let $z \in [x]$. By part \ref{lem:pre2}, $R^{+}(z) \supseteq R(z) = C_{x}$, and so since $R^{+}(z)$ is the smallest open neighbourhood of $z$ and $y \in C_{x}$, it follows that $z \in \cl(\{y\}) \subseteq \cl(A)$, hence $[x] \subseteq \cl(A)$. Conversely, suppose that $A \cap C_{x} = \emptyset$. Then since $C_{x}$ is open it follows that $C_{x} \cap \cl(A) = \emptyset$, which shows that $[x] \not\subseteq \cl(A)$. \qedhere
\end{enumerate}
\end{proof}

Given a transitive model $M = (X,R,v)$, let $\cX_{M}$ denote the topological subset model constructed from $M$, namely $(X,\cT_{R^+},v)$.

\begin{lemma} \label{lem:bru}
Let $M = (X, R, v)$ be a belief frame. Then for every formula $\phi \in \LangB$, for every $x \in X$ we have
$$(M, x) \models \phi \ \mbox{ iff } \ (\cX_{M}, x, [x]) \models \phi.$$
\end{lemma}

\begin{proof}
The proof follows by induction on the structure of $\phi$; cases for the primitive propositions and the Boolean connectives are elementary. So assume inductively that the result holds for $\phi$; we must show that it holds also for $B\phi$. Note that the inductive hypothesis implies that $\val{\phi}^{[x]}= \sval{\phi}_M \cap [x]$, since by Lemma \ref{lem:dis}, $y \in [x]$ implies $[y] = [x]$.
\begin{align}
(M, x) \models B \phi & \ \mbox{ iff } \ R(x) \subseteq \|\phi\|_M\notag\\
&  \ \mbox{ iff } \ C_{x} \subseteq \|\phi\|_M \tag{Lemma \ref{lem:pre}.\ref{lem:pre2}}\\
&  \ \mbox{ iff } \ C_{x} \subseteq \sval{\phi}_M \cap [x] \tag{since $C_{x} \subseteq [x]$}\\
&  \ \mbox{ iff } \ C_{x} \subseteq \val{\phi}^{[x]} \tag{inductive hypothesis}\\
&  \ \mbox{ iff } \ \int(\val{\phi}^{[x]}) \cap C_{x} \neq \emptyset \tag{Lemma \ref{lem:pre}.\ref{lem:pre3}}\\
&  \ \mbox{ iff } \ \cl(\int(\val{\phi}^{[x]})) \supseteq [x] \tag{Lemma \ref{lem:pre}.\ref{lem:pre4}}\\
&  \ \mbox{ iff } \ (\cX_{M}, x, [x]) \models B\phi. \notag \hspace{3cm} \qedhere
\end{align}
\end{proof}

Completeness is an easy consequence of this lemma: if $\phi \in \LangB$ is such that $\not\proves_{\mathsf{KD45}_{B}} \phi$, then by Theorem \ref{thm:bel} there is a belief frame $M$ that refutes $\phi$ at some point $x$. Then, by Lemma \ref{lem:bru}, $\phi$ is also refuted in $\cX_{M}$ at the epistemic scenario $(x, [x])$. This completes the proof of Theorem \ref{thm:sac}.


\subsection{Soundness and completeness of $\wLogB$}

\begin{thm}\label{thm:sound:wEL}
$\wLogB$ is a sound axiomatization of $\LangKKB$ with respect to the class of all topological subset spaces under e-d semantics.
\end{thm}

\begin{proof}
The validity of the axioms without the modality $B$ follows as in Theorem \ref{thm:bjo}, since the only difference here lies in the semantic clause for $B$. Let $\cX=\Model$ be a topological subset model, $(x, U, V)$ an e-d scenario, and $\varphi, \psi \in \LangKKB$.

\begin{itemize}
\item[(K$_{B}$)] Suppose $(x, U, V)\models B(\varphi\rightarrow \psi)$ and $(x, U, V)\models B\varphi$. This means $V\subseteq \br{\varphi\rightarrow \psi}^{U, V} = (U\setminus\br{\varphi}^{U, V})\cup \br{\psi}^{U, V}$ and $V\subseteq \br{\varphi}^{U, V}$, from which we obtain $V\subseteq \br{\psi}^{U, V}$, i.e., $(x, U, V)\models B\psi$.

\item [(sPI)] Suppose $(x, U, V)\models B\varphi$. This means $V\subseteq \br{\varphi}^{U, V}$.
As such, for every $y \in U$ we have $(y,U,V) \models B\phi$, which
implies that $\br{B\varphi}^{U, V}=U$, so $(x, U, V)\models KB\varphi$.


\item [(KB)] Suppose $(x, U, V)\models K\varphi$. This means $\br{\varphi}^{U, V}=U$. As $V\subseteq U$ (by definition of $(x, U, V)$), we obtain $(x, U, V)\models B\varphi$.

\item [(RB)] Suppose $(x, U, V)\models B\varphi$. This means $V\subseteq \br{\varphi}^{U, V}$. Thus, since $V$ is open, we obtain $V\subseteq  \int(\br{\varphi}^{U, V})$. As $\int(\br{\varphi}^{U, V})=\br{\Box\varphi}^{U, V}$, we have $V\subseteq  \br{\Box\varphi}^{U, V}$, i.e., $(x, U, V)\models B\Box\varphi$. \qedhere

\end{itemize}
\end{proof}

Completeness follows from a fairly straightforward canonical model construction, similar to the completeness proof of $\LogKK$ in \cite{bjorndahl}. Roughly speaking, we extend the canonical model in \cite{bjorndahl} in order to be able to prove the truth lemma for the belief modality $B$.

Let $X^c$ be the set of all maximal $\wLogB$-consistent sets of formulas. Define binary relations $\sim$ and $R$ on $X^c$ by
$$x\sim y \ \mbox{iff} \ (\forall\varphi\in \LangKKB)(K\varphi\in x \ \Leftrightarrow \ K\varphi\in y)\footnote{In fact, this is equivalent to  $(\forall\varphi\in\LangKKB)(K\varphi\in x \ \Rightarrow \ \varphi\in y)$, since $K$ is an $\mathsf{S5}$ modality.}$$ and
$$xR y \ \mbox{iff} \ (\forall\varphi\in \LangKKB)(B\varphi\in x \ \Rightarrow \ \varphi\in y).$$
It is not hard to see that $\sim$ is an equivalence relation, hence, it induces equivalence classes on $X^c$. Let $[x]$ denote the equivalence class of $x$ induced by the relation $\sim$ and let $R(x) = \{y\in X^c \ | \ xRy\}$. Define $\widehat{\varphi}=\{y\in X^c \ | \ \varphi\in y\}$, so $x\in\widehat{\varphi}$ iff $\varphi\in x$.

The axioms of $\wLogB$ that relate $K$ and $B$ induce the following important links between $\sim$ and $R$:

\begin{lemma}\label{lemma:mcs}
For any $x, y\in X^c$, the following holds:
\begin{enumerate}
\item \label{lemma:mcs:1} if $x\sim y$ then $(\forall\varphi\in \LangKKB)(B\varphi\in x \ \mbox{iff} \ B\varphi \in y)$;
\item \label{lemma:mcs:2} if $x\sim y$ then $R(x)=R(y)$;
\item \label{lemma:mcs:3} $R(x)\subseteq [x]$;
\item  \label{lemma:mcs:4} either $R(x)\cap R(y)=\emptyset$ or $R(x)=R(y)$.
\end{enumerate}
\end{lemma}

\begin{proof}
Let $x, y\in X^c$.
\begin{enumerate}
\item Suppose $x\sim y$ and let $\varphi\in\LangKKB$ such that $B\varphi\in x$. By (sPI), we have $KB\varphi\in x$. As $x\sim y$, we have  $KB\varphi\in y$. Thus, by (T$_K$), we conclude $B\varphi\in y$. The other direction follows analogously.

\item Suppose $x\sim y$ and take $z\in R(x)$; let $\varphi\in\LangKKB$ be such that $B\varphi\in y$. Since $x\sim y$, by Lemma \ref{lemma:mcs}.\ref{lemma:mcs:1}, we have $B\varphi\in x$. Therefore,  $z\in R(x)$ implies that $\varphi\in z$. This shows that $z\in R(y)$, hence $R(x)\subseteq R(y)$. The reverse inclusion follows similarly.

\item Let $z\in R(x)$ and $\varphi\in\LangKKB$; we will show that $K\varphi\in x$ iff $K\phi\in z$. Suppose $K\varphi\in x$. Then, by (4$_K$), we have $KK\varphi\in x$. This implies, by (KB), that $BK\varphi\in x$. Hence, since $z\in R(x)$, we obtain $K\varphi\in z$. For the converse, suppose $K\phi\not\in x$, i.e., $\neg K\phi\in x$. Then, by (5$_K$), we have $K\neg K\varphi\in x$. Again by (KB), we obtain $B\neg K\varphi\in x$. Thus, since $z\in R(x)$, we obtain $\neg K\phi\in z$, i.e., $K\phi \not\in z$. We therefore conclude that $z\in [x]$, hence $R(x)\subseteq [x]$.
 

\item Suppose $R(x)\cap R(y)\not=\emptyset$. This means there is $z\in X^c$ such that $z\in R(x)$ and $z\in R(y)$. Then, by Lemma \ref{lemma:mcs}.\ref{lemma:mcs:3}, we have $x\sim z$ and $y\sim z$. Thus, by Lemma \ref{lemma:mcs}.\ref{lemma:mcs:2}, $R(x)=R(z)=R(y)$. \qedhere

\end{enumerate}
\end{proof}

Let $\cT^c$ be the topology on $X^{c}$ generated by the collection
$$\cB=\{[x]\cap \widehat{\Box\varphi} \ | \ x\in X^c, \varphi\in\LangKKB\}\cup \{R(x)\cap \widehat{\Box\varphi} \ | \ x\in X^c, \varphi\in\LangKKB\}.$$
It is not hard to prove that $\cB$ is in fact a basis for $\cT^c$. Define the \emph{canonical model} $\cX^c$ to be the tuple $(X^c, \cT^c, v^c)$, where $v^{c}(p) = \widehat{p}$. Observe that since $\widehat{\Box\T}=X^c$, we have $[x], R(x)\in \cT^c$ for all $x\in X^c$; therefore, by Lemma \ref{lemma:mcs}.\ref{lemma:mcs:3}, for each $x\in X^c$ the tuple $(x, [x], R(x))$ is an e-d scenario.

\begin{lemma}[Truth Lemma] \label{truth:lemma}
For every $\varphi\in\LangKKB$ and for each $x\in X^c$, $$\varphi\in x \ \mbox{iff} \ (\cX^c, x, [x], R(x))\models \varphi.$$
\end{lemma}

\begin{proof}
The proof proceeds as usual by induction on the structure of $\phi$; cases for the primitive propositions and the Boolean connectives are elementary and the case  for $K$ is presented in \cite[Theorem 1, p. 16]{bjorndahl}. So assume inductively that the result holds for $\phi$; we must show that it holds also for $\Box\varphi$ and $B\phi$. 

\begin{itemize}
\item [] Case for $\Box\phi$: 

\begin{itemize}
\item[($\Rightarrow$)] Let $\Box\phi\in x$. Then, observe that  $x\in \widehat{\Box\varphi}\cap [x]\subseteq \{y\in[x] \ | \ \phi\in y\}$ (by $x\in[x]$ and (T$_\Box$)). Since $\widehat{\Box\varphi}\cap [x]$ is open, it follows that 

\begin{equation}\label{eqn:int}
x\in\int \{y\in[x] \ | \ \phi\in y\}
\end{equation} By (IH), we also have 
\begin{align*}
\{y\in[x] \ | \ \phi\in y\} & = \{y\in[x] \ | \ (y, [y], R(y))\models\phi\} \notag\\
&  = \{y\in[x] \ | \ (y, [x], R(x))\models \phi\}  \tag{Lemma  \ref{lemma:mcs}}\\
& = \br{\phi}^{[x], R(x)} \notag
\end{align*}
Therefore, by (\ref{eqn:int}), we conclude that  $x\in \int (\br{\phi}^{[x], R(x)})$, i.e., $(x, [x], R(x))\models \Box\phi$.

\item[($\Leftarrow$)] Now suppose  that $(x, [x], R(x))\models \Box\phi$. This means, by the semantics, that $x\in\int (\br{\phi}^{[x], R(x)})$. As above, this is equivalent to $x\in\int \{y\in[x] \ | \ \phi\in y\}$. It then follows that there exists $U\in \cB$ such that $$x\in U\subseteq \{y\in[x] \ | \ \phi\in y\}.$$  
By definition of $\cB$, the basic open neighbourhood $U$ can be of the following forms:
\begin{enumerate}
\item $U=[z]\cap\widehat{\Box\psi}$, for some $z\in X^c$ and $\psi\in\LangKKB$;
\item $U=R(z)\cap\widehat{\Box\psi}$, for some $z\in X^c$ and $\psi\in\LangKKB$.
\end{enumerate}

However,  since $x\in U$, we can simply replace the above cases by:
\begin{enumerate}
\item $U=[x]\cap\widehat{\Box\psi}$, for some $\psi\in\LangKKB$;
\item $U=R(x)\cap \widehat{\Box\psi}$, for some $\psi\in\LangKKB$, respectively.
\end{enumerate}
The case for $U=[x]\cap\widehat{\Box\psi}$ follows similarly as in \cite[Theorem 1, p. 16]{bjorndahl}. 
We here only prove the case for $U=R(x)\cap \widehat{\Box\psi}$. We therefore have 
\begin{equation}\label{eqn:case2}
x\in R(x)\cap \widehat{\Box\psi}\subseteq \{y\in[x] \ | \ \phi\in y\}
\end{equation}

This means that for every $y\in R(x)$, if $\Box\psi\in y$ then $\varphi\in y$. Thus, we obtain that $\{\chi \ | \ B\chi\in x\}\cup \{\neg(\Box\psi\rightarrow \varphi)\}$ is an inconsistent set. Otherwise, it could be extended to a maximally consistent set $y$ such that $y\in R(x)$, $\Box\psi\in y$ and $\phi\not\in y$, contradicting (\ref{eqn:case2}). Thus, there exists a finite subset $\Gamma\subseteq \{\chi \ | \ B\chi\in x\}$ such that  $$\vdash  \bigwedge_{\chi\in\Gamma}\chi \rightarrow (\Box\psi\rightarrow\varphi),$$ 
which implies by $\mathsf{S4}_\Box$ that 

$$\vdash  \bigwedge_{\chi\in\Gamma}\Box\chi \rightarrow \Box(\Box\psi\rightarrow\varphi).$$ 

Observe that, since $x\in R(x)$, we have $\{\chi \ | \ B\chi\in x\}\subseteq x$. Moreover, by (RB), we also obtain that $\{\Box\chi \ | \ B\chi\in x\}\subseteq\{\chi \ | \ B\chi\in x\}\subseteq x$. We therefore obtain that $\bigwedge_{\chi\in\Gamma}\Box\chi \in x$, thus, that $\Box(\Box\psi\rightarrow\varphi)\in x$. Then, by $\mathsf{S4}_\Box$, we have $\Box\psi\rightarrow \Box\varphi \in x$. As $x\in\widehat{\Box\psi}$, we conclude $\Box\varphi\in x$.

\end{itemize}

\item []Case for $B\phi$:
\begin{itemize}
\item[($\Rightarrow$)] Let $B\varphi\in x$. Then, by defn. of $R$, we have $\varphi\in y$ for all $y\in R(x)$. Then, by (IH), we obtain $(\forall y\in R(x))(y, [y], R(y))\models \varphi$. By Lemma \ref{lemma:mcs}.\ref{lemma:mcs:3}, $y\in R(x)$ implies $x\sim y$. Thus, as $[y]=[x]$ and $R(x)=R(y)$ (Lemma \ref{lemma:mcs}.\ref{lemma:mcs:2}), we obtain, $(\forall y\in R(x))(y, [x], R(x))\models \varphi$. This means, $R(x)\subseteq \br{\varphi}^{[x], R(x)}$, thus, $(x, [x], R(x))\models B\varphi$.

\item[($\Leftarrow$)]  Let $B\varphi\not\in x$. This implies, $\{\psi \ | \ B\psi\in x\}\cup\{\neg\varphi\}$ is consistent. Otherwise, there exists a finite subset $\Gamma \subseteq \{\psi \ | \ B\psi\in x\}$ such that $$\vdash \bigwedge_{\chi\in\Gamma}\chi \rightarrow\varphi.$$ Then, by normality of $B$, $$\vdash  \bigwedge_{\chi\in\Gamma}B\chi \rightarrow B\varphi.$$
Since $B\chi\in x$ for all $\chi\in \Gamma$, we have $B\varphi\in x$, contradicting the fact that $x$ is a consistent set.

Then, by Lindenbaum's Lemma, $\{\psi \ | \ B\psi\in x\}\cup\{\neg\varphi\}$ can be extended to a maximally consistent set $y$. $\neg\varphi\in y$ means that $\varphi\not\in y$. Thus, by IH, $(y, [y], R(y))\not\models\varphi$.  Since $\{\psi \ | \ B\psi\in x\}\subseteq y$, we have $y\in R(x)$. This means, by Lemma \ref{lemma:mcs}.\ref{lemma:mcs:3} and Lemma \ref{lemma:mcs}.\ref{lemma:mcs:2}, $[y]=[x]$ and $R(x)=R(y)$. Therefore,  as $[y]=[x]$ and $R(x)=R(y)$, we have $(y, [x], R(x))\not\models\varphi$.  Thus, $y\in R(x)$ but $y\not\in \br{\varphi}^{[x], R(x)}$ implying that $(x, [x], R(x))\not \models B\varphi$. \qedhere
\end{itemize}

\end{itemize}
\end{proof}

Moreover, Lemma \ref{lemma:mcs}.\ref{lemma:mcs:3} guarantees that the evaluation tuple $(x, [x], R(x))$ is of desired kind (more precisely, the construction of the canonical  model guarantees that $R(x)\in \cT^c$, for all $x\in X^c$ and the aforementioned lemma makes sure that $R(x)\subseteq [x]$.) 

\begin{corollary}\label{cor:comp:wEL}
$\wLogB$ is a complete axiomatization of $\LangKKB$ with respect to the class of all topological subset spaces under e-d semantics.
\end{corollary}

\begin{proof} 
Let $\varphi\in\LangKKB$ such that $\not \vdash_{\wLogB} \varphi$. Then, $\{\neg \varphi\}$ is consistent and can be extended to a maximally consistent set $x\in X^c$. Then, by Lemma \ref{truth:lemma}, we obtain that $(\cX^c, x, [x], R(x))\not\models\varphi$.
\end{proof}

\subsection{Consistent belief and weak factivity}


\begin{proposition}
$\wLogB + \textrm{(D$_B$)}$ is a sound axiomatization of $\LangKKB$ with respect to the class of all topological subset spaces under e-d semantics for consistent e-d scenarios. 
\end{proposition}
\begin{proof}
The validity of the axioms of $\wLogB$ follows as in Theorem \ref{thm:sound:wEL}, we only need to prove the validity of (D$_B$) for consistent e-d scenarios.  Let $\cX=\Model$ be a topological subset model, $(x, U, V)$ a consistent e-d scenario, and $\varphi \in \LangKKB$.

\begin{itemize}
\item[(D$_{B}$)] Suppose $(x, U, V)\models B\varphi$.  This means $V\subseteq \br{\varphi}^{U, V}$. Then, since $V\not =\emptyset$, we have $V\not\subseteq  U\setminus \br{\varphi}^{U, V}$, therefore, $(x, U, V)\models \neg B\neg \varphi$. \qedhere

\end{itemize}

\end{proof}

The completeness proof follows similarly to the completeness proof of $\wLogB$ and the only difference lies in the requirement of a \emph{consistent} e-d scenario in the corresponding Truth Lemma. We therefore only need to prove that the canonical epistemic scenario $(x, [x], R(x))$ of the system $\wLogB + \textrm{(D$_B$)}$ is consistent, i.e., we need to show that $R(x)\not =\emptyset$ for any maximally consistent sets of $\wLogB + \textrm{(D$_B$)}$. The canonical model for the system $\wLogB + \textrm{(D$_B$)}$ is constructed as usual, exactly the same way as the one for $\wLogB$.

\begin{lemma}\label{lemma:serial}
The relation $R$ of the canonical model $\cX^c=(X^c, \cT^c, \nu^c)$  for the system $\wLogB + \textrm{(D$_B$)}$ is serial.
\end{lemma}
\begin{proof}
For any $x\in X^c$, the set $\{\psi \ | \ B\psi\in x\}$ is consistent. Otherwise, there is a finite subset $\Gamma \subseteq \{\psi \ | \ B\psi\in x\}$ and $\varphi\in \{\psi \ | \ B\psi\in x\}$  such that $$\vdash \bigwedge_{\chi\in\Gamma}\chi \rightarrow \neg\varphi.$$ Then, by normality of $B$, $$\vdash  \bigwedge_{\chi\in\Gamma}B\chi \rightarrow B\neg\varphi.$$ Since $B\chi\in x$ for all $\chi\in \Gamma$, we have $B\neg \varphi\in x$. On the other hand, since $B\varphi\in x$ and $\vdash B\varphi\rightarrow \neg B\neg \varphi$ ((D$_B$)-axiom), we obtain $\neg B\neg\varphi\in x$, contradicting the fact that $x$ a maximally consistent set. Therefore, $\{\psi \ | \ B\psi\in x\}$ can be extended to a maximally consistent set $y$ and, since $\{\psi \ | \ B\psi\in x\}\subseteq y$, we have $xRy$.
\end{proof}

\begin{corollary}\label{cor:serial}
Let  $\cX^c=(X^c, \cT^c, \nu^c)$  be the canonical model of the system $\wLogB + \textrm{(D$_B$)}$. Then, for all $x\in X^c$, we have $R(x)\not =\emptyset$. 
\end{corollary}

\begin{proposition}
$\wLogB + \textrm{(D$_B$)}$ is a complete axiomatization of $\LangKKB$ with respect to the class of all topological subset spaces under e-d semantics for consistent e-d scenarios. 
\end{proposition}
\begin{proof}
Follows from Corollary \ref{cor:serial} similarly to the proof  of Corollary \ref{cor:comp:wEL}.
\end{proof}


\begin{proposition}
$\wLogB + \textrm{(wF)}$ is a sound axiomatization of $\LangKKB$ with respect to the class of all topological subset spaces under e-d semantics for dense e-d scenarios. 
\end{proposition}
\begin{proof}
The validity of the axioms of $\wLogB$ follows as in Theorem \ref{thm:sound:wEL}, we only need to prove the validity of (wF) for dense e-d scenarios.  Let $\cX=\Model$ be a topological subset model, $(x, U, V)$ a dense e-d scenario, and $\varphi \in \LangKKB$.

\begin{itemize}
\item[(wF)] Suppose $(x, U, V)\models B\varphi$.  This means $V\subseteq \br{\varphi}^{U, V}$. Then, since $x\in U\subseteq cl(V)$, we obtain $x\in U\subseteq cl(\br{\varphi}^{U, V})$, meaning that $(x, U, V)\models \Diamond\varphi$. \qedhere
\end{itemize}

\end{proof}

The completeness result for  $\wLogB + \textrm{(wF)}$ follows similarly to the above case: the only key step we need to show is that the canonical epistemic scenario $(x, [x], R(x))$ of the system $\wLogB + \textrm{(D$_B$)}$ is dense.

\begin{lemma}
Let  $\cX^c=(X^c, \cT^c, \nu^c)$  be the canonical model of the system $\wLogB + \textrm{(wF)}$. Then, for all $x\in X^c$, we have that $R(x)$ is dense in $[x]$, i.e., that $[x]\subseteq \cl(R(x))$.  
\end{lemma}

\begin{proof}
Let $x\in X^c$ and $y\in [x]$. We want to show that $y\in \cl(R(x))$, i.e., for all $U\in \cB$ with $y\in U$, we should show that $U\cap R(x)\not =\emptyset$ holds.  Let $U\in\cB$ such that $y\in U$. By definition of $\cB$, the basic open neighbourhood $U$ can be of the following forms:
\begin{enumerate}
\item $U=R(z)\cap\widehat{\Box\varphi}$, for some $z\in X^c$ and $\varphi\in\LangKKB$;
\item $U=[z]\cap\widehat{\Box\varphi}$, for some $z\in X^c$ and $\varphi\in\LangKKB$.
\end{enumerate}

However,  since $y\in[x]$ and $y\in U$, we can simply replace the above cases by:
\begin{enumerate}
\item $U=R(x)\cap \widehat{\Box\varphi}$, for some $\varphi\in\LangKKB$;
\item $U=[x]\cap\widehat{\Box\varphi}$, for some $\varphi\in\LangKKB$, respectively.
\end{enumerate}

If (1) is the case, the result follows trivially since $y\in U=R(x)\cap  \widehat{\Box\varphi}=U\cap R(x)$.

If (2) is the case, $U\cap R(x)=([x]\cap\widehat{\Box\varphi})\cap R(x)=\widehat{\Box\varphi}\cap R(x)$ (by Lemma \ref{lemma:mcs}.\ref{lemma:mcs:3}). Therefore, we need to show that $R(x)\cap\widehat{\Box\varphi}\not=\emptyset$:


Consider the set $\{\psi \ | \ B\psi\in y\}\cup\{\Box\varphi\}$. This set is consistent, otherwise, there exists a finite subset $\Gamma \subseteq \{\psi \ | \ B\psi\in y\}$ such that $$\vdash \bigwedge_{\chi\in\Gamma}\chi \rightarrow\Diamond\neg \varphi.$$ Then, by normality of $B$, $$\vdash  \bigwedge_{\chi\in\Gamma}B\chi \rightarrow B\Diamond\neg\varphi.$$ We also have

\begin{table}[htp]
\begin{tabularx}{.75\textwidth}{>{\hsize=.1\hsize}X>{\hsize=1.5\hsize}X>{\hsize=1.0\hsize}X}
1. & $\vdash B\Diamond\neg\varphi \rightarrow \Diamond \Diamond\neg\varphi$ & (wF)\\
2. & $\vdash\Diamond \Diamond\neg\varphi\rightarrow \Diamond\neg\varphi$ & (4$_\Box$)\\
3. & $\vdash B\Diamond\neg\varphi \rightarrow \Diamond\neg\varphi$ & CPL: 1, 2\\

\end{tabularx}
\end{table}

Hence, 

$$\vdash  \bigwedge_{\chi\in\Gamma}B\chi \rightarrow \Diamond\neg\varphi.$$

Therefore, since $B\chi\in y$ for all $\chi\in \Gamma$, we have $\Diamond\neg \varphi\in y$. But we know that $\Box \varphi (:=\neg\Diamond\neg\varphi)\in y$ (since $y\in U=[x]\cap\widehat{\Box\varphi}$), contradicting the maximal consistency of $y$. Therefore, $\{\psi \ | \ B\psi\in y\}\cup\{\Box\varphi\}$ is consistent.  Moreover, by Lindenbaum's Lemma, it can be extended to a maximally consistent set $z$. Therefore, as $\{\psi \ | \ B\psi\in y\}\subseteq z$, we have $z\in R(y)=R(x)$ (since $y\in [x]$,  we have $R(x)=R(y)$ (by Lemma \ref{lemma:mcs}.\ref{lemma:mcs:2})). Moreover, $\Box\varphi\in z$, i.e., $z\in \widehat{\Box\varphi}$. We therefore conclude that $z\in\widehat{\Box\varphi}\cap R(x)\not =\emptyset$.
\end{proof}

\begin{corollary}\label{cor:dense}
Let  $\cX^c=(X^c, \cT^c, \nu^c)$  be the canonical model of the system $\wLogB + \textrm{(wF)}$. Then, for all $x\in X^c$, the e-d scenario $(x, [x], R(x))$ is dense.
\end{corollary}

\begin{proposition}
$\wLogB + \textrm{(wF)}$ is a complete axiomatization of $\LangKKB$ with respect to the class of all topological subset spaces under e-d semantics for dense e-d scenarios. 
\end{proposition}
\begin{proof}
Follows from Corollary \ref{cor:dense} similarly to the proof  of Corollary \ref{cor:comp:wEL}.
\end{proof}

}

%
%
%
%
%
%
%
%
%
%
\fullv{
\section{From topological spaces to belief frames}

While the one-way construction from belief frames to topological subset models presented in Section ?? is sufficient to prove the completeness of $\mathsf{KD45}_B$ with respect to the topological subset models, it also raises the question of whether it is possible to build a belief frame from a topological space. In this section, we present a reverse construction from a subclass of topological spaces to belief frames and prove a modal equivalence result for the language $\LangB$, analogous to Lemma \ref{lem:bru}.


\ayComment{It is well-known that,  for an arbitrary topological spaces $(X, \cT)$, the relation $R_\cT$ defined by 
\begin{equation}
xR_\cT y \ \mbox{iff} \ x\in \cl(\{y\}) \label{eqn.relation}
\end{equation} is reflexive and transitive. $(X, R_\cT)$ constructed from $(X, \cT)$ in the above defined way therefore constitutes a reflexive and transitive relational frame. Moreover, the connection is even stronger between the class of Alexandroff spaces in particular and the reflexive and transitive relational frames.

\begin{proposition}[\cite{vanbenthemSPACE}]\label{prop:alex}
$\cT=\cT_{R_\cT}$ iff $(X, \tau)$ is an Alexandroff space.
\end{proposition}

Therefore, Proposition  \ref{prop:alex}  implies that the reflexive and transitive relational frames corresponds exactly to the class of Alexandroff spaces. This correspondence further leads to modal equivalence between reflexive and transitive relational models and topological models based on Alexandroff spaces with respect to the McKinsey-Tarski style interior based topological semantics \cite{tarski-mckinsey} (see, e.g., \cite{vanbenthemSPACE,BvdH13,moss} for further discussion and proofs.).

In this section, we build a similar connection between belief frames and a subclass of topological spaces. The construction from belief frames to  topological spaces is presented in Section ??.  We here focus on the reverse direction and prove results analogous to Lemma \ref{lem:bru} and Proposition \ref{prop:alex}.} 

Given a topological space $(X, \cT)$ and $x\in X$, the set $\mathcal{N}_x=\{U\in\cT \ | \ x\in U\}$ is defined to be the set of all open neighbourhoods of $x$ and we define $f: X\rightarrow \mathcal{P}(\cT)$ such that $$f(x)=\{C\in\cT \ | \  C\not =\emptyset, (\forall  U\in \mathcal{N}_x) (C\subseteq U) \}.$$ 

In other words, $f(x)$ is the set of all non-empty lower bounds of $\cN_x$ in $\cT$ with respect to the inclusion relation $\subseteq$. The topological spaces corresponding to belief frames are those having a unique non-empty lower bound for each $\cN_x$:

\aybuke{change the name of these spaces, maybe belief spaces?}

\begin{definition}[Brush Space]
A \emph{brush space} $(X, \tau)$ is a topological space such that for every $x\in X$, $f(x)$ has exactly one element, i.e., $|f(x)|=1$.  
\end{definition}

For any brush frame $(X, \cT)$ and $x\in X$, we let $C_x$ to denote the unique element of $f(x)$. Moreover, given a brush space, define $$|x|=\{y\in X \ | \ f(x)=f(y)\}.$$ It is not hard to see that the sets $|x|$ partitions $X$.  We can now construct  a belief frame $(X, R_\cT)$, from a brush space $(X, \cT)$ by defining $R_\cT$ as $$R_\tau:=\bigcup_{x\in X} (|x|\times C_x).$$
In fact, each disjoint component $(|x|, |x|\times C_x)$ of $(X, R_\cT)$  is a brush frame. Therefore, each $C_x$  plays the role of the final cluster in the corresponding brush frame.


\begin{lemma}\label{lemma.helpful}
Let $(X, \cT)$ be a brush space. For each $x\in X$ and $A\subseteq X$: 
\begin{enumerate}
\item \label{lemma.helpful2} $C_x\subseteq |x|$ and  $|x|= \cl(C_x)$,
\item \label{lemma.helpful3} $\cl(A) \supseteq |x|$ if and only if $A \cap C_{x} \neq \emptyset$.

\item \label{lemma.helpful4}
$\int(A) \cap C_{x} \neq \emptyset$ if and only if $A \supseteq C_{x}$;
\end{enumerate}
\end{lemma}

\begin{proof}
Let $(X, \tau)$ be  a brush space, $x\in X$ and $A\subseteq X$. 
\begin{enumerate}

\item Let $(X, \cT)$ be a brush space, $x\in X$ and suppose $C_x\not\subseteq |x|$. This implies that there is $y\in C_x$ such that $y\not \in |x|$. This means  $C_y\not =C_x$. As  $y\in C_x\in\cT$, $C_x\in\mathcal{N}_y$, and therefore $C_y\subsetneq C_x$. Thus, $C_y\in f(x)$ and, since $C_y\not=C_x$,  we have $| f(x)|>1$, contradicting $(X, \cT)$ being a brush space.
 
 Suppose $y\in |x|$.  This means  $f(y)=f(x)=\{C_x\}$.  Thus, for all $U\in\mathcal{N}_y$, $C_x\subseteq U$, and since $C_x\not =\emptyset$, we obtain that for all $U\in\mathcal{N}_y$, $C_x\cap U=C_x\not=\emptyset$, i.e., $y\in  \cl(C_x)$. For the opposite direction, suppose $y\in \cl(C_x)$ implying that for all $U\in\cN_y$, $U\cap C_x\not =\emptyset$. Then, since $(X, \cT)$ is a brush space,  $U\cap C_x=C_x$ for all $U\in\cN_y$. Otherwise, $U\cap C_x\in f(x)$ contradicting $(X, \cT)$ being a brush space. We then have  $C_x\subseteq U$ for all $U\in\cN_y$, therefore,  $f(y)=\{C_x\}=f(x)$, i.e., $y\in |x|$.

\item Suppose $\cl(A) \supseteq |x|$. Then, by Lemma \ref{lemma.helpful}.\ref{lemma.helpful2}, we obtain $\cl(A) \supseteq C_x$. Then, as $C_x\in\cT$, we have $A\cap C_x\not =\emptyset$ by definition of $\cl$. For the opposite direction, suppose $A\cap C_x\not =\emptyset$ and let $y\in|x|$. This mean, by definition of $|x|$, that $C_x\subseteq U$ for all $U\in \cN_y$.  Therefore, by the assumption, we obtain $A\cap U\not =\emptyset$ for all $U\in\cN_y$, i.e., $y\in \cl(A)$. 

\item Suppose $\int(A) \cap C_{x}\not=\emptyset$. Then, since $(X, \cT)$ is a brush space, $\int(A) \cap C_{x}\in\cT$ and $\int(A) \cap C_{x}\subseteq C_x$, we obtain  $\int(A) \cap C_{x}=C_x$. Therefore, $C_x\subseteq \int(A)\subseteq A$. For the other direction, suppose $A \supseteq C_{x}$. Since $C_x\in\cT$, we have $\int(A) \supseteq C_{x}$ implying that  $\int(A) \cap C_{x}=C_x\not =\emptyset$.

\end{enumerate}
\end{proof}

\ayComment{
\begin{lemma}\label{brush}
For any brush space $(X, \tau)$ and any $C, C'\in\bigcup_{x\in X}f(x)$ with $C\not =C'$, we have $C\cap C'=\emptyset$.
\end{lemma}

\begin{proof}
Let $(X, \tau)$ be a brush space and $C, C'\in\bigcup_{x\in X}f(x)$ such that  $C\not =C'$. We then have $C\in f(x)$ and $C'\in f(y)$ for some $x, y\in X$ with $x\not =y$. Suppose also, toward contradiction, that $C\cap C'\not =\emptyset$. This implies, as $C, C'\in \tau$, the set $C\cap C'\in \tau$. Moreover, as $C\cap C'\subseteq C$ and $C\cap C'\subseteq C'$, we obtain $C\cap C' \in f(x)$ and $C\cap C'\in f(y)$. Thus, since $|f(x)|=|f(y)|=1$, we have $C\cap C'=C$ and $C\cap C'=C'$ implying that $C=C'$, which contradicts the assumption that   $C\not =C'$.
\end{proof} }

 



\ayComment{\begin{lemma}\label{lemma.18}
For any brush space $(X, \tau)$ and any $x\in X$, we have $C_x\subseteq |x|$.
\end{lemma}
\begin{proof}
Let $(X, \tau)$ be a brush space, $x\in X$ and suppose $C_x\not\subseteq |x|$. This implies that there is $y\in C_x$ such that $y\not \in |x|$. This means that $C_y\not =C_x$. As  $y\in C_x\in\tau$, $C_x\in\mathcal{N}_y$, and therefore $C_y\subsetneq C_x$. However, this means $C_y\in f(x)$ thus, since $C_y\not=C_x$,  $| f(x)|>1$, contradicting $(X, \tau)$ being a brush space.

\end{proof}

\begin{proposition}
For any brush space $(X, \cT)$ and any $x\in X$, the relational frame $(|x|, |x|\times C_x)$ is a brush.  Moreover, $(X, R_\cT)=(X,  \bigcup_{x\in X}(|x|\times C_x))$ is a belief frame and the disjoint union of all such brushes. 
\end{proposition}

\begin{proof}
Let $(X, \tau)$ be a brush space and $x\in X$.  By Lemma \ref{lemma.helpful}.\ref{lemma.helpful2},  we have that $C_x\subseteq |x|$, thus  $C_x$ is the unique final cluster of  $(|x|, |x|\times C_x)$. 
\end{proof}
}


\begin{proposition}
Let $\cX=(X, \cT, \nu)$ be a topological subset model based on a brush space. Then for every formula $\phi \in \LangB$, for every $(x, U)\in ES(\cX)$  with $U\subseteq |x|$  we have
$$(\cX, x, U)\models \varphi \ \mbox{iff} \ M_\cX, x\models\varphi,$$ where $M_\cX=(X, R_\tau, \nu)$.

\end{proposition}

\begin{proof}
The proof follows by induction on the structure of $\phi$; cases for the primitive propositions and the Boolean connectives are elementary. So assume inductively that the result holds for $\phi$; we must show that it holds also for $B\phi$.

\begin{align}
(\cX, x, U) \models B\psi  & \ \mbox{iff} \ U\subseteq\cl(\int(\br{\psi}^U))\notag\\
& \ \mbox{iff} \ |x| \subseteq\cl(\int(\br{\psi}^U))\tag{Lemma \ref{lemma.helpful}.\ref{lemma.helpful2}}\\
& \ \mbox{iff} \ \int(\br{\psi}^U)\cap C_x\not= \emptyset\tag{Lemma \ref{lemma.helpful}.\ref{lemma.helpful3}}\\
&  \ \mbox{iff} \ C_x\subseteq \br{\psi}^U \tag{Lemma \ref{lemma.helpful}.\ref{lemma.helpful4}} \\
&  \ \mbox{iff} \ C_x\subseteq \|\psi\|_{M_\cX} \tag{inductive hypothesis}\\
&  \ \mbox{iff} \ R(x) \subseteq \|\psi\|_{M_\cX}  \tag{since $C_x=R(x)$}\\
&  \ \mbox{iff} \  \ M_\cX, x\models B\psi\notag
\end{align}

\end{proof}
}

\subsection{Confident belief}

\begin{lemma} \label{lem:top}
Let $X$ be a topological space and $A$ an open subset of $X$. Then for any $B \subseteq X$, we have $A \subseteq^{*} B$ iff $A \subseteq \cl(\int(B))$.
\end{lemma}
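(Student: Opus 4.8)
The plan is to prove both implications by reducing everything to two standard facts about topological spaces, exploiting crucially that $A$ is open.

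First I record the duality identity $X \setminus \cl(\int(B)) = \int(\cl(X \setminus B))$, which follows from $\int(\cl(X \setminus B)) = \int(X \setminus \int(B)) = X \setminus \cl(\int(B))$. Consequently the right-hand condition $A \subseteq \cl(\int(B))$ is equivalent to $A \cap \int(\cl(X \setminus B)) = \emptyset$. The second fact I use is that whenever $A$ is open, $A \cap \cl(S) \subseteq \cl(A \cap S)$ for every $S \subseteq X$ (if $x \in A \cap \cl(S)$ then any open neighborhood $U$ of $x$ has $U \cap A$ an open neighborhood meeting $S$, so $U$ meets $A \cap S$); applied to $S = X \setminus B$ this yields $A \cap \cl(X \setminus B) \subseteq \cl(A \setminus B)$.

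For the forward direction, assume $A \setminus B$ is nowhere dense, i.e. $\int(\cl(A \setminus B)) = \emptyset$. The set $A \cap \int(\cl(X \setminus B))$ is open (an intersection of two open sets) and, by the displayed inclusion, is contained in $A \cap \cl(X \setminus B) \subseteq \cl(A \setminus B)$. Being an open subset of $\cl(A \setminus B)$, it is contained in $\int(\cl(A \setminus B)) = \emptyset$, so $A \cap \int(\cl(X \setminus B)) = \emptyset$; by the duality identity this is exactly $A \subseteq \cl(\int(B))$. For the converse, assume $A \cap \int(\cl(X \setminus B)) = \emptyset$ and set $W = \int(\cl(A \setminus B))$, an open set; I claim $W = \emptyset$. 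Since $A \setminus B \subseteq X \setminus B$ we have $W \subseteq \int(\cl(X \setminus B))$, and therefore $W \cap A \subseteq \int(\cl(X \setminus B)) \cap A = \emptyset$. On the other hand, $W$ is an open subset of $\cl(A \setminus B)$, so if $W$ were nonempty it would meet $A \setminus B$ (any nonempty open subset of the closure of a set meets that set); but $A \setminus B \subseteq A$, contradicting $W \cap A = \emptyset$. Hence $W = \emptyset$, i.e. $A \setminus B$ is nowhere dense, which is $A \subseteq^{*} B$.

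I expect the main obstacle to be the correct and careful use of the openness of $A$: the inclusion $A \cap \cl(S) \subseteq \cl(A \cap S)$, valid only because $A$ is open, is what drives the forward direction, and the principle that a nonempty open subset of $\cl(T)$ must intersect $T$ is what drives the converse. Everything else is bookkeeping with the interior/closure duality.
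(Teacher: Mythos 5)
Your proof is correct: the duality identity, the openness lemma $A \cap \cl(S) \subseteq \cl(A \cap S)$, and both set-algebraic computations all check out. It does, however, take a different route from the paper's. The paper proves both implications by contraposition, chasing points and neighbourhoods: in one direction it shows that if $A \not\subseteq \cl(\int(B))$ then the nonempty open set $U \cap A$ (for a suitable open $U$ missing $\int(B)$) lies inside $\cl(A \setminus B)$ — the inner neighbourhood argument there is essentially an inline re-proof of the special case $S = X \setminus B$ of your openness lemma — and in the other direction it shows that a nonempty open $U \subseteq \cl(A \setminus B)$ forces a point of $A$ outside $\cl(\int(B))$. You instead argue both directions directly, factoring everything through two standard, reusable facts: the duality $X \setminus \cl(\int(B)) = \int(\cl(X \setminus B))$, which converts the right-hand condition into the disjointness statement $A \cap \int(\cl(X \setminus B)) = \emptyset$, and the openness lemma, plus the observation that a nonempty open subset of $\cl(T)$ must meet $T$ (which the paper also uses). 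Your organization buys modularity and symmetry — both sides of the equivalence become assertions that a certain open set is empty, and the two directions mirror each other cleanly — at the mild cost of invoking the interior/closure duality; the paper's version is self-contained point-set reasoning, but its two contrapositive arguments repeat similar neighbourhood work that your lemma isolates once.
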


\begin{proof}
First suppose that $A \not\subseteq \cl(\int(B))$. Then there is some $x \in A$ and some open set $U$ with $x \in U$ and $U \cap \int(B) = \emptyset$. Since $A$ is open, so is $U \cap A$. In fact, $U \cap A \subseteq \cl(A \mysetminus B)$; to see this, take any $y \in U \cap A$ and any open $V$ containing $y$ and observe that if $V \cap (A \mysetminus B) = \emptyset$, then it follows that $V \cap A \subseteq B$, and therefore $y \in V \cap U \cap A \subseteq \int(B)$, so $y \in U \cap \int(B)$, a contradiction. We have therefore shown that $\int(\cl(A \mysetminus B)) \neq \emptyset$, so $A \not\subseteq^{*} B$.

Conversely, suppose that $A \not\subseteq^{*} B$. Then there is some nonempty open set $U$ with $U \subseteq \cl(A \mysetminus B)$. Note that this implies that $U \cap (A \mysetminus B) \neq \emptyset$, so in particular there is some $x \in U \cap A$. Observe that $(A \cap \int(B)) \cap (A \mysetminus B) = \emptyset$; as such, $(A \cap \int(B)) \cap \cl(A \mysetminus B) = \emptyset$, so we must have $U \cap A \cap \int(B) = \emptyset$. It then follows that $(U \cap A) \cap \cl(\int(B)) = \emptyset$; this shows that $x \notin \cl(\int(B))$, so since $x \in A$, we have $A \not\subseteq \cl(\int(B))$.
\end{proof}

Let $\alpha: \LangKKB \to \LangKKB$ be the map that replaces every occurence of $B$ with $B\Diamond\Box$.

\begin{lemma} \label{lem:eq1}
For all topological subset models $\X$ and every e-d scenario $(x,U,V)$ therein, we have
$$(\X,x,U,V) \amods \phi \textrm{ iff } (\X,x,U,V) \models \alpha(\phi).$$
\end{lemma}

\begin{proof}
This follows from Lemma \ref{lem:top} using structural induction on $\phi$.
\end{proof}

\begin{lemma} \label{lem:eq2}
For all $\phi \in \LangKKB$, if $\proves_{\wLogB} \alpha(\phi)$, then $\proves_{\wLogB + \emph{\textrm{(CB)}}} \phi$.
\end{lemma}

\begin{proof}
This follows by structural induction on $\phi$ using the easy fact that $\proves_{\wLogB + \emph{\textrm{(CB)}}} B\phi \liff B\Diamond\Box \phi$.
\end{proof}

\begin{theorem}
$\wLogB + \emph{\textrm{(CB)}}$ is a complete axiomatization of $\LangKKB$ with respect to the class of all topological subset spaces under e-d semantics using the semantics given above: for all formulas $\phi \in \LangKKB$, if $\amods \phi$, then $\proves_{\wLogB + \emph{\textrm{(CB)}}} \phi$.
\end{theorem}

\begin{proof}
Suppose that $\amods \phi$. Then by Lemma \ref{lem:eq1} we know that $\models \alpha(\phi)$. By Corollary \ref{cor:comp:wEL}, then, we can deduce that $\proves_{\wLogB} \alpha(\phi)$, and so by Lemma \ref{lem:eq2} we obtain $\proves_{\wLogB + \textrm{(CB)}} \phi$, as desired.
\end{proof}

\end{document}